\documentclass[twoside,11pt]{article}

%

%
%
%

\usepackage[preprint]{jmlr2e}

\usepackage{paralist, amsmath, amssymb, color, graphicx, hyperref, algorithm, algpseudocode, float, floatflt, cite, mathtools, bm}
\usepackage{tikz, pgfplots}
\pgfmathdeclarefunction{gauss}{2}{%
  \pgfmathparse{1/(#2*sqrt(2*pi))*exp(-((x-#1)^2)/(2*#2^2))}%
}
\usepackage{enumitem, bigstrut}
\usepackage{booktabs, multirow, threeparttable}
\usepackage{setspace}
\usepackage{mathabx}
\usepackage{thmtools, thm-restate, cleveref}
\usepackage{wrapfig}
\usepackage{xcolor}

\pgfplotsset{compat=1.18}


\newcommand{\ex}{\mathbb{E}}
\newcommand{\var}{\text{Var}}
\newcommand{\cov}{\text{Cov}}
\newcommand{\sdp}{{s_{dp}}}


\usepackage{lastpage}


\ShortHeadings{Particle Filter for Bayesian Inference on Privatized Data}{Chen, Sanghi and Awan}
\firstpageno{1}
\raggedbottom

\begin{document}

\title{Particle Filter for Bayesian Inference on Privatized Data}

\author{\name Yu-Wei Chen \email chen4357@purdue.edu \\
       \addr Department of Statistics\\
       Purdue University \\
       West Lafayette, IN 47907, USA
        \AND
       \name Pranav Sanghi \email psanghi04@gmail.com  \\
       \addr Department of Computer Science\\
       Purdue University \\
       West Lafayette, IN 47907, USA
       \AND
       \name Jordan Awan \email jaa557@pitt.edu  \\
       \addr Department of Statistics\\
       University of Pittsburgh \\
       Pittsburgh, PA 15260, USA}

\editor{}

\maketitle

\begin{abstract}
    Differential Privacy (DP) is a probabilistic framework that protects privacy while preserving data utility. To protect the privacy of the individuals in the dataset, DP requires adding a precise amount of noise to a statistic of interest; however, this noise addition alters the resulting sampling distribution, making statistical inference challenging. One of the main DP goals in Bayesian analysis is to make statistical inference based on the private posterior distribution. While existing methods have strengths in specific conditions, they can be limited by poor mixing, strict assumptions, or low acceptance rates. We propose a novel particle filtering algorithm, which features (i) consistent estimates, (ii) Monte Carlo error estimates and asymptotic confidence intervals, (iii) computational efficiency, and (iv) accommodation to a wide variety of priors, models, and privacy mechanisms with minimal assumptions. We empirically evaluate our algorithm through a variety of simulation settings as well as an application to a 2021 Canadian census dataset, demonstrating the efficacy and adaptability of the proposed sampler.
\end{abstract}

\begin{keywords}
  Differential Privacy (DP), Sequential Monte Carlo (SMC), Approximate Bayesian Computation (ABC), Markov Chain Monte Carlo (MCMC)
\end{keywords}

\section{Introduction}

Differential Privacy (DP; \citealp{dwork2006calibrating}) is a rigorous probabilistic framework designed to protect the privacy of individuals in a dataset by adding noise to the queries made on the data. It has become widely adopted in academia (\citealp{vadhan2017complexity} from Harvard OpenDP), industrial companies (\citealp{erlingsson2014rappor}, \citealp{abadi2016deep} from Google Research; \citealp{dwork2006differential}, \citealp{ding2017collecting} from Microsoft Research), and governments (\citealp{Abowd20222020} from U.S. Census Bureau). Such privacy-protection mechanisms inevitably come with many challenges in conducting valid statistical inference on privatized data. One of the major challenges is that, given a prior model $\theta \sim \pi_0 (\cdot)$ for the target parameters, a data model $x \sim f(\cdot \mid \theta)$ for the sensitive data, and a privacy mechanism $\sdp \mid x \sim m(\cdot \mid x)$, the marginal private posterior distribution $\pi(\theta \mid \sdp) \ \propto \ \pi_0 (\theta)\int f(x \mid \theta) m(\sdp \mid x) \ dx$ is typically intractable, as it requires integrating over all possible databases $x$. The doubly intractable computation \citep{murray2006mcmc} cannot directly be solved by traditional Markov Chain Monte Carlo (MCMC) methods, such as the Metropolis-Hastings algorithm, but requires a more careful approach.

Under the Bayesian regime, MCMC methods have arisen to exactly sample from or at least to target sampling from the private posterior $\pi(\theta \mid \sdp)$ (\citealp{Bernstein2018} $\&$ \citeyear{bernstein2019differentially}; \citealp{gong2022exact}; \citealp{ju2022data}). (i) \citeauthor{Bernstein2018}  (\citeyear{Bernstein2018} $\&$ \citeyear{bernstein2019differentially}) are limited to certain priors, data models, and mechanisms. (ii) \citet{gong2022exact} provides i.i.d. samples from the exact private posterior but can suffer from low acceptance rates.   (iii) While \citet{ju2022data} gives a more generic missing-data sampling approach, it is optimized for conjugate prior models. In other words, along with the usual mixing issues in MCMC methods, it can be inefficient when non-conjugate prior models are used. Even though conjugate priors are analytically convenient, they are known to be non-robust to mis-specifications \citep{Jim1985}; for more robust Bayesian inference, \citet{berger1994overview} recommends using heavier-tailed prior distributions than can be produced by conjugate priors.

Particle filtering  (PF), also known as Sequential Monte Carlo (SMC), is a powerful method of obtaining approximate samples from a posterior distribution, assuming the prior is proper and both the prior and data model can be easily sampled. The algorithm begins with a set of particles drawn from the prior distribution, which are then iteratively perturbed and filtered. Then, it evolves recursively through the following three steps: propagation (mutation), reweighting (correction), and resampling (selection). Through repeated iterations, the particle system gradually generates samples that approximate the target posterior distribution. 

{Under mild conditions---generally weaker than (geometric) ergodicity assumptions required in MCMC---particle filtering (PF) guarantees favorable particle behavior, including strong convergence and stability \citep{chopin2004}, while naturally accommodating parallel computation. Moreover, leveraging its inherent sequential construction at no additional computational cost, PF provides an estimator of the marginal likelihood (model evidence) via the product of incremental normalizing-constant ratios, thereby enabling Bayesian model comparison and selection. However, the preservation of these properties under differential privacy is not automatic and requires careful design of the privacy-aware procedure, along with theoretical analysis.}

\textbf{Contributions} 
In this paper, we develop a custom particle filtering algorithm to sample from the private posterior distribution $\pi(\theta, x \mid \sdp)$. To tailor the algorithm to the DP setting, we introduce two customizations: (1) an artificial schedule on the privacy parameter $\epsilon$ and (2) the rejection step of \citet{gong2022exact}. The schedule has a tempering effect on the intermediate target distribution, and the rejection step helps to efficiently target the private posterior distribution, avoiding the bias introduced by approximate Bayesian computing (ABC) approaches. Following the generic structure of particle filtering, our method
\begin{itemize}
    \item is widely applicable and only requires the ability to sample from the prior and data model, and the ability to evaluate the densities of the prior and the privacy mechanism,
    \item uses a novel tempering of the privacy budget $\epsilon$, in place of the usual ABC schedule,
    \item allows for parallelization across the particles in each cycle,
    \item results in consistent estimates, the central limit theorem (CLT), and accurate estimates of the Monte Carlo standard errors (MCSE), giving an asymptotic confidence interval (CI) for the true posterior quantities, and
    \item {yields an estimate of the marginal likelihood for the sensitive data.}
\end{itemize}

{Importantly, these estimates are nontrivial to derive in the DP setting and may be compromised without the specific structural design of our proposed algorithm (see Section~\ref{sec: particle filters for DP} for further elaboration).}

To evaluate our method, we apply it to location-scale normal, linear regression {and Gaussian mixture models} to demonstrate its improved performance on computation efficiency compared to \citet{ju2022data}'s data augmentation MCMC and \citet{gong2022exact}'s rejection ABC algorithm, which will be referred to hereafter as DP-DA-MCMC and DP-Reject-ABC, respectively. Additionally, we apply our method to 2021 Canadian Census data to perform a DP logistic regression via objective perturbation, showcasing its practical application.

\textbf{Organization} The remainder of the paper is organized as follows. Section~\ref{sec: background} reviews the definition of $\epsilon$-differential privacy and key properties. We also introduce the structure and statistics of a typical particle filtering method (not specific to the privatized data setting). Section~\ref{sec: particle filters for DP} first introduces the difference between our algorithm and a typical particle filtering scheme. Section~\ref{subsec: algorithm} presents our main algorithm along with its assumptions and designs. In Section~\ref{subsec: consistency} and Section~\ref{subsec: clt}, we prove our estimator has strong consistency and asymptotic normality. {In Section~\ref{subsec: mcse and asymptotic ci} and Section~\ref{subsec: evidence}, we derive its asymptotic confidence interval for the true parameter and marginal likelihood of the sensitive data.}  In Section~\ref{sec: simulation}, we compare our method to the DP-DA-MCMC and DP-Reject-ABC in various simulation studies. In Section~\ref{sec: data analysis}, we analyze the model parameters by applying our algorithm to a logistic regression via objective perturbation on the 2021 Canadian census dataset. Section~\ref{sec: conclusions} discusses the implications and limitations of our work. Proof, technical details, and additional results are deferred to the appendices. The source code used to implement the simulation study and data analysis is available at \url{https://github.com/psanghi04/DP-PF}.

\textbf{Related Work} 
Common strategies for statistical inference on privatized data can roughly be categorized into two major perspectives: asymptotic frequentist and Bayesian. The first employs traditional statistical asymptotics to approximate the sampling distribution of differentially private statistics, arguing that the privacy noise is negligible compared to sampling errors in large samples. However, these approximations can fall short in finite samples and require specific methods to account for privacy effects (\citealp{smith2011}; \citealp{cai2021cost}; \citealp{wang2018statistical}) and are tailored to specific privacy mechanisms (\citealp{gaboardi2016differentially}; \citealp{gaboardi2017local}). 

The second category emphasizes the marginal likelihood of model parameters, which involves high-dimensional integrals over unobserved confidential databases that are analytically tractable only in simple cases (\citealp{awan2018ump}, \citeyear{awan2020differentially}). To address these challenges, various Markov Chain Monte Carlo (MCMC) methods have been proposed for specific privacy mechanisms and data models, including approaches for exponential random graph models, exponential family models, and generalized linear models (\citealp{karwa2017sharing}; \citealp{Bernstein2018}, \citeyear{bernstein2019differentially}; \citealp{schein2019locally}; \citealp{kulkarni2021differentially}). Recent developments include \citet{gong2022exact}, who showed that approximate Bayesian computation (ABC) can provide samples consistent with the marginal likelihood and Bayesian posterior for certain differentially private statistics. \citet{ju2022data} proposed a method for Bayesian inference that integrates existing samplers for non-private data while accounting for the privacy mechanism, although it relies on strong assumptions regarding geometric ergodicity and may be inefficient in some cases.

In addition, some approaches do not fit into the two categories. \citet{awan2025simulation} developed a simulation-based finite-sample inference strategy for privatized data, providing accurate confidence sets, at the cost of increased computation. { Other frequentist approaches use different approximations, such as parametric bootstrapping \citep{ferrando2022parametric,wang2025optimal} and non-parametric bootstrapping \citep{wang2025differentially}. There is also an approach based on Bayesian asymptotics, but this is currently limited to additive noise mechanisms and summation-based summaries without clamping \citep{awan2026large}.}

\section{Background} \label{sec: background}

In this section, we review the necessary background and set the notation for the paper.

We let $x = (x_1,\ldots,x_n)$ be a database in $\mathcal{X}^n$ and from a data model $f(\cdot \mid \theta)$, where $\theta$ is the parameter of interest. In the DP framework, $\sdp$ is denoted as the observed, privatized statistic, and $m_{\epsilon} (\cdot \mid x)$ is the density of a privacy mechanism conditional on confidential data $x$, where $\epsilon$ is the privacy parameter. In a Bayesian setting, we assume a prior $\pi_0 (\cdot)$ on $\theta \in \Theta$ and the Bayesian model can be summarized as follows:
\begin{align} \label{Bayesian hierarchical model}
\begin{split}
    \theta &\sim \pi_0 (\theta), \\
    x \mid  \theta &\sim f(x \mid \theta), \\
    \sdp \mid x &\sim m_{\epsilon} (\sdp \mid x).
\end{split}
\end{align}
In order to infer $\theta$ based on $\sdp$, we are interested in the private posterior 
\[\pi(\theta \mid \sdp) \ \propto \ \pi_0 (\theta) \int_{\mathcal{X}^n} f(x \mid \theta) m_{\epsilon} (\sdp \mid x) \ dx,\]
or more generally, in
\begin{align}
    \pi(\theta, x \mid \sdp) \ \propto \ \pi_0 (\theta) f(x \mid \theta) m_{\epsilon} (\sdp \mid x).
\end{align}

\subsection{Differential Privacy}

Differential privacy (DP), introduced in \citet{dwork2006calibrating}, provides a framework to quantify the privacy risks associated with an algorithm and offers techniques to design privacy mechanisms that control privacy loss. To protect sensitive information in the dataset, DP methods require adding a proper amount of randomness to the summary statistic with the goal of ensuring that adversaries cannot easily determine whether a specific individual was part of the dataset. Meanwhile, the utility of the private output is ideally still informative for many population-level statistics.

\begin{definition}[Privacy Mechanism]
    Given a set $\mathcal{X}^n$, which is the collection of all possible databases, a \emph{privacy mechanism} $\mathcal{M}$ is defined as a set of probability measures $\{ M_x \mid x \in \mathcal{X}^n \}$ which take values on a common 
    measurable space $\mathcal{Y}$.
\end{definition}

\begin{definition}[$\epsilon$-Differential Privacy: \citealp{dwork2006calibrating}]
    Given a set $\mathcal{X}^n$ and the Hamming distance $d_H(\cdot, \cdot)$ on $\mathcal{X}^n \times \mathcal{X}^n$, a privacy mechanism $\mathcal{M}$ satisfies $\epsilon$-differential privacy ($\epsilon$-DP), if for any two databases $x, x' \in \mathcal{X}^n$ such that $d_H(x, x') \leq 1$,
    \begin{align*}
        \Pr[\mathcal{M}(x) \in S] \leq \exp({\epsilon}) \Pr[\mathcal{M}(x') \in S]
    \end{align*}
    for all measurable sets $S$.
\end{definition}

The privacy parameter $\epsilon$, also known as the privacy (loss) budget, captures the worst privacy loss between adjacent datasets. It is used for controlling the trade-off between data privacy and utility. A large $\epsilon$ value may preserve strong data utility but can cause severe privacy leakage. Thus, a typical range for $\epsilon$ is from $0.01$ to $10$ to ensure sufficient privacy protection while maintaining data utility. A common $\epsilon$-DP mechanism is the Laplace mechanism, which adds noise drawn from a Laplace distribution to the summary. As $d_H(x, x') = 1$, it means that the two databases differ in exactly one record, and hence they are called adjacent datasets.

\begin{definition}[Sensitivity]
    let $f: \mathcal{X}^n \to \mathbb{R}$ be a statistic. The sensitivity of $f$ is defined
    \begin{align*}
        \Delta f = \max_{\substack{%
    x, x' \in \mathcal{X}^n\\
     d_H(x, x') \leq 1 \hfill}}
      \!  \| f(x) - f(x') \|_1,
    \end{align*}
    where $\|\cdot\|_1$ is the $\ell_1$-norm.
\end{definition}

The sensitivity of a function quantifies how much the output of the function can change when a single individual's data in the input dataset is altered. It is a critical concept in DP as it determines the amount of noise needed to be added to the function's output to achieve a desired level of privacy. 

\begin{proposition} [Laplace Mechanism: \citealp{dwork2006calibrating}]
    Let $M(D) = f(D) + L$, where $L = \text{Lap}\left(0, \frac{\Delta f}{\epsilon} \right)$ with density $f_L(x) = \frac{\epsilon}{2 \Delta f} \exp({-\frac{\epsilon}{\Delta f}|x|})$. Then,  $\mathcal{M}$ satisfies $\epsilon$-DP.    
\end{proposition}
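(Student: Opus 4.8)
The plan is to reduce the set-wise privacy inequality to a pointwise comparison of the two output densities and show their ratio is uniformly bounded by $\exp(\epsilon)$, after which the claim follows by integration. Since $M(D) = f(D) + L$ with $L \sim \text{Lap}(0, \Delta f / \epsilon)$ a mean-zero noise variable, the distribution of the output $M(D)$ has density $y \mapsto f_L(y - f(D))$ on $\mathbb{R}$, and likewise $M(D')$ has density $y \mapsto f_L(y - f(D'))$. Everything then hinges on controlling the ratio of these two translated densities.

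First I would fix an arbitrary pair of adjacent databases $D, D'$ with $d_H(D, D') \leq 1$ and form the likelihood ratio at an arbitrary point $y$, where the normalizing constants cancel:
\[
\frac{f_L(y - f(D))}{f_L(y - f(D'))} = \exp\left( \frac{\epsilon}{\Delta f}\left( |y - f(D')| - |y - f(D)| \right) \right).
\]
The central step is to bound the exponent. By the reverse triangle inequality, $|y - f(D')| - |y - f(D)| \leq |f(D) - f(D')|$, and since $f$ is scalar-valued the $\ell_1$-norm in the sensitivity definition collapses to the absolute value, so adjacency gives $|f(D) - f(D')| \leq \Delta f$. Hence the exponent is at most $(\epsilon / \Delta f)\cdot \Delta f = \epsilon$, and the density ratio is at most $\exp(\epsilon)$ for every $y$.

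Finally I would integrate the pointwise bound over an arbitrary measurable set $S$:
\[
\Pr[M(D) \in S] = \int_S f_L(y - f(D))\, dy \leq \exp(\epsilon) \int_S f_L(y - f(D'))\, dy = \exp(\epsilon)\,\Pr[M(D') \in S],
\]
which is precisely the $\epsilon$-DP condition. Since $D, D'$ and $S$ were arbitrary, $\mathcal{M}$ satisfies $\epsilon$-DP.

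The proof is essentially routine, and the only place requiring care is the bounding step: one must invoke the reverse triangle inequality in the correct direction and recognize that adjacency together with the definition of $\Delta f$ caps $|f(D) - f(D')|$ by exactly $\Delta f$, so that the scaling $\Delta f / \epsilon$ of the Laplace noise is calibrated to cancel the sensitivity and leave exactly $\epsilon$ in the exponent. The remaining steps — the density computation and the integration — are mechanical.
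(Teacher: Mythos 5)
Your proof is correct and is exactly the standard argument for this classical result, which the paper itself does not re-prove but simply cites from \citet{dwork2006calibrating}: bound the pointwise density ratio via the reverse triangle inequality and the sensitivity, then integrate over the measurable set $S$. No gaps; the reduction to the scalar case of the $\ell_1$ sensitivity is handled correctly.
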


The following are two important properties in $\epsilon$-DP:
\begin{itemize}
    \item (Composition) If a mechanism $\mathcal{M}_1: \mathcal{X}^n \to \mathcal{Y}$ satisfies $\epsilon_1$-DP and another mechanism $\mathcal{M}_2: \mathcal{X}^n \to \mathcal{Z}$ satisfies $\epsilon_2$-DP, then the composed mechanism which jointly releases $(\mathcal{M}_1 (D), \mathcal{M}_2 (D))$ satisfies $(\epsilon_1 + \epsilon_2)$-DP.
    \item (Post-Processing: \citet{dwork2014algorithmic}) If $\mathcal{M}: \mathcal{X}^n \to \mathcal{Y}$ is an $\epsilon$-DP mechanism and $g: \mathcal{Y} \to \mathcal{Z}$ is another mechanism, then $g \circ \mathcal{M}: \mathcal{X}^n \to \mathcal{Z}$ satisfies $\epsilon$-DP.
\end{itemize}

Composition describes the privacy leakage as multiple data summaries are published. Post-processing is a property of differential privacy that ensures the privacy guarantees are preserved under any data-independent transformation of the output, such as our posterior analysis. We use the Laplace mechanisms and their properties in our examples in Section~\ref{sec: simulation}.

\begin{remark}
    We use $\epsilon$-DP for the presentation of this paper primarily for simplicity. Our proposed PF scheme can be easily modified to any other privacy guarantee, especially those with a single privacy parameter, such as $\mu$-GDP \citep{dong_gaussian_2022} or $\rho$-$z$CDP \citep{bun2016concentrated}. Mechanisms for other frameworks, such as $(\epsilon, \delta)$-DP, $f$-DP \citep{dong_gaussian_2022}, or R\'enyi DP \citep{mironov2017renyi}, may need some special treatment to identify a useful filtering schedule in our sampler. 
\end{remark}

\subsection{Particle Filters}

The structure of a particle filtering method can be broken into three main iterative operations, hereinafter referred to as resampling, propagation, and reweighting. Suppose there are $N$ particles in the particle filtering system and $T$ iterations before completion. Iteration $t=0$ is when we initialize the particle system by generating $N$ i.i.d. samples from the prior $\pi_0$. Then, in iteration $t \geq 1$, we do the following steps subsequently to approximate the intermediate target distribution $\pi_t$, with $\pi_T$ being the ultimate target:

\noindent{\textbf{Step 1. Resampling:}} 
    The resampling step acts as pruning, helping the system avoid particle degeneration. Although this action introduces variation from the resampling process, it ultimately offsets the risk of particle degeneracy. The simplest resampling method is to perform multinomial resampling, which is simply weighted sampling with replacement:
    \begin{align*}
        \Tilde{y}^{(1,t)},\ldots,\Tilde{y}^{(N,t)} \overset{i.i.d.}{\sim} \text{Multinomial} \left( 1, \left( \mathbf{w}^{(1,t-1)},\ldots, \mathbf{w}^{(N,t-1)} \right) \right),
    \end{align*}
    where $\mathbf{w}^{(j,t-1)}$'s are normalized weights that sum up to 1 and $\tilde{y}^{(j,t)}$'s are resampled particles but not yet perturbed. As $t=1$, the new particles are resampled from the $N$ i.i.d. samples generated from the prior $\pi_0$; otherwise, they are resampled from the independent weighted samples $\{(y, \mathbf{w})^{(j, t-1)}\}_{j=1}^N$ from $\pi_{t-1} (\cdot)$, the intermediate distribution obtained at the end of iteration $t-1$. Other resampling schemes, such as stratified resampling, residual resampling \citep{Liu1998}, and systematic resampling \citep{Carpenter2000}, can introduce less noise and have lower variance in estimates than multinomial resampling.  While we focus on multinomial sampling for simplicity, there should be no complication in extending our methodology to these other resampling schemes, each of which possesses a central limit theorem result \citep{chopin2004}. Additional discussion will be provided in Section~\ref{sec: conclusions}.

\noindent{\textbf{Step 2. Propagation:}}
    The propagation step plays the driving force of exploring the parameter space by producing new particles from a perturbation kernel $K_t(\Tilde{y}^{(j, t)}, \cdot)$ for all $j$. A typical particle filtering propagation step ends with new particles approximating
    \begin{align*}
        \eta_t (\cdot) = \int \pi_{t-1} (\tilde{y}) K_t(\tilde{y}, \cdot) d\tilde{y}.
    \end{align*}
    A multivariate Gaussian kernel is often used when targeting a continuous distribution, for it is easily evaluated and sampled. The covariance of a Gaussian kernel can be specified in advance or estimated by the sample covariance from the particles of the previous iteration.  \citet{Beaumont2009} and \citet{filippi2013optimality} derive an optimal kernel variance (scale) with high average acceptance rates for parametric perturbation kernels, such as uniform kernels and Gaussian kernels.

\noindent{\textbf{Step 3. Reweighting:}} The reweighting step recalibrates the importance of each particle. It uses the proposal distribution $\eta_t(\cdot)$ to target the iteration-wise distribution of interest $\pi_t (\cdot)$ by considering the importance weight
    \begin{align} \label{eq: IS weight}
        w_t(y) = \frac{\pi_t(y)}{\eta_t(y)},
    \end{align}
     which comes from the concept of importance sampling. Importance sampling is a technique used to estimate properties of a particular distribution $\pi$ while only having samples generated from another distribution $\eta$. In such cases, $\ex_{\pi}[\varphi(Y)] = \int \varphi(y) \frac{\pi(y)}{\eta (y)} \eta(y) \, dy$ can be estimated by
    \begin{align} \label{stat: IS approximation}
        \frac{1}{N} \sum_{i=1}^N \varphi(y_i) \frac{\pi(y_i)}{\eta (y_i)},
    \end{align}
    where $Y \sim \pi$ but can only sample from a tractable proposal distribution $\eta(y)$. The ratio of the target to the proposal $\pi/\eta$ is the importance weight.

\begin{remark} \label{remark: importance weight up to a constant}
    The weight calculated in the reweighting step (line 11) of Algorithm~\ref{alg: DP-PF} is only proportional to the ``ideal'' importance weight, that is, the fraction of the target posterior to the proposal. However, this poses no issue as normalization is performed in the next line.
\end{remark}

Algorithm~\ref{alg: particle filters} describes how a particle filtering system works and evolves. Figure~\ref{fig: diagram of particle filters} shows that as iterations proceed, the particles in the system gravitate towards a better proposal $\eta_t$ than the prior $\pi_0$. The colored (overlapping) area expands in $t$ and hence more particles are considered effective. On the other hand, if $\pi_0$ is used to target $\pi$, it reduces to a typical importance sampling (i.e., the leftmost graph). In this case, the colored area is small.

\algrenewcommand\algorithmicrequire{\textbf{Input:}}
\algrenewcommand\algorithmicensure{\textbf{Output:}}

\begin{algorithm}
\caption{Generic Particle Filter} \label{alg: particle filters}
\begin{algorithmic}
\Require number of particles $N$, number of iterations $T$, prior $\pi_0$, perturbation kernel $K_t$ \ $(t=1,\ldots,T)$
    \State \textbf{Initialize} by generating, from $\pi_0$, $N$ particles $\{y^{(i,0)}\}_{i=1}^N$ with weights $w_0 = \frac{1}{N}$
    \For{t = 1,\ldots,T}
        \State Step 1. \textbf{Resample} from $y_{t-1}$ with weights $w_{t-1}$ to get candidate particles \( \{\Tilde{y}^{(i,t)}\}_{i=1}^N \).    
        \State Step 2. \textbf{Propagate} by a perturbation kernel $y^{(i,t)} \sim K_t (\Tilde{y}^{(i, t)}, \cdot)$.
        \State Step 3. \textbf{Reweight} $y^{(i,t)}$ with importance weights as in \eqref{eq: IS weight}.
    \EndFor
\Ensure $\{ (y, \mathbf{w})^{(i, T)}\}_{i=1}^N$: weighted samples from $\pi_T ( \cdot )$.
\end{algorithmic}
\end{algorithm}

\begin{figure}[t]
    \begin{tikzpicture}[scale=1.12]
    \node[thin, black] at (0.6,0.9) {$\pi_0$};
    \node[thin, cyan!50!black] at (1.8,2.7) {$\pi$};
    \begin{axis}[
      no markers, domain=0:10, samples=100,
      axis lines*=left, xlabel=$y$,
      every axis y label/.style={at=(current axis.above origin),anchor=south},
      every axis x label/.style={at=(current axis.right of origin),anchor=west},
      height=4cm, width=4cm,
      xtick=\empty, ytick=\empty,
      enlargelimits=false, clip=false, axis on top,
      grid = major
      ]
      \addplot [very thick, black] {gauss(2,2)};
      \addplot [fill=red!20, draw=none, domain=5.75:10] {gauss(2,2)} \closedcycle;
      \addplot [densely dotted, very thick, cyan!50!black] {gauss(7,0.5)};
      \addplot [fill=red!20, draw=none, domain=0:5.75] {gauss(7,0.5)} \closedcycle;
    \end{axis}
\end{tikzpicture}
\begin{tikzpicture}[scale=1.12]
    \node [] at (0.7,1.2) {$\eta_1$};
    \node[thin, cyan!50!black] at (1.8,2.7) {$\pi$};
    \begin{axis}[
      no markers, domain=0:10, samples=100,
      axis lines*=left, xlabel=$y$,
      every axis y label/.style={at=(current axis.above origin),anchor=south},
      every axis x label/.style={at=(current axis.right of origin),anchor=west},
      height=4cm, width=4cm,
      xtick=\empty, ytick=\empty,
      enlargelimits=false, clip=false, axis on top,
      grid = major
      ]
      \addplot [dotted, thick, black] {gauss(2,2)};
      \addplot [very thick, black] {gauss(3,1.5)};
      \addplot [fill=red!20, draw=none, domain=5.75:10] {gauss(3,1.5)} \closedcycle;
      \addplot [densely dotted, very thick, cyan!50!black] {gauss(7,0.5)};
      \addplot [fill=red!20, draw=none, domain=0:5.75] {gauss(7,0.5)} \closedcycle;
    \end{axis}
\end{tikzpicture}
\begin{tikzpicture}[scale=1.12]
    \node [] at (0.9,1.5) {$\eta_2$};
    \node[thin, cyan!50!black] at (1.8,2.7) {$\pi$};
    \node [] at (2.7,1.2) {$\cdot\cdot\cdot$};
    \begin{axis}[
      no markers, domain=0:10, samples=100,
      axis lines*=left, xlabel=$y$,
      every axis y label/.style={at=(current axis.above origin),anchor=south},
      every axis x label/.style={at=(current axis.right of origin),anchor=west},
      height=4cm, width=4cm,
      xtick=\empty, ytick=\empty,
      enlargelimits=false, clip=false, axis on top,
      grid = major
      ]
      \addplot [dotted, thick, black] {gauss(2,2)};
      \addplot [very thick, black] {gauss(4,1)};
      \addplot [fill=red!20, draw=none, domain=5.75:10] {gauss(4,1)} \closedcycle;
      \addplot [densely dotted, very thick, cyan!50!black] {gauss(7,0.5)};
      \addplot [fill=red!20, draw=none, domain=0:5.75] {gauss(7,0.5)} \closedcycle;
    \end{axis}
\end{tikzpicture}
\begin{tikzpicture}[scale=1.12]
    \node [] at (1.2,1.8) {$\eta_{T}$};
    \node[thin, cyan!50!black] at (1.8,2.7) {$\pi$};
    \begin{axis}[
      no markers, domain=0:10, samples=100,
      axis lines*=left, xlabel=$y$,
      every axis y label/.style={at=(current axis.above origin),anchor=south},
      every axis x label/.style={at=(current axis.right of origin),anchor=west},
      height=4cm, width=4cm,
      xtick=\empty, ytick=\empty,
      enlargelimits=false, clip=false, axis on top,
      grid = major
      ]
      \addplot [dotted, thick, black] {gauss(2,2)};
      \addplot [very thick, black] {gauss(5.5,0.75)};
      \addplot [fill=red!20, draw=none, domain=6.25:10] {gauss(5.5,0.75)} \closedcycle;
      \addplot [very thick, cyan!50!black] {gauss(7,0.5)};
      \addplot [fill=red!20, draw=none, domain=0:6.25] {gauss(7,0.5)} \closedcycle;
    \end{axis}
\end{tikzpicture}
\caption{Diagram of a particle filter. Plots progress from left to right.}
\label{fig: diagram of particle filters}
\end{figure}

\begin{example}[ABC SMC: \citealp{toni_approximate_2009}]

For Bayesian analysis, PFs are commonly used in combination with ABC, which is called ABC SMC. ABC SMC, described in Algorithm~\ref{alg: abc smc}, incorporates an additional rejection step to rule out bad particles when its threshold $d(s,\cdot) > \Delta$. The function $d(s,\cdot)$ measures the distance between a statistic and its simulated counterpart, accepting particles within a range of $\Delta$. Smaller thresholds take more time to generate effective particles but can result in a smaller approximation error.

By considering the events of $\{d(s,\cdot)\leq \Delta_t\}$ over iterations, ABC SMC can adaptively scope down to its target by shrinking $\Delta_t$ in $t$. This is essentially the ``schedule'' in ABC SMC. More formally, the schedule targets $\pi(\theta | d(s,.)\leq \Delta_t)$, where $\Delta_t$ shrinks with $t$ until we arrive at the target ABC posterior distribution. This process avoids rejecting too many particles in the first few iterations when the particle system is still close to prior knowledge. 
\end{example}

\begin{algorithm}
\caption{ABC SMC} \label{alg: abc smc}
\begin{algorithmic}
    \Require number of particles $N$, number of iterations $T$, prior $\pi_0$, perturbation kernel $K_t$, schedule $\Delta_t$ \ $(t=1,\ldots,T)$
        \State \textbf{Initialize} by generating, from $\pi_0$, $N$ particles $\{y^{(i,0)}\}_{i=1}^N$ with weights $w_0 = \frac{1}{N}$
        \For{t = 1,\ldots,T}
            \State Step 1. \textbf{Resample} from $y_{t-1}$ with weights $w_{t-1}$ to get candidate particles \( \{\Tilde{y}^{(i,t)}\}_{i=1}^N \).    
            \State Step 2. \textbf{Propagate} by a perturbation kernel $y^{(i,t)} \sim K_t (\Tilde{y}^{(i, t)}, \cdot)$.
            \State Step 2.5. \textbf{Reject} when $d > \Delta_t$ and go back to Step 2.
            \State Step 3. \textbf{Reweight} $y^{(i,t)}$ with importance weights as in \eqref{eq: IS weight}.
        \EndFor
    \Ensure $\{ (y, \mathbf{w})^{(i, T)}\}_{i=1}^N$: weighted samples from $\pi_T ( \cdot )$.
\end{algorithmic}
\end{algorithm}


\section{Particle Filter for Differential Privacy} \label{sec: particle filters for DP}

In this section, we present our particle filtering algorithm for DP, which generates weighted samples that approximately follow the private posterior distribution $\pi_\epsilon(\theta, x \mid \sdp)$. 

Our algorithm integrates the strengths of ABC SMC and DP-Reject-ABC. While ABC SMC effectively narrows down potential parameters, it introduces approximation errors as its final target distribution is an ABC posterior rather than the true posterior distribution. To avoid this issue, we incorporate techniques from DP-Reject-ABC, which produces i.i.d. samples from the exact posterior. To properly combine the two methods, we define the $\epsilon_t$ schedule as an increasing, positive sequence for the $\epsilon$ budget, with $\epsilon_T = \epsilon$. Additionally, the scale of the perturbation kernels can be set as a decreasing function of $t$, honing the exploration scope as the iterations proceed. The $\epsilon_t$-schedule is analogous to the decreasing sequence $\Delta_t$ in ABC SMC, which fits the empirical evidence that smaller $\epsilon$ values result in a posterior that is closer to the prior \citep{ju2022data}.

We point out the key differences of our differentially private particle filtering algorithm compared to the previous two PF algorithms. Compared to Algorithm~\ref{alg: abc smc}, \textbf{Steps 1. and 2.} are the same except that our algorithm requires specifying the $\epsilon_t$-schedule.
     
\noindent{\textbf{Step 2.5.}} The DP propagation step incorporates a rejection step (on line 7 in Algorithm~\ref{alg: DP-PF}) that helps target the exact intermediate private posterior $\pi_{\epsilon_t}(\cdot \mid \sdp)$. Instead of function $d$ in ABC, we have the probability of acceptance
\begin{align}
 r_t(x) = \frac{m_{\epsilon_t}(\sdp \mid x)}{\sup_{x} m_{\epsilon_t}(\sdp \mid x)},   
\end{align}
where $x$ is sampled from $f \left( \cdot \mid {\theta} \right)$. {This is a direct and crucial result of Algorithm~\ref{alg: DP-PF}, in which mechanism density only shows up at propagation but not reweighting nor resampling.}  The tempering effect on the mechanism density $m_{\epsilon_t}$ enables the particle system to collect plausible particles in early iterations to avoid a high rejection rate and filter these through later iterations to meet the actual $\epsilon$-DP guarantee. 
      
\noindent{\textbf{Step 3.}} As usual, the reweighting step considers the importance weight. The numerator now becomes our target private posterior $\pi_{\epsilon_t} (\theta, x \mid \sdp)$ and the denominator is the whole propagation density $h_t(\theta, x)$, which involves both perturbation and rejection. The properties of $h_t$ are discussed in \Cref{appendix: proofs}.

\subsection{Algorithm} \label{subsec: algorithm}

In this section, we present the proposed algorithm, detailing its inputs, outputs, and assumptions, {and discuss how its design underpins the theoretical guarantees established later.}

\algrenewcommand\algorithmicrequire{\textbf{Input:}}
\algrenewcommand\algorithmicensure{\textbf{Output:}}

\begin{algorithm}[htbp]
\caption{DP-PF} \label{alg: DP-PF}
\begin{algorithmic}[1]
\Require privacy budget $\epsilon$, sample size $n$, particles $N$, iterations $T$, prior $\pi_0$, privatized summary $\sdp$, sampler $f$, kernel $K_t$, schedule $\epsilon_t$, mechanism densities $m_{\epsilon_t}$ \ ($t=1,\ldots,T$)
    \For{$t = 0,1,\ldots,T$}
        \State Set $\epsilon_t$ and variance (scale) of $K_t$ when $t>0$
        \For{$i = 1,..,N$}
            \State \textbf{[Initialization]} Sample $\theta^{(i,t)} \sim \pi_0$ if $t = 0$; else:
            \State \quad \textbf{[Resampling]} Sample $\Tilde{\theta} \sim \{\theta^{(j, t-1)}\}_{j=1}^N$ using weights $\{\mathbf{w}^{(j, t-1)}\}_{j=1}^N$
            \State \quad \textbf{[Propagation]} Sample $\theta \sim K_t(\Tilde{\theta}, \cdot)$ until $\pi_0 (\theta) > 0$
            \State \quad Sample $x \sim f(\cdot \mid \theta)$, compute 
            \[
            r_t(x) = \frac{m_{\epsilon_t}(\sdp \mid x)}{\sup_{x} m_{\epsilon_t}(\sdp \mid x)}
            \]
            \State \quad Sample $u \sim U(0,1)$. 
            \State \quad If $u > r_t(x)$, go to step 6; else: set $\theta^{(i,t)} \gets \theta$
        \EndFor
        \State \textbf{[Reweighting]} Compute weights for all $i=1,\ldots,N$:
        \[
        \tilde{w}^{(i,t)} = 
        \begin{cases} 
            1, & t=0 \\
            \tilde{w}^{(i,t)} \gets \frac{\pi_0 \left( \theta^{(i,t)} \right)}{\sum_{j=1}^N \mathbf{w}^{(j, t-1)} K_t(\theta^{(j,t-1)}, \theta^{(i,t)} \mid \pi_0 (\theta^{(i,t)}) > 0, r_t(x(\theta^{(i,t)}) > U)}, & \text{otherwise}
        \end{cases}
        \]
        \State Normalize the weights to get $\{\mathbf{w}^{(j, t)}\}$ s.t. $\sum_{j=1}^N \mathbf{w}^{(j, t)} = 1$
    \EndFor
\Ensure $\{ (\theta, x, \mathbf{w})^{(j, T)}\}_{j = 1}^N$: weighted samples from $\pi_{\epsilon} \left( \theta, x \mid \sdp \right)$.
\end{algorithmic}
\end{algorithm}

For the input, we assume the following:
\begin{enumerate}
    \item[A.1] prior $\pi_0$ can be sampled and its density can be evaluated,
    \item[A.2] data model $f$ can be sampled,
    \item[A.3] perturbation kernel $K_t$ can be sampled, and its density can be evaluated for each $t$,
    \item[A.4] mechanism density $m_{\epsilon_t}(\sdp|\cdot)$ can be evaluated and has a maximum for each $t$,
    \item[A.5] $\ex_{\pi_{\epsilon_T} (\theta, x | \sdp)} [\varphi(\theta, x)]$, $c_{\eta_t} (\sdp)$ and $c_{\epsilon_t}(\sdp)$ are all finite.
\end{enumerate}

For the output, $\{(\theta, x, \mathbf{w})^{(j, T)}\}_{j = 1}^N$ are weighted samples that are approximately distributed as  $\pi_{\epsilon} \left( \theta, x \mid \sdp \right)$. More importantly, by taking the weighted average of the particles, we obtain the consistent estimator of $\ex_{\pi_{\epsilon_T} (\theta, x | \sdp)} [\varphi(\theta, x)]$ for a given function $\varphi$ to be
\begin{align} \label{eq: the consistent estimator}
    \hat{\ex}_T [\varphi] \coloneqq \sum_{i=1}^N \textbf{w}^{(i,T)} \varphi \left( (\theta, x)^{(i,T)} \right).
\end{align}
These properties of \eqref{eq: the consistent estimator} are studied in Section~\ref{subsec: consistency}, where we also state additional technical conditions on the prior \( \pi_0 \) and the perturbation kernel \( K_t \) to ensure that the weight function \( w_t \) is well defined, with \( 0 \leq w_t(\theta) < \infty \) for all \( \theta \in \Theta_t \) at each \( t \).

{ Algorithm~\ref{alg: DP-PF} presents our DP particle filtering approach for posterior inference. While many variations are possible, there is an inherent tradeoff between computational cost and particle quality in high dimensions. For instance, a DP importance sampling variant, relying only on perturbation and reweighting, can handle hundreds of thousands of particles efficiently, but without rejection, the weights concentrate quickly, leading to severe degeneracy and very small ESS. Conversely, incorporating ideas from \citet{ju2022data}, which sequentially updates and stores synthetic data, may yield more stable, higher-quality particles at substantially greater computational cost.

DP-PF is designed not only to strike a balance between these two extremes, but also to exploit an advantage that alternative designs do not possess. The key structural feature is that the data and mechanism densities cancel in the weight update (Algorithm~3, line~11), yielding a particularly simple form. As a result, in the reweighting step, the data model $f$ need not be evaluated, and neither the model nor the mechanism density appears.

This structure has two important consequences. First, the acceptance probability depends only on the proposal $\eta_t$ (or kernel $K_t$); see Proposition~\ref{prop: acceptance distribution}. Second, the marginal likelihood admits a clean characterization that depends only on the mechanism density; see Proposition~\ref{thm: evidence}. Moreover, since the mechanism density enters only through the rejection step and not the reweighting step, the algorithm integrates naturally into the classical SMC framework: the propagation step absorbs the rejection, while the weighting step remains unaffected. This cancellation is particularly valuable in high dimensions, as it avoids repeated evaluation of expensive data and DP mechanisms.}

\begin{remark} In Algorithm~\ref{alg: DP-PF}, 
    \begin{itemize}
        \item the for loop on line 3 and the un-normalized weighting calculation on line 11 can be processed in parallel. This can reduce the computation time by a factor of $N$.
        \item To decide what $T$ to use, we note the following intuition: when rejection sampling maintains good acceptance and efficiency (for example, when  $\epsilon$ and sample size $n$ are small), we let $T$ be small (close to $1$); conversely, we let $T$ be large ($\geq 10$).
    \end{itemize}
\end{remark}

\begin{figure}[H]
    \begin{tikzpicture}[scale=1.12]
        \node[thin, black] at (0.5,0.9) {$\pi$};
        \node[thin, cyan!50!black] at (2,2.8) {$\pi_{\epsilon} (\cdot \mid \sdp)$};
        \begin{axis}[
          no markers, domain=0:10, samples=100,
          axis lines*=left, xlabel=$\theta$,
          every axis y label/.style={at=(current axis.above origin),anchor=south},
          every axis x label/.style={at=(current axis.right of origin),anchor=west},
          height=4cm, width=4cm,
          xtick=\empty, ytick=\empty,
          enlargelimits=false, clip=false, axis on top,
          grid = major
          ]
          \addplot [very thick, black] {gauss(2,2)};
          \addplot [fill=red!20, draw=none, domain=5.75:10] {gauss(2,2)} \closedcycle;
          \addplot [densely dotted, very thick, cyan!50!black] {gauss(7,0.5)};
          \addplot [fill=red!20, draw=none, domain=0:5.75] {gauss(7,0.5)} \closedcycle;
        \end{axis}
    \end{tikzpicture}
    \begin{tikzpicture}[scale=1.12]
        \node [] at (0.6,1.2) {$h_1$};
        \node[thin, cyan!50!black] at (1,1.5) {$\pi_{\epsilon_1}$};
        \begin{axis}[
          no markers, domain=0:10, samples=100,
          axis lines*=left, xlabel=$\theta$,
          every axis y label/.style={at=(current axis.above origin),anchor=south},
          every axis x label/.style={at=(current axis.right of origin),anchor=west},
          height=4cm, width=4cm,
          xtick=\empty, ytick=\empty,
          enlargelimits=false, clip=false, axis on top,
          grid = major
          ]
          \addplot [dotted, thick, black] {gauss(2,2)};
          \addplot [very thick, black] {gauss(3,1.5)};
          \addplot [thick, cyan!50!black] {gauss(4,1)};
          \addplot [fill=red!20, draw=none, domain=3:10] {gauss(3,1.5)} \closedcycle;
          \addplot [densely dotted, very thick, cyan!50!black] {gauss(7,0.5)};
          \addplot [fill=red!20, draw=none, domain=0:3] {gauss(4,1)} \closedcycle;
        \end{axis}
    \end{tikzpicture}
    \begin{tikzpicture}[scale=1.12]
        \node [] at (0.9,1.5) {$h_2$};
        \node[thin, cyan!50!black] at (1.3,1.9) {$\pi_{\epsilon_2}$};
        \node [] at (2.7,1.2) {$\cdot\cdot\cdot$};
        \begin{axis}[
          no markers, domain=0:10, samples=100,
          axis lines*=left, xlabel=$\theta$,
          every axis y label/.style={at=(current axis.above origin),anchor=south},
          every axis x label/.style={at=(current axis.right of origin),anchor=west},
          height=4cm, width=4cm,
          xtick=\empty, ytick=\empty,
          enlargelimits=false, clip=false, axis on top,
          grid = major
          ]
          \addplot [dotted, thick, black] {gauss(2,2)};
          \addplot [very thick, black] {gauss(4,1)};
          \addplot [thick, cyan!50!black] {gauss(5.5,0.75)};
          \addplot [fill=red!20, draw=none, domain=4.75:10] {gauss(4,1)} \closedcycle;
          \addplot [densely dotted, very thick, cyan!50!black] {gauss(7,0.5)};
          \addplot [fill=red!20, draw=none, domain=0:4.75] {gauss(5.5,0.75)} \closedcycle;
        \end{axis}
    \end{tikzpicture}
    \begin{tikzpicture}[scale=1.12]
        \node [] at (1.3,1.9) {$h_T$};
        \node[thin, cyan!50!black] at (2.1,2.8) {$\pi_{\epsilon} (\cdot \mid \sdp)$};
        \begin{axis}[
          no markers, domain=0:10, samples=100,
          axis lines*=left, xlabel=$\theta$,
          every axis y label/.style={at=(current axis.above origin),anchor=south},
          every axis x label/.style={at=(current axis.right of origin),anchor=west},
          height=4cm, width=4cm,
          xtick=\empty, ytick=\empty,
          enlargelimits=false, clip=false, axis on top,
          grid = major
          ]
          \addplot [dotted, thick, black] {gauss(2,2)};
          \addplot [very thick, black] {gauss(5.5,0.75)};
          \addplot [fill=red!20, draw=none, domain=6.25:10] {gauss(5.5,0.75)} \closedcycle;
          \addplot [very thick, cyan!50!black] {gauss(7,0.5)};
          \addplot [fill=red!20, draw=none, domain=0:6.25] {gauss(7,0.5)} \closedcycle;
        \end{axis}
    \end{tikzpicture}
    \caption{Diagram of DP-PF. Plots progress from left to right.}
    \label{fig: diagram of DP particle filters}
\end{figure}

Unlike Figure~\ref{fig: diagram of particle filters}, the colored area in Figure~\ref{fig: diagram of DP particle filters} expands due to two factors: the particle filtering procedure shifts the proposals toward the private posterior, and the intermediate target distributions are flatter than the private posterior.

\subsection{Consistency} \label{subsec: consistency}

In this section, we prove that our estimator \eqref{eq: the consistent estimator} is consistent as the number of particles $N$ tends to infinity.

Since our goal is to sample and perform inference from $\pi_{\epsilon_T} (\theta, x \mid \sdp)$ (or simply its marginal $\pi_{\epsilon_T} (\theta \mid \sdp)$), the minimal requirement would be to have our estimator \eqref{eq: the consistent estimator} converge to
    \begin{align} \label{stat: posterior expectation}
    \ex_T [\varphi] \coloneqq \ex_{\pi_{\epsilon_T} (\theta,x \mid \sdp)} [\varphi(\theta,x)],    
    \end{align}
for any measurable $\varphi$, mapping from the product of the parameter space and the database space to real values, such that $\eqref{stat: posterior expectation} < \infty$. 

\begin{restatable}[Strong Consistency]{thm}{strongConsistency} \label{thm: posterior mean convergence}
Let $\varphi: \Theta_t \times \mathcal{X}^n \to \mathbb{R}^d$, such that $\ex_{h_t} |w_t \cdot \varphi_j| < \infty$ for all $j=1,\ldots,d$. Our estimator, defined in \eqref{eq: the consistent estimator}, converges almost surely:
     \begin{align*}
          \hat{\ex}_t [\varphi] = \sum_{i=1}^N \textup{\textbf{w}}^{(i,t)} \varphi \left( (\theta, x)^{(i,t)} \right) \overset{a.s.}{\to} \ex_t [\varphi],
     \end{align*}
as $N \to \infty$. In particular, for $t=T$, $\hat{\ex}_T[\varphi] \overset{a.s.}{\to} \ex_T[\varphi]$.
\end{restatable}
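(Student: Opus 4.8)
The plan is to prove the claim by induction on the iteration index $t$, tracking the weighted empirical measure $\pi_t^N := \sum_{i=1}^N \mathbf{w}^{(i,t)} \delta_{(\theta,x)^{(i,t)}}$ and showing at each stage that $\pi_t^N(\varphi) \overset{a.s.}{\to} \ex_t[\varphi]$ for every $\varphi$ satisfying the integrability hypothesis. The base case $t=0$ is immediate: the particles $\{\theta^{(i,0)}\}_{i=1}^N$ are i.i.d.\ draws from $\pi_0$ with uniform weights, so $\hat{\ex}_0[\varphi]$ is an ordinary sample average and the classical strong law of large numbers yields $\hat{\ex}_0[\varphi] \overset{a.s.}{\to} \ex_{\pi_0}[\varphi]$.

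For the inductive step I would decompose iteration $t$ into its three constituent operations and argue that each preserves almost-sure convergence. The resampling step draws $\{\tilde\theta^{(i,t)}\}$ i.i.d.\ from the discrete law $\sum_j \mathbf{w}^{(j,t-1)} \delta_{\theta^{(j,t-1)}} = \pi_{t-1}^N$. The subtlety is that this is a triangular array, since for each $N$ the resampling distribution $\pi_{t-1}^N$ is itself random and $N$-dependent, so the classical law does not apply directly. I would handle this through the standard decomposition
\begin{align*}
    \frac{1}{N}\sum_{i=1}^N \phi(\tilde\theta^{(i,t)}) - \pi_{t-1}(\phi) = \Big[\tfrac{1}{N}\sum_{i=1}^N \phi(\tilde\theta^{(i,t)}) - \pi_{t-1}^N(\phi)\Big] + \Big[\pi_{t-1}^N(\phi) - \pi_{t-1}(\phi)\Big],
\end{align*}
where the second bracket vanishes almost surely by the inductive hypothesis, and the first bracket --- a centered average of conditionally i.i.d.\ terms --- is controlled by a conditional fourth-moment bound together with Borel--Cantelli (equivalently, by invoking the SMC stability framework of \citet{chopin2004}). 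The same conditional-law device transports convergence through the propagation-plus-rejection step: given $\tilde\theta^{(i,t)}$, each accepted particle $(\theta,x)^{(i,t)}$ is generated independently, so the propagated empirical measure converges almost surely to the limiting proposal $h_t$, whose density is, up to normalization, $\big(\int \pi_{t-1}(\tilde\theta) K_t(\tilde\theta,\theta)\,d\tilde\theta\big)\, f(x\mid\theta)\, r_t(x)$.

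For the reweighting step I would write the estimator as the self-normalized ratio
\begin{align*}
    \hat{\ex}_t[\varphi] = \frac{\frac{1}{N}\sum_i \tilde w^{(i,t)} \varphi\big((\theta,x)^{(i,t)}\big)}{\frac{1}{N}\sum_i \tilde w^{(i,t)}},
\end{align*}
and observe that the cancellation of the $f(x\mid\theta)$ and $m_{\epsilon_t}(\sdp\mid x)$ factors between target and proposal reduces the unnormalized weight to $w_t \propto \pi_0(\theta)/\big(\int \pi_{t-1}(\tilde\theta)K_t(\tilde\theta,\theta)\,d\tilde\theta\big)$, matching the weight on line~11. Both numerator and denominator are then averages of the form covered by the propagation-step convergence. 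The hypothesis $\ex_{h_t}|w_t\,\varphi_j| < \infty$ and the finiteness of $c_{\epsilon_t}(\sdp)$ from A.5 guarantee that the numerator converges a.s.\ to $\int \pi_t\,\varphi = c_{\epsilon_t}(\sdp)\,\ex_t[\varphi]$ and the denominator to $\int \pi_t = c_{\epsilon_t}(\sdp) > 0$; since a ratio of almost-surely convergent sequences with nonzero limiting denominator converges almost surely, we obtain $\hat{\ex}_t[\varphi] \overset{a.s.}{\to} \ex_t[\varphi]$, and taking $t=T$ gives the stated conclusion.

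The main obstacle I anticipate is precisely the triangular-array issue in the resampling and propagation steps: because the sampling distribution at each stage is a random, $N$-dependent empirical measure rather than a fixed law, the ordinary strong law is unavailable and one must instead establish a conditional strong law through moment bounds and Borel--Cantelli (or lean on the existing SMC convergence theory). A secondary technical point is verifying that the DP rejection mechanism yields a genuine, well-defined proposal density $h_t$ with $0 \le w_t < \infty$ on all of $\Theta_t$; this is where Assumption~A.4 (existence of $\sup_x m_{\epsilon_t}(\sdp\mid x)$) and the support conditions on $\pi_0$ and $K_t$ enter, legitimizing both the finiteness of the importance weights and the factor cancellation used above.
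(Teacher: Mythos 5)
Your proposal is correct and reaches the same endpoint as the paper --- both arguments ultimately write $\hat{\ex}_t[\varphi]$ as a self-normalized ratio of unnormalized weighted averages, send numerator and denominator to $\gamma_t\,\ex_t[\varphi]$ and $\gamma_t$ respectively (your limits $c_{\epsilon_t}(\sdp)\,\ex_t[\varphi]$ and $c_{\epsilon_t}(\sdp)$ differ only by the harmless factor $1/c_{\eta_t}(\sdp)$ coming from how one normalizes $h_t$), and conclude by the continuous mapping theorem for the ratio. Where you diverge is in how the convergence of those averages is justified. The paper's route is Lemma~\ref{lemma: convergence of posterior mean}: it computes $\ex_{h_t}[\tilde w_t\varphi]=\gamma_t\ex_t[\varphi]$, declares the terms $Y_i=\tilde w^{(i,t)}\varphi((\theta,x)^{(i,t)})$ to be i.i.d., and invokes the classical strong law once, with no induction over $t$. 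Your route is an induction on $t$ that propagates almost-sure convergence through resampling, propagation-with-rejection, and reweighting separately, using the decomposition into a conditionally centered average plus the inductive error, controlled by conditional moment bounds and Borel--Cantelli. This is genuinely more work, but it buys something real: the i.i.d. assertion in the paper's lemma is literally true only for the idealized proposal $h_t$ built from the exact intermediate posterior $\pi_{\epsilon_{t-1}}$, whereas the algorithm resamples from the random, $N$-dependent empirical measure $\pi_{t-1}^N$ (and line~11 of Algorithm~\ref{alg: DP-PF} correspondingly uses the empirical mixture $\sum_j \mathbf{w}^{(j,t-1)}K_t(\theta^{(j,t-1)},\cdot\mid\cdot)$ in the weight denominator, not the population mixture $\eta_t$). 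So the particles at iteration $t\ge 1$ form a triangular array of conditionally i.i.d. variables, exactly the obstacle you flag, and your induction is the standard device (as in \citet{chopin2004}) for closing that gap; the paper's one-line appeal to the SLLN elides it. Your secondary point --- that Assumption A.4 and the support/positivity conditions on $\pi_0$ and $K_t$ are what make $h_t$ a genuine density with $0\le w_t<\infty$ and justify the cancellation $w_t\propto\pi_0(\theta)/\eta_t(\theta)$ --- matches the role the paper assigns to Proposition~\ref{prop: the accepted distributions} and Remark~\ref{remark: consistency assumption}.
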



Theorem~\ref{thm: posterior mean convergence} is analogous to ergodicity results for DP-DA-MCMC. The almost sure convergence is with respect to the randomness of the PF while the DP summary is held fixed. {Although the DP-PF importance weight $\tilde{w}_t$ is only proportional to the ideal weight
$$
w_t=\frac{\pi_{\epsilon_t}(\theta,x\mid\sdp)}{h_t(\theta,x)},
$$
the unknown proportionality constant is not an issue: it can be consistently estimated by setting $\varphi=1$. Consequently, both the numerator and denominator of the self-normalized importance sampling estimator are consistent, and $\hat{\ex}_t[\varphi]$ converges almost surely to $\ex_t[\varphi]$ by the continuous mapping theorem.}

\begin{remark}
Theorem~\ref{thm: posterior mean convergence} captures the consistent behavior of $\varphi$; for example, we have consistent estimates of 
    \begin{enumerate}
        \item moments, if $\varphi(\theta) = \theta^q$ for $q \in \mathbb{N}$.
        \item probabilities, if $\varphi(\theta) = 1_{\{\theta \in S\}}$ for a measurable set $S \in \Theta_t$.
    \end{enumerate}
In many cases, including the two examples above, we are primarily interested in $\varphi$ as a function of $\theta$. Functions of $x$  may also be of interest, but are not the focus of this paper.
\end{remark}

\begin{remark} \label{remark: consistency assumption}
    $\ex_{h_t} |w_t \varphi| < \infty$ is indeed a crucial but mild assumption in the following way. 
    \begin{itemize}    
        \item It ensures that $\ex_t [\varphi]$ exists.
        \item It excludes the cases in which $w_t = \infty$ for any $\theta$.
        \item It is mild because it is often reasonable to assume that the parameter space $\Theta_t$ and the weight $w_t$ will be bounded in practice. In Euclidean space, if $\Theta_t$ is a compact set, for any continuous function $\varphi$ defined on $\Theta_t$,
    \begin{align*}
        \ex_{h_t} |w_t \varphi| \leq \sup_{\theta \in \Theta_t} |\varphi(\theta)| < \infty.
    \end{align*}
    \end{itemize}
\end{remark}

\subsection{Central Limit Theorem} \label{subsec: clt}

In addition to the consistency of $\hat{\ex}_T[\varphi]$, it is also crucial to understand its stability. In this section, we provide the asymptotic distribution of the consistent estimator $\hat{\ex}_T[\varphi]$, its asymptotic variance, and a $1-\alpha$ confidence interval. We follow a standard central limit theorem (CLT) argument for particle filtering algorithms by \citet{chopin2004}, but need to adjust for the parts that involve our DP components and rejection step; specifically, the propagation step. As in the previous section, this quantifies the randomness introduced by the PF system, not the randomness from the privacy mechanism or the data.

To make the CLT hold for the propagation step, we need to assume that
\begin{align*}
    \ex_{h_t} | w_t \cdot \varphi |^{2+\zeta} < \infty.
\end{align*}
for some $\zeta > 0$. It is worth noticing that we only need to assume $\ex_{h_t} | w_t \cdot \varphi | < \infty$  for consistency, but we need an additional finite absolute $2+\zeta$ moment for the CLT result; specifically, Lyapunov's CLT. However, this is still fairly mild in a similar sense as discussed in Remark~\ref{remark: consistency assumption}.

\begin{remark} \leavevmode
    \begin{itemize}
        \item For some $\zeta > 0$, $\ex_{h_t} | w_t \cdot \varphi |^{2+\zeta} < \infty$ is used to satisfy Lyapunov's condition. This condition ensures sufficient tail decay of both $w_t$ and $\varphi$ with respect to $h_t$, thereby controlling the dispersion and growth rates of these quantities.
        \item $\ex_{h_t} | w_t \cdot \varphi |^{2+\zeta} < \infty$ implies $\ex_{h_t} | w_t \cdot \varphi | < \infty$ by Jensen's inequality. Therefore, whenever we have the CLT result, strong consistency is guaranteed.
        \item As mentioned in Remark \ref{remark: consistency assumption}, boundedness of $\Theta_t$ and the weights $w_t$ implies that this condition holds.
    \end{itemize}
\end{remark}

\begin{restatable}[Central Limit Theorem]{thm}{clt} \label{thm: clt}
    Let $\varphi: \Theta_t \times \mathcal{X}^n \to \mathbb{R}$, such that there exists $\zeta > 0$, $\ex_{h_t} |w_t \cdot \varphi|^{2+\zeta} < \infty$,
    \begin{align*}
        \sqrt{N} \left( \hat{\ex}_t[\varphi] - \ex_t[\varphi] \right) \overset{d}{\to} N(0, V_t(\varphi)),
    \end{align*}
    as $N \to \infty$. The quantity $V_t(\varphi)$ is recursively defined by \eqref{eq: variance of reweighting} in \Cref{appendix: proofs}.    
\end{restatable}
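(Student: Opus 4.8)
The plan is to follow the inductive CLT argument of \citet{chopin2004}, proving the result by induction on the cycle index $t$ and tracking how the asymptotic variance is transformed by each of the three operations --- resampling, propagation (now with rejection), and reweighting --- that carry the particle system from cycle $t-1$ to cycle $t$. The base case $t=0$ is immediate: the particles are i.i.d.\ draws from $\pi_0$, so the classical Lindeberg--L\'evy CLT gives $\sqrt{N}(\hat{\ex}_0[\varphi]-\ex_0[\varphi]) \overset{d}{\to} N(0, \var_{\pi_0}(\varphi))$, fixing $V_0(\varphi)=\var_{\pi_0}(\varphi)$. For the inductive step I would assume that the weighted empirical measure at cycle $t-1$ satisfies a CLT with variance functional $V_{t-1}(\cdot)$ for every test function having a finite $(2+\zeta)$ absolute moment, and then push this Gaussian fluctuation through the three steps.

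First I would treat resampling: multinomial resampling of $\{\theta^{(j,t-1)}\}$ according to the normalized weights produces an equally weighted sample still targeting $\pi_{\epsilon_{t-1}}(\cdot\mid\sdp)$; conditionally on the cycle-$(t-1)$ particles the resampled ones are i.i.d.\ from the discrete empirical law, so a standard variance-decomposition argument shows the post-resampling variance is $V_{t-1}(\varphi)+\var_{\pi_{t-1}}(\varphi)$. The crux is the propagation-with-rejection step. I would isolate the effective transition kernel induced by lines 6--9 of Algorithm~\ref{alg: DP-PF}: given a resampled ancestor $\tilde\theta$, the ``repeat until accept'' loop returns an exact draw of $(\theta,x)$ from the normalized density proportional to $K_t(\tilde\theta,\theta)\,\mathbf{1}\{\pi_0(\theta)>0\}\,f(x\mid\theta)\,r_t(x)$, whose existence as a proper Markov kernel relies on assumptions A.4 and A.5 (finiteness of $\sup_x m_{\epsilon_t}$ and of $c_{\eta_t}(\sdp)$). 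Writing this kernel as $\widetilde K_t$, the marginal propagation law is $h_t=\pi_{t-1}\widetilde K_t$, and I would split the propagation error as
\[
\sqrt{N}\Big(\tfrac{1}{N}\textstyle\sum_i \varphi\big((\theta,x)^{(i,t)}\big)-\ex_{h_t}[\varphi]\Big) = A_N + B_N,
\]
where $B_N=\sqrt{N}\big(\tfrac1N\sum_i (\widetilde K_t\varphi)(\tilde\theta^{(i,t)})-\ex_{\pi_{t-1}}[\widetilde K_t\varphi]\big)$ is handled by the inductive hypothesis applied to the smoothed function $\widetilde K_t\varphi$, and $A_N=\sqrt N\big(\tfrac1N\sum_i[\varphi((\theta,x)^{(i,t)})-(\widetilde K_t\varphi)(\tilde\theta^{(i,t)})]\big)$ is the fresh noise injected by the mutation.

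The main obstacle is the term $A_N$. Because the propagated particles are, conditionally on the resampled ancestors, independent but \emph{not} identically distributed --- each drawn from its own $\widetilde K_t(\tilde\theta^{(i,t)},\cdot)$ --- the classical i.i.d.\ CLT does not apply, and I would instead invoke Lyapunov's CLT for triangular arrays conditionally on the past. This is precisely the role of the hypothesis $\ex_{h_t}|w_t\cdot\varphi|^{2+\zeta}<\infty$: it furnishes the uniform $(2+\zeta)$-moment control needed to verify Lyapunov's condition, so that $A_N$ is conditionally asymptotically Gaussian with mean zero and variance $\ex_{\pi_{t-1}}\big[\var_{\widetilde K_t(\tilde\theta,\cdot)}(\varphi)\big]$. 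Since $A_N$ is conditionally centered Gaussian given the $\sigma$-field generated by the resampled particles, while $B_N$ is driven by a CLT adapted to that same field, the two are asymptotically independent and their variances add. Finally I would pass through the reweighting step: writing the self-normalized estimator as the ratio $\hat{\ex}_t[\varphi]=\hat{h}_t[\tilde w_t\varphi]/\hat{h}_t[\tilde w_t]$ and applying the delta method with Slutsky's theorem converts the $h_t$-level CLT into one for $\ex_t[\varphi]=\ex_{\pi_{\epsilon_t}(\cdot\mid\sdp)}[\varphi]$, with centered integrand $w_t(\varphi-\ex_t[\varphi])$, which is integrable by the moment hypothesis. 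Composing the resampling, propagation, and reweighting transformations then yields the stated recursion for $V_t(\varphi)$, completing the induction.
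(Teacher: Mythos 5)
Your proposal is correct and follows essentially the same route as the paper: the Chopin-style induction over cycles, the identification of the propagation-with-rejection step as an exact draw from the conditioned kernel $K_t(\tilde\theta,\cdot\mid A_t)$, the Lyapunov triangular-array CLT for the fresh mutation noise (which is exactly where the $(2+\zeta)$-moment hypothesis is used), and the delta-method/Slutsky treatment of the self-normalized reweighting with centered integrand $w_t(\varphi-\ex_t[\varphi])$ all match the paper's argument and its recursive variance formulas. The only point you leave implicit --- that the smoothed function $\tilde\theta\mapsto\ex_{K_t(\tilde\theta,\cdot\mid A_t)}[w_t\varphi]$ must itself satisfy the level-$(t-1)$ moment condition so the inductive hypothesis applies to it --- is handled in the paper by the ``recursive measurability'' assumption (Definition~\ref{def: recursive measurability}), and is likewise glossed over in the theorem statement itself.
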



Theorem~\ref{thm: clt} is analogous to ensuring geometric ergodicity in DP-DA-MCMC. However, a relatively strong assumption has to be made for geometric ergodicity to hold for DP-DA-MCMC: there exists $0 < l \leq u < \infty$ such that $l\leq f(x \mid \theta)\leq u$ for all $\theta,x$ \citep{ju2022data}. In contrast, Theorem \ref{thm: clt} has a relatively mild assumption.  In addition, DP-DA-MCMC also requires $\epsilon$-DP, while DP-PF allows for other privacy frameworks, such as $\mu$-GDP and $\rho$-$z$CDP.

\subsection{Monte Carlo Standard Errors and Asymptotic Confidence Interval} \label{subsec: mcse and asymptotic ci}

Theorem~\ref{thm: clt} establishes $\sqrt{N}$- convergence and the existence of well-behaved Monte Carlo errors. While this ensures that Monte Carlo standard errors (MCSE) are well-behaved, we need to develop an estimator in order to quantify the MCSE and build confidence intervals for the true posterior quantities. To this end, we need the concept of effective sample size to account for the weighting of the particles.

\begin{definition}[Effective Sample Size: \citealp{kong1992}]
Suppose the target distribution is $\pi(x)$, the proposal distribution is $h(x)$, and the importance weight is $w(x) = \frac{\pi(x)}{h(x)}$. The quantity
    \begin{align*}
        ESS_N = \frac{N}{1+\var_{h} (w(X))},
    \end{align*}
is defined as the effective sample size with respect to the distributions $\pi$ and $h$.
\end{definition}

Effective Sample Size (ESS) is a rough measure to evaluate the quality of importance weights in importance sampling. It indicates that $N$ weighted samples from $h$ create the same variation as $ESS_N$ true samples from $\pi$ do. A similar concept, relative numerical efficiency (RNE), is introduced by \citet{Geweke1989}, where $RNE \approx ESS/N$ but can occasionally exceed $1$. In any case, the numerical standard error of the estimate is the fraction of $(ESS)^{-1/2}$ or $(RNE \cdot N)^{-1/2}$ of the posterior standard deviation.


\begin{restatable}[ESS estimate]{lma}{essEstimate} \label{lemma: ess estimate}
{Let $w_i=\frac{\pi(X_i)}{h(X_i)}$ and $\widehat{ESS}_N=\frac{(\sum_{i=1}^N w_i)^2}{\sum_{i=1}^N w_i^2}$. 
Assume $(X_i)_{i\ge1}$ has marginal distribution $h$ and satisfies 
$\frac{1}{N}\sum_{i=1}^N w_i \xrightarrow{p} \ex_h[w(X)]$ and 
$\frac{1}{N}\sum_{i=1}^N w_i^2 \xrightarrow{p} \ex_h[w(X)^2]$, 
with $\ex_h[w(X)^2]<\infty$. 
Then $\widehat{ESS}_N/N \xrightarrow{p} \frac{1}{1+\var_h(w(X))}$ as $N\to\infty$.}
\end{restatable}


{The moment assumptions require only a weak law of large numbers for $w(X)$ and $w(X)^2$, rather than independence of the samples. They hold for importance sampling and, more generally, for ergodic Markov chains in MCMC, where empirical averages converge to expectations under the stationary distribution. In particle filtering methods, although resampling induces dependence, standard consistency results ensure convergence of empirical averages under mild regularity conditions. Thus, independence is sufficient but not necessary for the consistency of $\widehat{ESS}_N$.} Moreover, the definition of $\widehat{ESS}_N$ is invariant up to normalization. Combined with Lemma~\ref{lemma: ess estimate}, this enables us to quantify the Monte Carlo standard error (MCSE) and construct an asymptotic confidence interval.

\begin{restatable}[Confidence Interval]{thm}{CI} \label{thm: CI}
    Let $\varphi: \Theta_t \times \mathcal{X}^n \to \mathbb{R}$ such that there exists $\zeta > 0$, $\ex_{h_t} |w_t \cdot \varphi|^{2+\zeta} < \infty$, 
    then
    \begin{enumerate}
        \item the consistent estimate of $V_t(\varphi)$ is
        \begin{align*}
            \hat{V_t}(\varphi) &\coloneqq \frac{\sum_{i=1}^N \textup{\textbf{w}}^{(i,t)} \left[ \varphi \left( (\theta, x)^{(i,t)} \right) - \hat{\ex}_t [\varphi] \right]^2}{\widehat{ESS_N}/N} \\
            &\quad+ \sum_{i=1}^N \textup{\textbf{w}}^{(i,t)} \left[\textup{\textbf{w}}^{(i,t)} -\frac{1}{N} \right] \left[ \varphi \left( (\theta, x)^{(i,t)} \right) - \hat{\ex}_t [\varphi] \right]^2.
        \end{align*}
        \item the asymptotic $(1-\alpha)$ confidence interval for $\ex_t[\varphi]$ is
    \begin{align*}
        \hat{\ex}_t[\varphi] \pm z_{1-\frac{\alpha}{2}}\sqrt{\frac{\hat{V}_t}{N}},
    \end{align*}
    where $z_{1-\frac{\alpha}{2}}$ is the $\left(1-\frac{\alpha}{2}\right)$ quantile of a standard normal distribution.
    \end{enumerate}
\end{restatable}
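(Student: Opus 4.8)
The plan is to prove the two parts in sequence, with Part 1 (consistency of $\hat{V}_t$) carrying essentially all of the work and Part 2 following as a short corollary of Theorem~\ref{thm: clt}. Granting for the moment that $\hat{V}_t(\varphi) \overset{p}{\to} V_t(\varphi)$ with $V_t(\varphi) > 0$, I would combine the CLT $\sqrt{N}(\hat{\ex}_t[\varphi] - \ex_t[\varphi]) \overset{d}{\to} N(0, V_t(\varphi))$ with Slutsky's theorem to conclude that the studentized statistic $\sqrt{N}(\hat{\ex}_t[\varphi] - \ex_t[\varphi])/\sqrt{\hat{V}_t(\varphi)} \overset{d}{\to} N(0,1)$. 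Inverting the event $\{|\cdot| \leq z_{1-\alpha/2}\}$ then yields the stated interval with asymptotic coverage $1-\alpha$. Thus the entire burden is to establish Part 1.

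For Part 1 I would start from the recursive expression for $V_t(\varphi)$ in \eqref{eq: variance of reweighting} and match its two contributions — the variance of the importance-weighted (reweighting) estimate and the extra dispersion injected by resampling — to the two summands of $\hat{V}_t(\varphi)$. The first summand is handled by combining two convergences. By Theorem~\ref{thm: posterior mean convergence} applied to $\varphi$ and $\varphi^2$ together with the continuous mapping theorem, the weighted sample variance $\sum_i \textbf{w}^{(i,t)}[\varphi((\theta,x)^{(i,t)}) - \hat{\ex}_t[\varphi]]^2$ converges almost surely to $\var_{\pi_{\epsilon_t}}(\varphi)$; strong consistency also justifies substituting $\hat{\ex}_t[\varphi]$ for the unknown $\ex_t[\varphi]$ inside the squared deviations. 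By Lemma~\ref{lemma: ess estimate}, $\widehat{ESS_N}/N \to (1 + \var_{h_t}(w_t))^{-1}$, so dividing gives that the first summand converges to $\var_{\pi_{\epsilon_t}}(\varphi)\,(1 + \var_{h_t}(w_t))$.

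Next I would treat the correction summand $\sum_i \textbf{w}^{(i,t)}[\textbf{w}^{(i,t)} - 1/N][\varphi(\cdot) - \hat{\ex}_t[\varphi]]^2$. Writing it as $\sum_i (\textbf{w}^{(i,t)})^2[\varphi - \hat{\ex}_t[\varphi]]^2 - N^{-1}\sum_i \textbf{w}^{(i,t)}[\varphi - \hat{\ex}_t[\varphi]]^2$ exhibits it as a quantity of order $1/\widehat{ESS_N}$, which vanishes as $N \to \infty$; it therefore acts as a finite-sample correction that refines the estimate without altering the limit, and I would show it matches the residual resampling term of \eqref{eq: variance of reweighting} to leading order. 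Throughout, the hypothesis $\ex_{h_t}|w_t \cdot \varphi|^{2+\zeta} < \infty$ guarantees that the second moments entering the squared functionals are finite, so the strong law and the continuous mapping theorem apply; assembling the two summands via continuous mapping then yields $\hat{V}_t(\varphi) \overset{p}{\to} V_t(\varphi)$, completing Part 1.

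The main obstacle is the bookkeeping that ties this plug-in estimator to the \emph{recursively} defined $V_t(\varphi)$: I must verify that the ESS-scaled weighted variance genuinely reproduces the reweighting term of the recursion and that the correction summand matches the resampling term, rather than merely exhibiting plausible limits. Care is likewise needed in justifying the replacement of $\ex_t[\varphi]$ by $\hat{\ex}_t[\varphi]$ inside the deviations uniformly in $N$, which relies on the strong consistency of Theorem~\ref{thm: posterior mean convergence} for both $\varphi$ and $\varphi^2$ together with finiteness of the $2+\zeta$ moment under $h_t$ to control the resulting cross terms.
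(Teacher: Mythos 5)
Your Part 2 is fine and matches the paper: once $\hat{V}_t$ is shown consistent, Slutsky's theorem applied to Theorem~\ref{thm: clt} gives the studentized CLT and the interval. The gap is in Part 1, and it is twofold. First, you never establish the identity on which the paper's proof actually rests, namely
\begin{align*}
V_t(\varphi) \;=\; \bigl(1+\var_{h_t}(w_t)\bigr)\,\var_{\pi_{\epsilon_t}}(\varphi) \;+\; r, \qquad r=\ex_t\!\left[\bigl(w_t-\ex_t(w_t)\bigr)\bigl(\varphi-\ex_t[\varphi]\bigr)^2\right].
\end{align*}
The paper does not obtain this by matching terms of the recursion \eqref{eq: variance of reweighting}; it re-derives $V_t$ directly as the first-order delta-method variance of the self-normalized ratio $\hat{\ex}_t[\varphi]=\sum_i\tilde{w}^{(i,t)}\varphi^{(i,t)}\big/\sum_i\tilde{w}^{(i,t)}$, starting from a bivariate CLT for $\bigl(\tfrac1N\sum_i\tilde{w}^{(i,t)}\varphi^{(i,t)},\ \tfrac1N\sum_i\tilde{w}^{(i,t)}\bigr)$, and then reduces the resulting quadratic form using $w_t=\pi_{\epsilon_t}/h_t$ and a second-order Taylor expansion of $f_2(x,y)=xy^2$. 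That algebra is the substance of the proof; your stated ``main obstacle'' (tying the plug-in estimator to the recursively defined $V_t$) is exactly the step you leave unresolved, and the route you propose for it is not the one available.

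Second, your treatment of the correction summand is internally inconsistent. You claim it is of order $1/\widehat{ESS_N}$ and vanishes as $N\to\infty$, while also claiming it matches ``the residual resampling term'' of the recursion to leading order; it cannot do both. If it vanishes, then $\hat{V}_t\to\bigl(1+\var_{h_t}(w_t)\bigr)\var_{\pi_{\epsilon_t}}(\varphi)$, which differs from $V_t$ by the generally nonzero remainder $r$, and consistency fails. Moreover, no resampling variance enters $V_t$ at all (the algorithm terminates at the reweighting step); in the paper the second summand is the estimator of $r$, the third-order remainder that \citet{kong1992} warns against discarding, and retaining it is the stated point of the theorem. Your analysis of the first summand --- convergence to $\bigl(1+\var_{h_t}(w_t)\bigr)\var_{\pi_{\epsilon_t}}(\varphi)$ via Theorem~\ref{thm: posterior mean convergence} applied to $\varphi$ and $\varphi^2$ together with Lemma~\ref{lemma: ess estimate} --- is correct and agrees with the paper. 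But without the delta-method identity above, and with the second summand either misattributed or discarded, the argument as written cannot conclude $\hat{V}_t\overset{p}{\to}V_t$; at minimum you must either prove that the second summand converges to $r$ or justify why $r$ may be dropped, and you do neither.
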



\begin{remark}
In most of the literature, the second term of $V_t (\varphi)$, also known as the remainder term, is omitted under the assumption that it is negligible. However, \citet{kong1992} emphasizes that the approximation may become inaccurate if the remainder term is not sufficiently small. Since we can estimate the remainder term by the second term in $\hat{V}_t (\varphi)$, we explicitly include it in our estimation in Theorem~\ref{thm: CI} to address this potential issue.
\end{remark}

\subsection{Marginal Likelihood of Privatized Data} \label{subsec: evidence}

{In classical Sequential Monte Carlo (SMC) for Bayesian inference, particles represent parameter values and are reweighted according to the likelihood of the observed data. This sequential reweighting implicitly accumulates the normalizing constants of the intermediate distributions, and their product yields the marginal likelihood of the observed data \citep{del2006sequential}. Since our inference is based on the privatized statistic $\sdp$, the corresponding normalizing constants define the marginal likelihood of $\sdp$, a differentially private counterpart to the classical model evidence with a similar telescoping structure.

A key structural property of Algorithm~\ref{alg: DP-PF} makes this construction clean. As noted in Section~\ref{subsec: algorithm}, the data and mechanism densities cancel in the weight update (line~11), so $m_{\epsilon_t}$ enters the algorithm only through the propagation step (line~7), not through reweighting. Consequently, ratios of mechanism densities can be evaluated as importance-sampling expectations under the intermediate private posteriors using the particle approximation already maintained by Algorithm~\ref{alg: DP-PF}, with no additional simulation.

For notational clarity, let
\begin{align*}
    \kappa_t(\theta, x) \;=\; \pi_0(\theta)\, f(x \mid \theta)\, m_{\epsilon_t}(\sdp \mid x), \qquad
    c_{\epsilon_t}(\sdp) \;=\; \int_{\Theta} \int_{\mathcal{X}^n} \kappa_t(\theta, x) \, dx \, d\theta,
\end{align*}
so that $\pi_{\epsilon_t}(\theta, x \mid \sdp) = \kappa_t(\theta, x) / c_{\epsilon_t}(\sdp)$. We adopt the convention $m_{\epsilon_0}(\sdp \mid x) \equiv 1$ and $c_{\epsilon_0}(\sdp) = 1$ as a result, corresponding to the degenerate ``no-release'' baseline ($\epsilon_0 = 0$). Under this convention, $\pi_{\epsilon_0}(\theta, x \mid \sdp) = \pi_0(\theta) f(x \mid \theta)$ coincides with the prior predictive distribution, which is exactly the distribution sampled at the initialization step (line~4) of Algorithm~\ref{alg: DP-PF}. We further define the incremental evidence ratio
\begin{align*}
    \rho_t \;\coloneqq\; \frac{c_{\epsilon_t}(\sdp)}{c_{\epsilon_{t-1}}(\sdp)}, \qquad t = 1, \ldots, T,
\end{align*}
so that the marginal likelihood factorizes as $c_{\epsilon_T}(\sdp) = \prod_{t=1}^T \rho_t$.}

\begin{restatable}[Marginal likelihood and its consistent estimation]{thm}{evidence} \label{thm: evidence}
{Let $\{\epsilon_t\}_{t=0}^T$ be the schedule of Algorithm~\ref{alg: DP-PF} with $\epsilon_0 = 0$. For $t = 1, \ldots, T$, $\ex_{h_t}\!\left[\, w_t \cdot{m_{\epsilon_t}(\sdp \mid x)}/{m_{\epsilon_{t-1}}(\sdp \mid x)} \,\right] \;<\; \infty$. Then,
\begin{enumerate}
    \item Each incremental evidence ratio admits the representation
    \begin{align*}
        \rho_t
        \;=\;
        \ex_{\pi_{\epsilon_{t-1}}(\theta, x \mid \sdp)}\!\left[\frac{m_{\epsilon_t}(\sdp \mid x)}{m_{\epsilon_{t-1}}(\sdp \mid x)}\right],
    \end{align*}
    so that $c_{\epsilon_T}(\sdp) = \prod_{t=1}^T \rho_t$.
    \item Each $\rho_t$ is consistently estimated, as $N \to \infty$, by
    \begin{align*}
        \hat{\rho}_t
        \;=\;
        \sum_{i=1}^{N} \textup{\textbf{w}}^{(i,t)} \,
        \frac{m_{\epsilon_t}(\sdp \mid x^{(i,t)})}{m_{\epsilon_{t-1}}(\sdp \mid x^{(i,t)})},
    \end{align*}
    where $\textup{\textbf{w}}^{(i,t)}$ and $x^{(i,t)}$ are the normalized weights and accepted synthetic dataset produced at iteration~$t$ of Algorithm~\ref{alg: DP-PF}. Consequently, $\widehat{c_{\epsilon_T}(\sdp)} = \prod_{t=1}^T \hat{\rho}_t$ is a consistent estimator of $c_{\epsilon_T}(\sdp)$.
\end{enumerate}}
\end{restatable}

{Theorem~\ref{thm: evidence} is the differentially private analogue of the marginal-likelihood identity for tempered SMC samplers \citep{chopin2004, del2006sequential}. The consistency of $\hat{\rho}_t$ is an immediate application of Theorem~\ref{thm: posterior mean convergence} with $\varphi(\theta, x) = m_{\epsilon_t}(\sdp \mid x) / m_{\epsilon_{t-1}}(\sdp \mid x)$. The consistency of $\widehat{c_{\epsilon_T}(\sdp)}$ then follows by the continuous mapping theorem. Note that the integrability condition is mild. By Remark~\ref{remark: consistency assumption}, when $\Theta_t$ is compact and $w_t$ is bounded, it suffices that $m_{\epsilon_t} / m_{\epsilon_{t-1}}$ be bounded as a function of $x$. For the Laplace mechanism,
\begin{align*}
\frac{m_{\epsilon_t}(\sdp \mid x)}{m_{\epsilon_{t-1}}(\sdp \mid x)}
\le \frac{\epsilon_t}{\epsilon_{t-1}} \exp\left( \frac{\epsilon_{t-1}-\epsilon_t}{\Delta f} |\sdp - f(x)| \right) 
\le \frac{\epsilon_t}{\epsilon_{t-1}}
\end{align*}
is uniformly bounded in $x$ whenever $\epsilon_t \ge \epsilon_{t-1}$. Hence the condition is automatically satisfied under our DP-PF tempering schedule.}

{Algorithm~\ref{alg: DP-PF} already produces $\textup{\textbf{w}}^{(i,t)}$ (line~12) and $x^{(i,t)}$ (line~7) at each iteration, so the only extra computation is the log-density difference $\log m_{\epsilon_t}(\sdp \mid x^{(i,t)}) - \log m_{\epsilon_{t-1}}(\sdp \mid x^{(i,t)})$. Given $\widehat{c_{\epsilon_T}(\sdp)}$ for two competing models, the DP Bayes factor is
\begin{align*}
    \mathrm{BF}_{12}^{\mathrm{DP}} \;=\; \frac{p_1(\sdp)}{p_2(\sdp)} \;\approx\; \frac{\widehat{c_{\epsilon_T}^{(1)}(\sdp)}}{\widehat{c_{\epsilon_T}^{(2)}(\sdp)}},
\end{align*}
which is well-defined provided both models use the same privatized statistic and privacy mechanism; the prior on $\theta$ and the data model $f$ may differ. Section~\ref{sec: simulation} demonstrates this on a Gaussian mixture model selection problem.}

\section{Simulations} \label{sec: simulation}

In this section, we investigate applications of our approach proposed in Section~\ref{subsec: algorithm} to several common settings in DP Bayesian analysis, demonstrating the efficacy of the algorithm. 

\textbf{General setup:} All resampling steps are conducted with multinomial resampling, and all perturbation kernels are Gaussian. { The replicates are 1000 throughout.} Simulations were conducted on a computing cluster with multi-core compute nodes, 256 GB memory, and high-speed interconnect. We utilize 100 cores (1 node) per replicate per simulation with a maximum wall-time of 4 hours.

\subsection{Consistency} \label{subsec: simulation on consistency}

We demonstrate the consistency of Algorithm~\ref{alg: DP-PF} under a location-scale normal distribution, linear regression {and a Gaussian mixture model}, all with the Laplace mechanism.

\subsubsection{Location-scale Normal} \label{sec: simulation of location-scale Gaussian}

Suppose the data model $f$ is $y_1,\ldots,y_n \ | \ \mu, \sigma^2 \overset{iid}{\sim} N \left( \mu, \sigma^2 \right)$ with prior models $\mu \sim N\left( m, \tau^2 \right)$ and $\sigma^2 \sim \text{Inv. Gamma} \left( \alpha, \beta \right)$, where $\alpha$ is the shape parameter and $\beta$ is the scale parameter. The private data $\tilde{y}_i$ is obtained by first clamping $y_i$ to the interval $[-5,5]$ and subsequently rescaling it to the interval $[-1,1]$. Let $L_1, L_2 \overset{iid}{\sim} \text{Lap} (\frac{\Delta}{\epsilon})$, where $\Delta = 3$. Then, the privatized statistics $\sdp$ are $(\sum \tilde{y}_i + L_1, \sum \tilde{y}_i^2 + L_2)$, which satisfies $\epsilon$-DP. 

\textbf{Simulation setup:} True parameters are set as $\mu^* = 1, \sigma^*=1$, $\epsilon\in\{0.1,0.5,1,2\}$ and $n\in\{100,300,500,1000\}$. We generate a single value of $\sdp$ for each pair of $(\epsilon,n)$. The number of particles $N$ is set to achieve an average ESS of approximately 100 or higher. The hyperparameters $m = 0, \tau^2 = 4^2, \alpha = 1, \beta = 0.5$. We use an adaptive kernel with variance estimation and $T=2$ ($\epsilon_t = (0.5 \epsilon, \epsilon)$).

\begin{table}[t]
\centering
\resizebox{0.95\textwidth}{!}{%
\begin{tabular}{cc ccc ccc}
\toprule
&
&
\multicolumn{3}{c}{DP-PF} &
\multicolumn{3}{c}{DP-Reject-ABC} \\
\cmidrule(lr){3-5}
\cmidrule(lr){6-8}
$\epsilon$ & Sample Size
& $\ex(\mu \mid \sdp)$ & $\ex(\sigma^2 \mid \sdp)$ & Sec./ESS$^\dagger$
& $\ex(\mu \mid \sdp)$ & $\ex(\sigma^2 \mid \sdp)$ & Sec./ESS$^\dagger$ \\
\midrule
\multirow{4}{*}{0.1}
& 100  & 0.052 (0.0198) & 4.268 (0.3584) & 0.05 & 0.100 (0.0038) & 2.284 (0.2292) & 0.00 \\
& 300  & 1.014 (0.0146) & 3.902 (0.1316) & 0.05 & 1.722 (0.0020) & 2.215 (0.0112) & 0.00 \\
& 500  & 0.721 (0.0085) & 2.275 (0.1667) & 0.04 & 1.072 (0.0011) & 1.062 (0.0033) & 0.00 \\
& 1000 & 1.145 (0.0053) & 1.585 (0.0404) & 0.04 & 1.271 (0.0007) & 1.788 (0.0037) & 0.04 \\
\midrule
\multirow{4}{*}{0.5}
& 100  & 0.552 (0.0067) & 1.580 (0.0558) & 0.04 & 0.795 (0.0012) & 0.669 (0.0022) & 0.00 \\
& 300  & 1.123 (0.0036) & 0.988 (0.0105) & 0.03 & 1.171 (0.0005) & 1.233 (0.0022) & 0.03 \\
& 500  & 1.045 (0.0018) & 0.691 (0.0041) & 0.03 & 1.043 (0.0003) & 0.942 (0.0013) & 0.06 \\
& 1000 & 1.081 (0.0008) & 0.785 (0.0027) & 0.04 & 1.034 (0.0002) & 1.329 (0.0008) & 0.46 \\
\midrule
\multirow{4}{*}{1.0}
& 100  & 0.794 (0.0042) & 0.925 (0.0268) & 0.03 & 0.986 (0.0006) & 0.579 (0.0014) & 0.01 \\
& 300  & 1.126 (0.0015) & 0.729 (0.0038) & 0.03 & 1.096 (0.0003) & 1.130 (0.0013) & 0.09 \\
& 500  & 1.064 (0.0008) & 0.626 (0.0022) & 0.03 & 1.030 (0.0002) & 1.010 (0.0008) & 0.22 \\
& 1000 & 1.037 (0.0004) & 0.865 (0.0022) & 0.05 & 1.010 (0.0001) & 1.218 (0.0004) & 1.58 \\
\midrule
\multirow{4}{*}{2.0}
& 100  & 1.004 (0.0026) & 0.632 (0.0126) & 0.03 & 1.066 (0.0003) & 0.564 (0.0010) & 0.02 \\
& 300  & 1.099 (0.0006) & 0.660 (0.0022) & 0.03 & 1.062 (0.0002) & 1.069 (0.0007) & 0.29 \\
& 500  & 1.051 (0.0004) & 0.713 (0.0019) & 0.04 & 1.025 (0.0002) & 1.038 (0.0014) & 0.90 \\
& 1000 & 1.006 (0.0002) & 1.030 (0.0014) & 0.07 & 0.999 (0.0001) & 1.150 (0.0002) & 5.74 \\
\bottomrule
\end{tabular}%
}
\caption{Comparison of DP-PF and DP-Reject-ABC based on {\color{red!90!black}the averaged posterior means of $\mu$ and $\sigma^2$ over 1000 simulations} for the location-scale normal model, and the seconds spent per ESS {\color{red!90!black} with the values in the parentheses representing MCSEs for each quantity.} ($^\dagger$Parallelized over particles; 0.00 denotes runtimes $< 0.005$ seconds.)}
\label{table:type1}
\end{table}

Table~\ref{table:type1} compares the performance of DP-PF with DP-Reject-ABC, an exact sampling method. Across various privacy levels and sample sizes, DP-PF achieves comparable estimates for both $\mu$ and $\sigma$. The computation time per effective sample size is also similar. This demonstrates that DP-PF provides accuracy on par with the exact method while maintaining computational efficiency. While DP-Reject-ABC is effective in this setting, in the next section, we see that it does not scale well when applied to more complex settings.

\begin{figure}[t]
    \centering
    \includegraphics[scale=0.45]{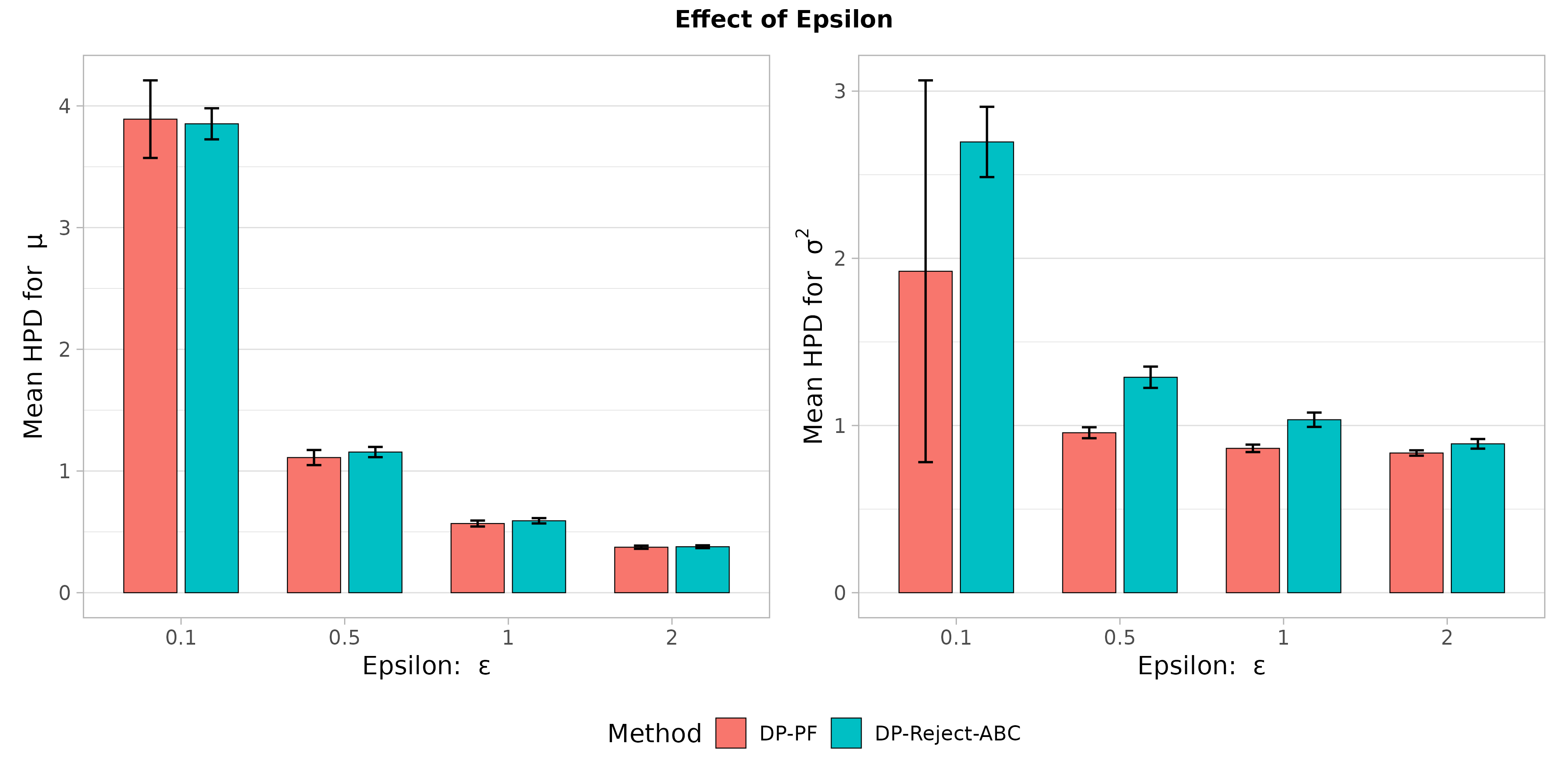}
    \caption{Comparison of the averaged lengths of the $90\%$ from DP-PF HPDs with the MCSEs of the posterior distributions of $\mu$ and $\sigma^2$ given $\sdp$, under different values of $\epsilon$.}
    \label{fig: HPD_epsilon}
\end{figure}

{Figure~\ref{fig: HPD_epsilon} illustrates the $90\%$ highest posterior density (HPD) intervals for the posterior distributions of $\mu$ and $\sigma^2$.  We compare DP-PF with DP-Reject-ABC, which serves as an exact posterior sampler. The privacy budget $\epsilon$ ranges from $0.1$ to $2$, and DP-PF uses $1000$ particles. Figure~\ref{fig: HPD_epsilon} shows that the HPD lengths produced by DP-PF exhibit greater variability when $\epsilon$ is small. As $\epsilon$ increases, the HPD lengths quickly become stable and closely match those obtained by DP-Reject-ABC, particularly for $\epsilon \geq 0.5$.}

\subsubsection{Linear regression: Non-conjugate prior} \label{sec: lm:non-conjugate}

Suppose the data model $f$ is $y_i \mid x_i, \beta, \tau \sim N((1,x_i) \beta, \tau^{-1})$ and $x_i \mid \mu, \phi \sim N(\mu, \phi^{-1})$ with prior models $\beta \sim t_{2} \left( m, V, df \right)$, $\tau \sim \text{Weibull}\left( sc, sh \right)$, $\mu \sim t_{p}\left(  \theta=0_p, \Sigma=I_p, df=2 \right)$, and $\phi \sim \text{folded} \ t_{p}\left(  a=0_p, B=I_p, df=2 \right)$. The private data $\tilde{x}_i$ and $\tilde{y}_i$ are obtained by clamping $x_i$ and $y_i$ to $[-5,5]$ and subsequently rescaling them to $[-1,1]$. Let $L_1,..,L_5 \overset{iid}{\sim} \text{Lap} (\frac{\Delta}{\epsilon})$, where $\Delta = 8$ and $p=1$. Then, the privatized statistics $\sdp$ are $(\sum \tilde{y}_i + L_1, \sum \tilde{x}_i \tilde{y}_i + L_2, \sum \tilde{y}_i^2 + L_3, \sum \tilde{x}_i + L_4, \sum \tilde{x}_i^2 + L_5)$, which satisfies $\epsilon$-DP. In addition, the conjugate prior case is also implemented but deferred to \Cref{appendix: simulation}.
    
\textbf{Simulation setup:} true values of the parameters are set as $\beta^* = (0, 2)^\top$, $\tau^* = 1$, $\mu^* = 1$, and $\phi^* = 1$. The number of particles $N=100,\,\epsilon\in\{0.5,1,2\}$ and $n\in\{300,500\}$. The hyperparameters are set as $m=0_{2}, V=I_{2}, df=2, sc=1.25, sh=2, \theta=0, \Sigma=1, a=0, B=1$. We use a kernel with a pre-specified scale and $T=10$ ($\epsilon_t = (0.1 \epsilon, 0.2 \epsilon,\ldots, \epsilon)$){; results with different schedules are included in Section~\ref{appendix: simulation}.}

\begin{table}[t]
\centering
\resizebox{0.9\textwidth}{!}{%
{\color{red!90!black}\begin{tabular}{cc cc cc cc}
\toprule
&
&
\multicolumn{2}{c}{DP-PF} &
\multicolumn{2}{c}{DP-DA-MCMC} &
\multicolumn{2}{c}{DP-Reject-ABC} \\
\cmidrule(lr){3-4}
\cmidrule(lr){5-6}
\cmidrule(lr){7-8}
$\epsilon$ & Sample Size
& $\ex(\beta_1 \mid \sdp)$ & Sec./ESS$^\dagger$
& $\ex(\beta_1 \mid \sdp)$ & Sec./ESS
& $\ex(\beta_1 \mid \sdp)$ & Sec./ESS$^\dagger$ \\
\midrule
\multirow{2}{*}{0.5}
& 300 & 1.973 (0.0128)    & 0.293 & 1.871 (0.0324)    & 629.809  & 2.083 (0.0031)    & 0.257 \\
& 500 & 1.772 (0.0079)    & 0.638 & 1.818 (0.0187)    & 1173.600 & 1.849 (0.0023)    & 7.079 \\
\midrule
\multirow{2}{*}{1.0}
& 300 & 1.983 (0.0085)    & 0.250 & 1.957 (0.0167)    & 370.535  & 2.116 (0.0021)    & 2.053 \\
& 500 & 1.889 (0.0070)    & 1.696 & 1.933 (0.0107)    & 857.429  & 1.934 (0.0016)    & 70.577 \\
\midrule
\multirow{2}{*}{2.0}
& 300 & 2.048 (0.0076)    & 0.634 & 2.075 (0.0113)    & 255.562  & 2.142 (0.0015)    & 23.235 \\
& 500 & 1.980 (0.0061)    & 7.481 & 1.995 (0.0071)    & 605.773  & $-$               & $-$ \\
\bottomrule
\end{tabular}}
}
\caption{{\color{red!90!black}Comparison of DP-PF, DP-DA-MCMC, and DP-Reject-ABC based on the averaged posterior mean of $\beta_1$ and seconds per effective sample size (Sec./ESS) over 1000 simulations for linear regression under non-conjugate priors, with values in parentheses representing MCSEs for the posterior mean estimates. ($^\dagger$ Parallelized over particles.)}}
\label{table:type2_nonconjugate}
\end{table}

{Table~\ref{table:type2_nonconjugate} demonstrates that DP-PF maintains accurate posterior mean estimation across a range of sample sizes and privacy budgets, even under a challenging heavy-tailed $t$ prior with two degrees of freedom, whose variance is undefined. This prior is especially difficult for Monte Carlo--based DP methods because clamping may provide insufficient information to correct proposals far from the observed data. Since $\sdp$ is fixed for each $(n,\epsilon)$ combination, posterior means may vary across settings when a ``bad seed'' generates an unrepresentative $\sdp$. Nevertheless, DP-PF remains computationally efficient. In contrast, DP-DA-MCMC has approximately twice the MCSEs and requires roughly 100 times more computation due to the additional Metropolis--Hastings step and its inability to parallelize. Although DP-Reject-ABC is theoretically exact, it failed to complete within 4 hours for $n=500$ and $\epsilon=2$ and is therefore omitted. \Cref{appendix: simulation} reports analogous results under conjugate priors, where DP-PF still outperforms DP-DA-MCMC under parallelization, although by a smaller margin. We also report the performance and computation time of DP-PF under an exponential $\epsilon_t$ schedule.

Table~\ref{table:type2_nonconjugate_hpd} shows that DP-PF produces $90\%$ HPD widths that are consistently closer to those of DP-Reject-ABC than those produced by DP-DA-MCMC. This indicates that DP-PF more accurately captures posterior distributions in addition to providing accurate point estimates. The advantage is particularly pronounced when the privacy budget is moderate or large, where DP-PF closely tracks the benchmark while retaining its computational efficiency.}

\begin{table}[t]
\centering
\resizebox{0.65\textwidth}{!}{%
{\color{red!90!black}\begin{tabular}{cc ccc}
\toprule
&
&
DP-PF & DP-DA-MCMC & DP-Reject-ABC \\
\cmidrule(lr){3-3}
\cmidrule(lr){4-4}
\cmidrule(lr){5-5}
$\epsilon$ & Sample Size
& 90\% HPD & 90\% HPD & 90\% HPD \\
\midrule
\multirow{2}{*}{0.5}
& 300 & 2.499 (0.0139) & 4.485 (0.0625)  & 2.740 (0.0101) \\
& 500 & 2.097 (0.0120) & 3.009 (0.0274)  & 2.186 (0.0078) \\
\midrule
\multirow{2}{*}{1.0}
& 300 & 1.935 (0.0093) & 2.932 (0.0318)  & 1.864 (0.0063) \\
& 500 & 1.435 (0.0072) & 1.918 (0.0176)  & 1.362 (0.0042) \\
\midrule
\multirow{2}{*}{2.0}
& 300 & 1.388 (0.0059) & 1.899 (0.0189)  & 1.244 (0.0038) \\
& 500 & 0.947 (0.0045) & 1.164 (0.0084)  & $-$ \\
\bottomrule
\end{tabular}%
}}
\caption{{\color{red!90!black}Comparison of DP-PF, DP-DA-MCMC, and DP-Reject-ABC based on the averaged 90\% HPD widths of $\beta_1$ over 1000 simulations for linear regression under non-conjugate priors, with values in parentheses representing MCSEs.}}
\label{table:type2_nonconjugate_hpd}
\end{table}

\subsubsection{Gaussian mixture model} \label{sec: Gaussian mixture}

{Suppose the data model $f$ is $x_1, \ldots, x_n \mid \mu_1, \mu_2, \sigma_1^2, \sigma_2^2, p \overset{iid}{\sim} p \cdot N(\mu_1, \sigma_1^2) + (1-p) \cdot N(\mu_2, \sigma_2^2)$ with prior models $\mu_1, \mu_2 \sim N(m, \tau^2)$ with the identifiability constraint $\mu_1 < \mu_2$ enforced by taking the minimum and maximum of two independent draws, $\sigma_k^2 \sim \text{Inv. Gamma}(\alpha, \beta)$ for $k = 1, 2$, and $p \sim \text{Beta}(a, b)$. Let $L_1, \ldots, L_K \overset{iid}{\sim} \text{Lap}\left(\frac{M}{n\epsilon}\right)$, where $K = \sum_{j=1}^{M} 2^j = 2^{M+1} - 2$ is the total number of histogram bins across all $M$ layers. The privatized statistic $\sdp$ is the $M$-layer hierarchical histogram, where layer $j$ partitions the clamping region $[L, U]$ into $2^j$ equal bins and each layer receives an equal privacy budget of $\epsilon/M$, giving a Laplace noise scale of $M/(n\epsilon)$ per layer. This satisfies $\epsilon$-DP by the composition property.

\textbf{Simulation setup:} True parameters are set as $\mu_1^* = 1$, $\mu_2^* = 3$, $\sigma_1^{*2} = \sigma_2^{*2} = 0.25$, $p^* = 0.4$, with clamping bounds $L = 0$ and $U = 5$. We use $M \in \{2, 3\}, \epsilon \in \{0.5, 1, 2\}, n \in \{100, 200, 500, 1000\}$, and $N = 100$ particles. The hyperparameters are $m = 2.5$, $\tau^2 = 1.5^2$, $\alpha = 3$, $\beta = 0.5$, $a = b = 1$.} We use an adaptive scaling of the perturbation kernel based on the covariance structure of particles in each iteration and $T=10$ ($\epsilon_t = (0.1 \epsilon, 0.2 \epsilon,\ldots, \epsilon)$).

\begin{table}[t]
\centering
\resizebox{0.95\textwidth}{!}{
{\color{red!90!black}
\begin{tabular}{ccc ccccc}
\toprule
&
&
&
\multicolumn{5}{c}{DP-PF} \\
\cmidrule(lr){4-8}
$\epsilon$ & Layers & Sample Size
& $\ex(\mu_1 \mid \sdp)$ & $\ex(\mu_2 \mid \sdp)$ & $\ex(\sigma_1^2 \mid \sdp)$ & $\ex(\sigma_2^2 \mid \sdp)$ & $\ex(p \mid \sdp)$ \\
\midrule
\multirow{8}{*}{0.50} & \multirow{4}{*}{2}
& 100  & 1.218 (0.114) & 3.044 (0.089) & 0.812 (0.102) & 0.748 (0.103) & 0.475 (0.044) \\
& & 200  & 0.961 (0.068) & 2.927 (0.039) & 0.510 (0.075) & 0.349 (0.030) & 0.419 (0.022) \\
& & 500  & 0.921 (0.027) & 2.962 (0.015) & 0.347 (0.021) & 0.259 (0.011) & 0.373 (0.009) \\
& & 1000 & 0.913 (0.016) & 2.997 (0.012) & 0.294 (0.012) & 0.268 (0.007) & 0.399 (0.006) \\
\cmidrule(lr){2-8}
& \multirow{4}{*}{3}
& 100  & 1.264 (0.121) & 2.967 (0.092) & 0.835 (0.109) & 0.724 (0.105) & 0.482 (0.045) \\
& & 200  & 1.105 (0.049) & 2.966 (0.035) & 0.464 (0.067) & 0.307 (0.037) & 0.457 (0.022) \\
& & 500  & 0.944 (0.022) & 2.995 (0.012) & 0.419 (0.030) & 0.272 (0.012) & 0.389 (0.009) \\
& & 1000 & 0.894 (0.011) & 2.959 (0.008) & 0.287 (0.012) & 0.277 (0.007) & 0.386 (0.005) \\
\bottomrule
\end{tabular}
}}
\caption{Average posterior estimates from DP-PF for the Gaussian Mixture Model parameters $\mu_1, \mu_2, \sigma_1^2, \sigma_2^2$, and $p$ across 1000 simulations, with the values in parentheses representing the MCSEs for each quantity}
\end{table}

\begin{table}[t]
\centering
\resizebox{0.95\textwidth}{!}{
{\color{red!90!black}
\resizebox{0.95\textwidth}{!}{%
\begin{tabular}{ccc ccccc}
\toprule
&
&
&
\multicolumn{5}{c}{DP-DA-MCMC} \\
\cmidrule(lr){4-8}
$\epsilon$ & Layers & Sample Size
& $\ex(\mu_1 \mid \sdp)$ & $\ex(\mu_2 \mid \sdp)$ & $\ex(\sigma_1^2 \mid \sdp)$ & $\ex(\sigma_2^2 \mid \sdp)$ & $\ex(p \mid \sdp)$ \\
\midrule
\multirow{8}{*}{0.5} & \multirow{4}{*}{2}
& 100  & 0.999 (0.083) & 3.019 (0.036) & 1.196 (0.155) & 0.348 (0.040) & 0.577 (0.015) \\
& & 200  & 0.869 (0.097) & 2.829 (0.023) & 0.817 (0.100) & 0.158 (0.005) & 0.506 (0.011) \\
& & 500  & 0.607 (0.172) & 2.915 (0.030) & 1.053 (0.289) & 0.156 (0.009) & 0.485 (0.022) \\
& & 1000 & 0.491 (0.277) & 2.954 (0.062) & 1.138 (0.424) & 0.183 (0.014) & 0.532 (0.039) \\
\cmidrule(lr){2-8}
& \multirow{4}{*}{3}
& 100  & 1.241 (0.026) & 2.961 (0.011) & 0.591 (0.053) & 0.231 (0.014) & 0.536 (0.010) \\
& & 200  & 1.256 (0.008) & 2.951 (0.005) & 0.431 (0.014) & 0.147 (0.003) & 0.507 (0.004) \\
& & 500  & 0.856 (0.039) & 3.026 (0.006) & 1.116 (0.105) & 0.202 (0.005) & 0.485 (0.010) \\
& & 1000 & 0.891 (0.002) & 2.955 (0.002) & 0.340 (0.012) & 0.255 (0.002) & 0.395 (0.002) \\
\bottomrule
\end{tabular}%
}
}}
\caption{Average posterior estimates from DP-DA-MCMC for the Gaussian Mixture Model parameters $\mu_1, \mu_2, \sigma_1^2, \sigma_2^2$, and $p$ across 1000 simulations, with the values in parentheses representing the MCSEs for each quantity}
\end{table}

{For $M=3$, both DP-PF and DP-DA-MCMC perform well, as the degrees of freedom of $\sdp$ exceed the number of parameters. For $M=2$, this is not the case, and because of this, we expect the posterior distribution to be multi-modal. In this case, DP-PF still recovers sufficient information from $\sdp$ and yields accurate posterior estimates for all parameters, whereas DP-DA-MCMC recovers only one location parameter well.}

\subsection{Variance of the Consistent Estimator} \label{subsec: simulation on variance}

In addition to consistency, we will demonstrate with the linear regression model introduced in Section~\ref{subsec: simulation on consistency} that Algorithm~\ref{alg: DP-PF} has a $ 95\%$ asymptotic confidence interval that ensures its stability among chains. {Note that we carefully chose the values $\sdp$ so that the private posterior mean has a closed-form value, which we use to assess coverage.}

\textbf{Simulation setup:} We set $L=-5$, $U=5$ and the privatized statistics $\sdp = (0,0,50,0,50)$, same order as in Section~\ref{sec: lm:non-conjugate}. $n=500$, $\epsilon\in\{0.1,0.5,1,2\}$ and $N\in\{100,200,300,500,1000,1200,1400,1600\}$. The number of replicates is $100$. The hyperparameters $m = 0, \tau^2 = 4^2, \alpha = 1, \beta = 0.5$. 

{Note that there are no true parameters in this setting, as the data are not generated from any model. Instead, we construct $\sdp$ to have a convenient property: the posterior mean of $\beta_1$ is known to be $0$ under the symmetric and well-behaved prior. This construction of $\sdp$ allows us to evaluate coverage of the posterior mean, which would be difficult for a generic private summary statistic, as the posterior mean usually does not have a closed-form expression.}

\begin{figure}[htbp]
    \centering
    \includegraphics[scale=0.47]{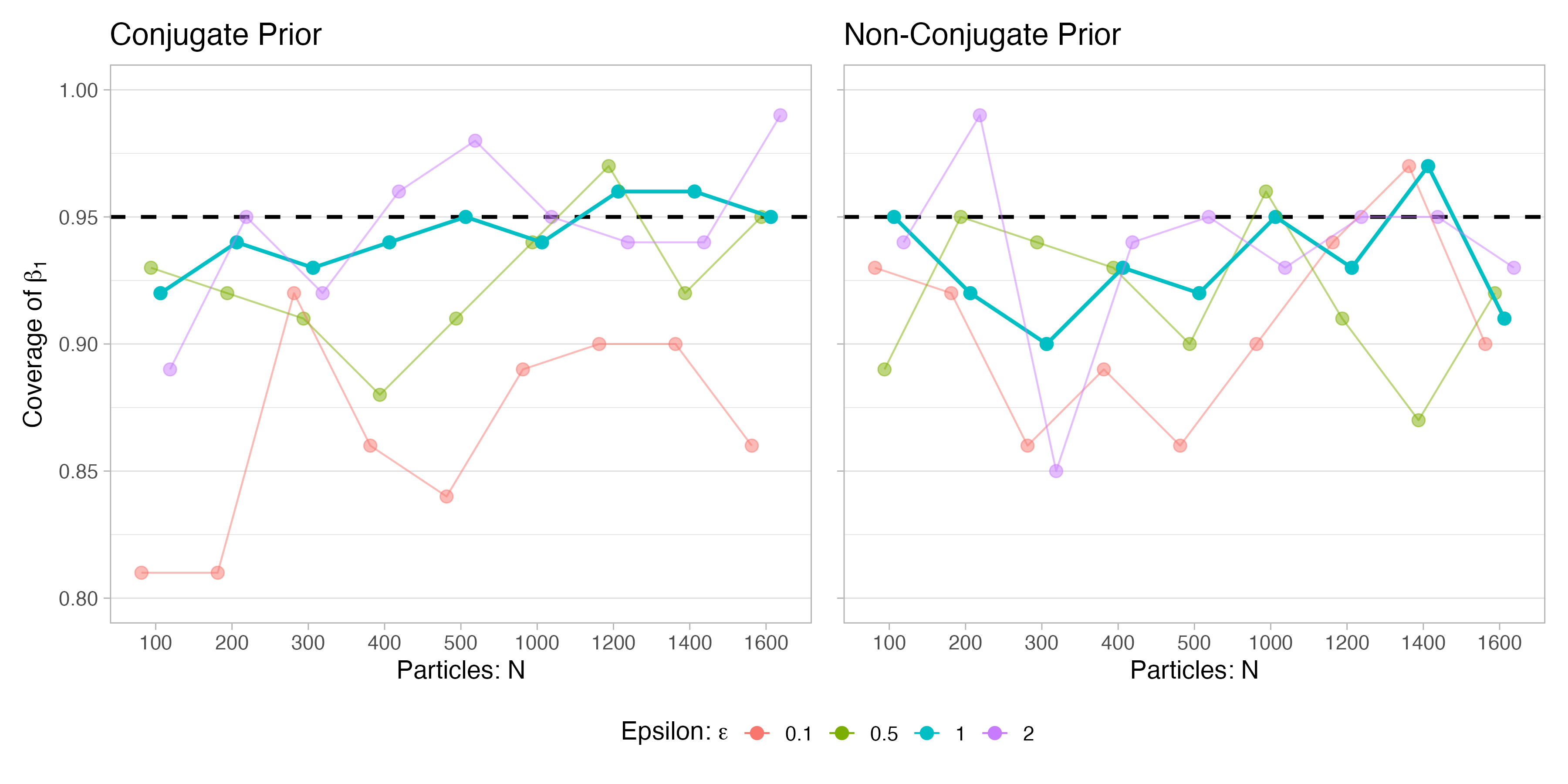}
    \caption{Comparison of conjugate and non-conjugate priors on the $\beta_1$ coverage in $N$ for linear regression via DP-PF. The estimates are based on Theorem~\ref{thm: CI}, which provides $95\%$ asymptotic confidence intervals for the posterior mean of $\beta_1$.}
    \label{fig: beta1 coverage in N}
\end{figure}

Figure~\ref{fig: beta1 coverage in N} illustrates that the coverage of the $95\%$ confidence interval (CI) for the posterior mean of $\beta_1$ generally approaches the nominal level as the number of particles $N$ increases. For $\epsilon \geq 0.5$, the coverage stabilizes around 0.95 when $N > 1000$, indicating that DP-PF achieves reliable coverage with a sufficiently large number of particles. Additionally, as $N$ increases, the coverage improves while the interval width decreases.  

Notably, smaller values of $\epsilon$ require a larger $N$ to achieve the same coverage and interval width as larger $\epsilon$. This observation aligns with the findings in the DP-DA-MCMC, which noted worse mixing of their chain for small $\epsilon$. However, this is not a major concern for our method, as smaller $\epsilon$ values allow for faster computations, making it feasible to use a larger $N$ without significant computational overhead.

\begin{figure}[htbp]
    \centering
    \includegraphics[scale=0.47]{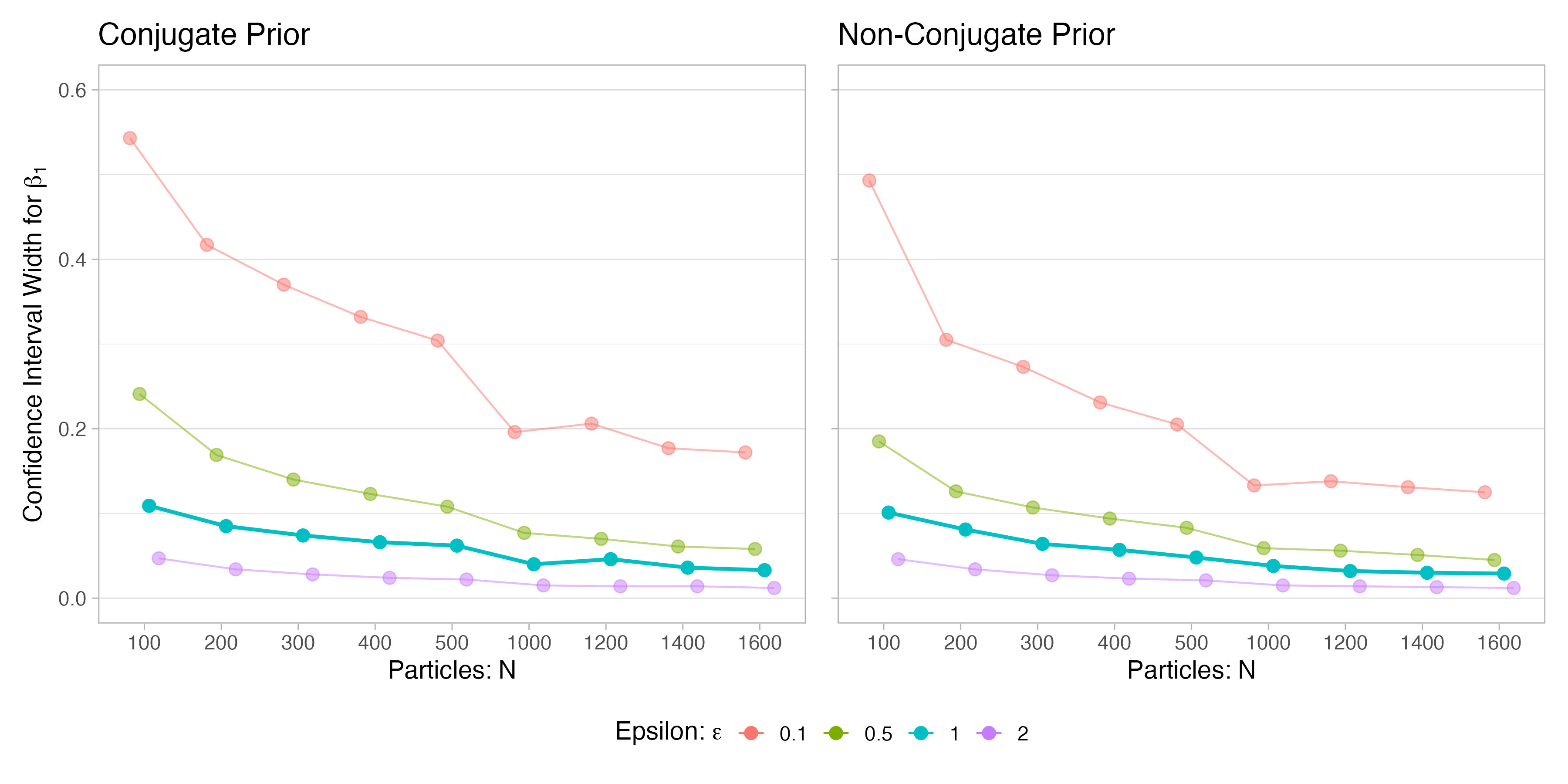}
    \caption{Comparison of conjugate and non-conjugate priors on the confidence interval width of $\beta_1$ in $N$ for linear regression via DP-PF. The estimates are based on Theorem~\ref{thm: CI}, which provides $95\%$ asymptotic confidence intervals for the posterior mean of $\beta_1$.}
    \label{fig: CI width of beta1 in N}
\end{figure}

Figure~\ref{fig: CI width of beta1 in N} shows that the average CI width decreases as $N$ increases. Notably, for $\epsilon \geq 0.5$, the CI width stabilizes below 0.05 when $N > 1000$, indicating that DP-PF achieves both precise and stable inference with a sufficient number of particles.

\subsection{Model Evidence}

{We consider a model comparison problem where three candidate Bayesian models are fit to the same privatized summary statistic $\sdp$. The goal is to assess which model best explains $\sdp$ under a fixed privacy mechanism.

The three competing models are: (M1) a single Gaussian model (Section~\ref{sec: Gaussian mixture}); (M2) a two-component Gaussian mixture model (Section~\ref{sec: lm:non-conjugate}); and (M3) a three-component Gaussian mixture model
\begin{align*}
x_i \mid \underline{\mu}, \underline{\sigma}^2, \underline{p} \overset{iid}{\sim} \sum_{k=1}^3 p_k , N(\mu_k, \sigma_k^2),
\end{align*}
with ordered means $\mu_1 < \mu_2 < \mu_3$, priors $\mu_k \sim N(m,\tau^2)$, $\sigma_k^2 \sim \mathrm{Inv\text{-}Gamma}(\alpha,\beta)$, and $(p_1,p_2,p_3)\sim \mathrm{Dir}(1,1,1)$. The true data-generating model is a 2-component Gaussian mixture (M2), and the privacy mechanism is the hierarchical histogram queries used in Section~\ref{sec: Gaussian mixture} with $M=2$.

\textbf{Simulation Setup:} All models are fit using the DP-PF algorithm with identical mechanisms, privacy budget, and hyperparameters as described above, with $n=100$, $\epsilon=0.5$, $N=100$ particles, and $T=10$ iterations (linear $\epsilon_t$ schedule). All three models use the same hyperparameters $m=2.5$, $\tau^2=1.5^2$, $\alpha=3$, $\beta=0.5$ for mean and variance parameters; M2 additionally sets $a=b=1$, while M3 uses a symmetric Dirichlet prior. Data are generated from a Gaussian mixture model with two components; that is, M2 is the true model. Results are summarized over 1,000 independent replications in Table~\ref{tab:model_comparison}, where $\hat{P}(\text{best model})$ denotes the proportion of replications in which each model achieves the highest log evidence.}

\begin{table}[ht]
\centering
\resizebox{0.8\textwidth}{!}{
\begin{tabular}{lccc}
\toprule
& M1 (Single Gaussian) & M2 (2-component) & M3 (3-component) \\
\midrule
Mean log evidence & 3.886 (0.0041) & 4.820 (0.0048) & 4.879 (0.0063) \\
$\hat{P}(\text{best model})$ & 0.000 (0.0000) & 0.397 (0.0154) & 0.603 (0.0154) \\
\bottomrule
\end{tabular}}
\caption{Model comparison via DP-PF marginal likelihood estimates over 1{,}000 replicates.}
\label{tab:model_comparison}
\end{table}

{Table~\ref{tab:model_comparison} shows that M2 is selected as the best model in approximately $40\%$ of the replications, while M3 is selected in about $60\%$ of the time; M1 is never selected as the best model across all replications, hence its zero MCSE. Moreover, the log evidences of M2 and M3 are close relative to their MCSEs, indicating that neither model is strongly favored over the other.}

\section{Data Analysis} \label{sec: data analysis}

In this section, we demonstrate that our particle filtering algorithm can be applied to a non-additive privacy mechanism as well. In particular, we derive the acceptance probability for a logistic regression, using the objective perturbation mechanism \citep{chaudhuri2011differentially} with the formulation given in \citet{awan2021structure}, which adds noise to the gradient of the log-likelihood before minimization, thereby creating a non-additive noise. Moreover, since we assume that the independent variable is protected, we privatize its first two moments via the K-norm mechanism \citep{hardt2010geometry, awan2021structure}, a multivariate additive-noise mechanism, which is similar to the setup used in \citet{awan2025simulation}.

We conduct an experiment with the 2021 Census Public Use Microdata Files (PUMF), which provide data on the characteristics of the Canadian population \citep{StatisticsCanada2022}. We analyze the dependence between age and retirement status by the inference of logistic regression under DP guarantees.

\subsection{Experiment Settings}

The PUMF dataset contains 980,868 records of individuals, representing $2.5\%$ of the Canadian population. Among the 144 variables in the dataset, we chose three variables: the census metropolitan area or census agglomeration of current residence (named CMA), the age (named AGEGRP), and the private retirement income (named Retir). We extracted the records of AGEGRP and Retir belonging to the people in Victoria (according to the values in CMA). After pre-processing the records with unavailable or not applicable values, the sample size is 10,473; Retir is transformed into a binary variable with Retir = 1 for retirement and Retir = 0 otherwise, and AGEGRP is scaled into the range of $[0,1]$. For the preliminary data analysis, we find the non-private regression coefficient between Retir and AGEGRP is $\hat{\beta} = (-3.825, 6.822)$ and the distribution of AGEGRP is near-uniform with softened edges.

\subsection{Logistic Regression} \label{subsec: logistic regression}

\textbf{Setup:} In the particle filter, the resampling steps are conducted with multinomial resampling, and the perturbation kernels are Gaussian. We use a kernel with pre-specified scale and $T=10$ ($\epsilon_t = (0.1 \epsilon, 0.2 \epsilon,\ldots, \epsilon)$ and $q_t = (0.02 q, 0.02 q,\ldots,0.02q, q)$). $N=200$. This additional $q_t$ schedule creates a similar effect to the $\epsilon_t$ schedule, contributing to flatter intermediate target distributions and thereby increasing the acceptance probability $r_t$. The same computing environment is used as in Section~\ref{sec: simulation}. 

\begin{figure}[htbp]
    \centering
    \includegraphics[scale=0.13]{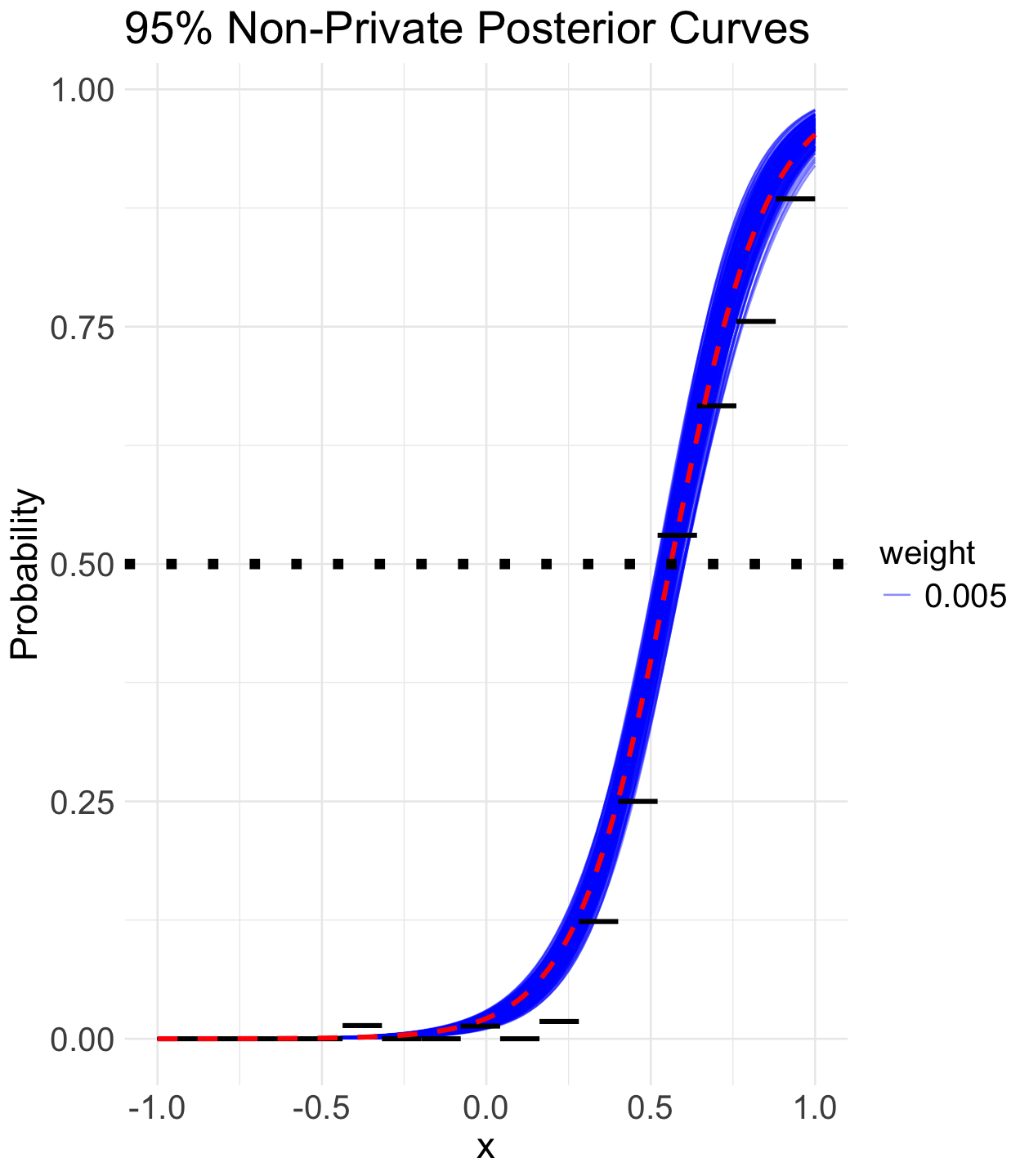}
    \includegraphics[scale=0.13]{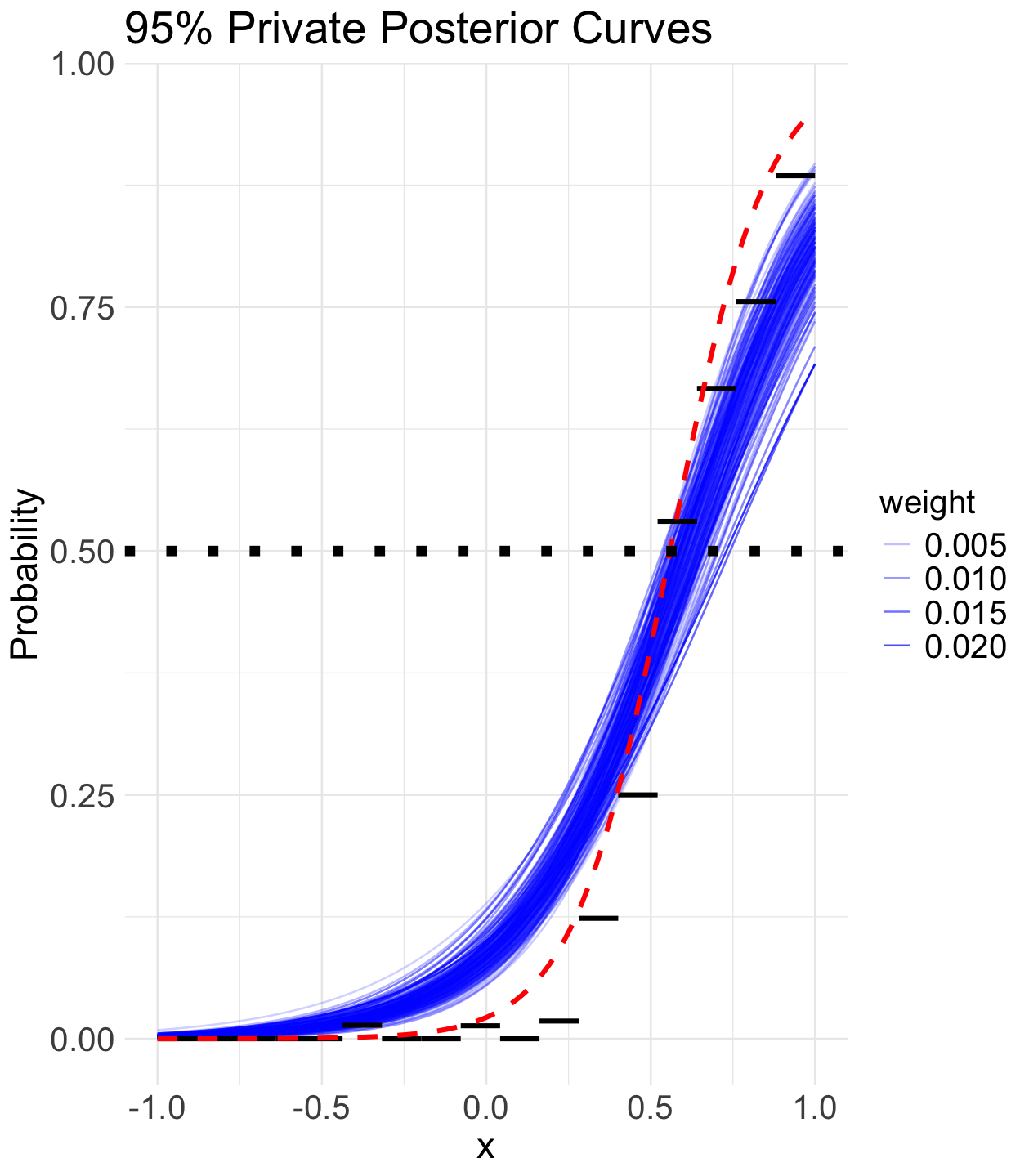}
    \caption{Comparison of $95\%$ non-private and private posterior curves for logistic regression via Algorithms~\ref{alg: sampling from ell_infty} and \ref{alg: ell_infty objective perturbation}, which illustrates the result of Section~\ref{subsec: logistic regression}. Black step lines: proportions of retirement from AGEGRP. Red dashed curve: fitted logistic curve. Horizontal dotted line: model given by the prior means.}
    \label{fig: 0.95 private curves}
\end{figure}

In logistic regression, the model is assumed to be $P(Y|X) = \frac{1}{1+\exp(-(\beta_0 + \beta X)}$ and $Z \overset{d}{=} \frac{X+1}{2}  \sim \text{Beta}(a,b)$ with prior models $\beta_0 \sim N(0,16)$, $\beta_1 \sim N(0,16)$, $a \sim \text{Gamma} (6,4)$, $b \sim \text{Gamma} (6,4)$. We set $\epsilon=0.5$ and $q=0.5$, and randomly draw $n=1000$ samples from the Victoria dataset. The $\sdp=(\beta_{0_{dp}}, \beta_{1_{dp}}, \sum z_i + K_1, \sum z_i^2 + K_2)$ we get from objective perturbation and K-norm mechanisms (Algorithms~\ref{alg: sampling from ell_infty} and \ref{alg: ell_infty objective perturbation}) is (-2.982634, 5.119215, 479.573100, 302.947785). We derive the acceptance probability of our algorithm on the objective perturbation for the regression coefficient and introduce the K-norm mechanism for the first two moments' sum of AGEGRP in \Cref{appendix: real data analysis}. We use $\ell_\infty$-norm with $\Delta_\infty = 2$ for both mechanisms, allocating $90\%$ of the privacy budget to the regression coefficients and $10\%$ to the two moments. Our method gives the private posterior mean of $\beta_0 = -2.357$ and $\beta_1 = 3.816$ with effective sample size 150.787 and runtime 3103.811 seconds. 

Figure~\ref{fig: 0.95 private curves} compares the $95\%$ most representative private curves with the $95\%$ non-private posterior curves. The selected private curves correspond to the sampled $\beta$ whose values are closest to the median, as measured by the Mahalanobis distance using the covariance matrix of $\beta$. Similarly, the selected non-private curves, obtained via the  Bayesian Metropolis-Hastings algorithm, correspond to the $95\%$ closest posterior samples to its median among the 200 thinned $\beta$ samples, drawn from a total of 5,000 iterations after a 5,000-sample burn-in, with an acceptance rate of $20.72\%$.

The blue curves represent the estimated logistic functions, while the black step lines indicate the observed proportions of retirement across age groups. The red dashed curve shows the logistic fit based on the original (non-private) data, and the horizontal dotted line represents the model determined by the prior means.

While the non-private posterior curves closely align with the red fitted curve, it is critical to note that the private estimation is based solely on differentially private (DP) summary statistics, without access to the original data. Despite some variability among the private curves, their alignment with the black step lines suggests that the privatized estimates capture the underlying data structure reasonably well. Although DP introduces randomness at the cost of information loss---causing our method to be more susceptible to regularization effects from the prior---the private logistic curves approximate the trend and shape of the non-private fit, maintaining a reasonable correspondence with the observed proportions and preserving key patterns in the data while ensuring privacy.

\section{Discussion and Directions for Future Work} \label{sec: conclusions}

In this paper, we propose a novel differentially private particle filtering algorithm to generate approximate posterior samples based on confidential queries. Our sampler offers the significant advantage of producing consistent samples for the parameter $\theta$ given $\sdp$. The tempering $\epsilon_t$ schedule enhances the efficiency of the sampler. Compared to MCMC methods, our sampler does not require strong assumptions for convergence or for (CLT) to hold. 

{Our method is conceptually related to existing approaches that treat the differentially private posterior as doubly intractable, but differs in assumptions, computation, and practical behavior. Compared with DP-DA-MCMC of \citet{ju2022data}, which augments the confidential database and relies on a privacy-aware Metropolis-within-Gibbs sampler, DP-PF replaces Markov chain dynamics with sequential Monte Carlo, avoiding mixing issues, multimodality concerns, and the geometric ergodicity assumptions required for theoretical guarantees in DP-DA-MCMC, while accommodating non-conjugate priors more naturally. At the same time, because DP-DA-MCMC updates latent data sequentially, it may scale more favorably with sample size $n$ in some settings. Compared with DP-Reject-ABC of \citet{gong2022exact}, which exploits additive-noise structure through rejection sampling and may suffer from low acceptance rates or limited applicability to other perturbation mechanisms, DP-PF combines a privacy-tempering schedule with sequential propagation and rejection correction to improve sampling efficiency while preserving exactness of the target distribution. Empirically, our simulations demonstrate competitive or improved computational performance and accurate posterior inference, while the SMC framework additionally provides Monte Carlo error quantification and marginal likelihood estimation.}

While our analysis in Section~\ref{sec: particle filters for DP} focuses on multinomial resampling, the validity of our Monte Carlo estimates and confidence intervals extends to other resampling schemes as well. Extending the proof to stratified or systematic resampling is not straightforward, as their next-iteration particle systems are determined by a single uniform random draw, unlike multinomial and residual resampling, which generate independent particles. However, all these resampling schemes satisfy a central limit theorem, ensuring that the estimates and intervals remain valid under the appropriate asymptotic variances. Although different schemes may alter the form of the asymptotic variance, our Taylor expansion-based analysis still applies. Moreover, some schemes may achieve smaller asymptotic variances compared to multinomial resampling, depending on how they influence the resampled particle system. The detailed analysis is left for future work.

Our algorithm inevitably becomes more computationally demanding as the sample size and privacy budget $\epsilon$ increase, and its performance depends critically on the quality of the importance weights. In addition, the dimensions of both the parameter space $\theta$ and the privatized statistic $\sdp$ introduce the curse of dimensionality common to importance sampling methods \citep{Chopin_Papaspiliopoulos_2020}. Because the importance weights must be analytically tractable, some potentially efficient proposal distributions may be excluded. Another limitation is that each iteration generates synthetic data solely from $\theta$, without leveraging information from previously generated data. {Our empirical results also suggest that stable and reliable inference typically requires at least 50 observations per parameter, indicating that high-dimensional inference with our DP-PF remains fundamentally difficult. Future work may explore dimension-reduction techniques \citep{talwar2015lasso} or efficient synthetic-data update approaches such as DP-DA-MCMC \citep{ju2022data} to improve scalability in higher-dimensional settings.} 

{ An additional direction for future work is to improve the scalability and adaptivity of the DP-PF framework. 
A promising extension is adaptive tempering, where the increment of $\epsilon_t$ is determined dynamically based on particle degeneracy measures such as the effective sample size (ESS), following adaptive SMC ideas in \citet{agapiou2017importance, Chopin_Papaspiliopoulos_2020}. Such strategies may improve computational stability and efficiency by deciding the number of iterations of DP-PF and allocating its tempering scheme based on the evolving particle system, but they also introduce theoretical challenges because tempering schedules are data-dependent, complicating privacy accounting and the statistical analysis.} {As for high-dimensional problems, the synthetic data updates can further bottleneck parameter updates. While projecting updates onto a lower-dimensional feature space—by leveraging sufficient statistics or latent factors—offers a scalable and efficient solution, it is highly problem-specific and requires additional assumptions.}


\acks{}
This work was supported in part by NSF grants SES-2150615 and SES-2610910. Part of this project was completed while Jordan Awan was a member of the Department of Statistics at Purdue University. The idea to manipulate the privacy budget in our sampler was inspired by preliminary work completed by a team consisting of Ruobin Gong, Nianqiao Ju,  Vinayak Rao, and Jordan Awan, where a similar $\epsilon$-schedule was used in an MCMC sampler; we are grateful for these prior discussions that led to this work.

\vskip 0.2in

\appendix

\section{More Results of Simulation} \label{appendix: simulation}

\subsection{Location-Scale Normal: HPD in sample sizes}

{To examine the effect of sample size on the HPD intervals, we run DP-PF across a range of sample sizes. All other settings are identical to those used in Figure~\ref{fig: HPD_epsilon}.}

\begin{figure}[htbp]
    \centering
    \includegraphics[scale=0.47]{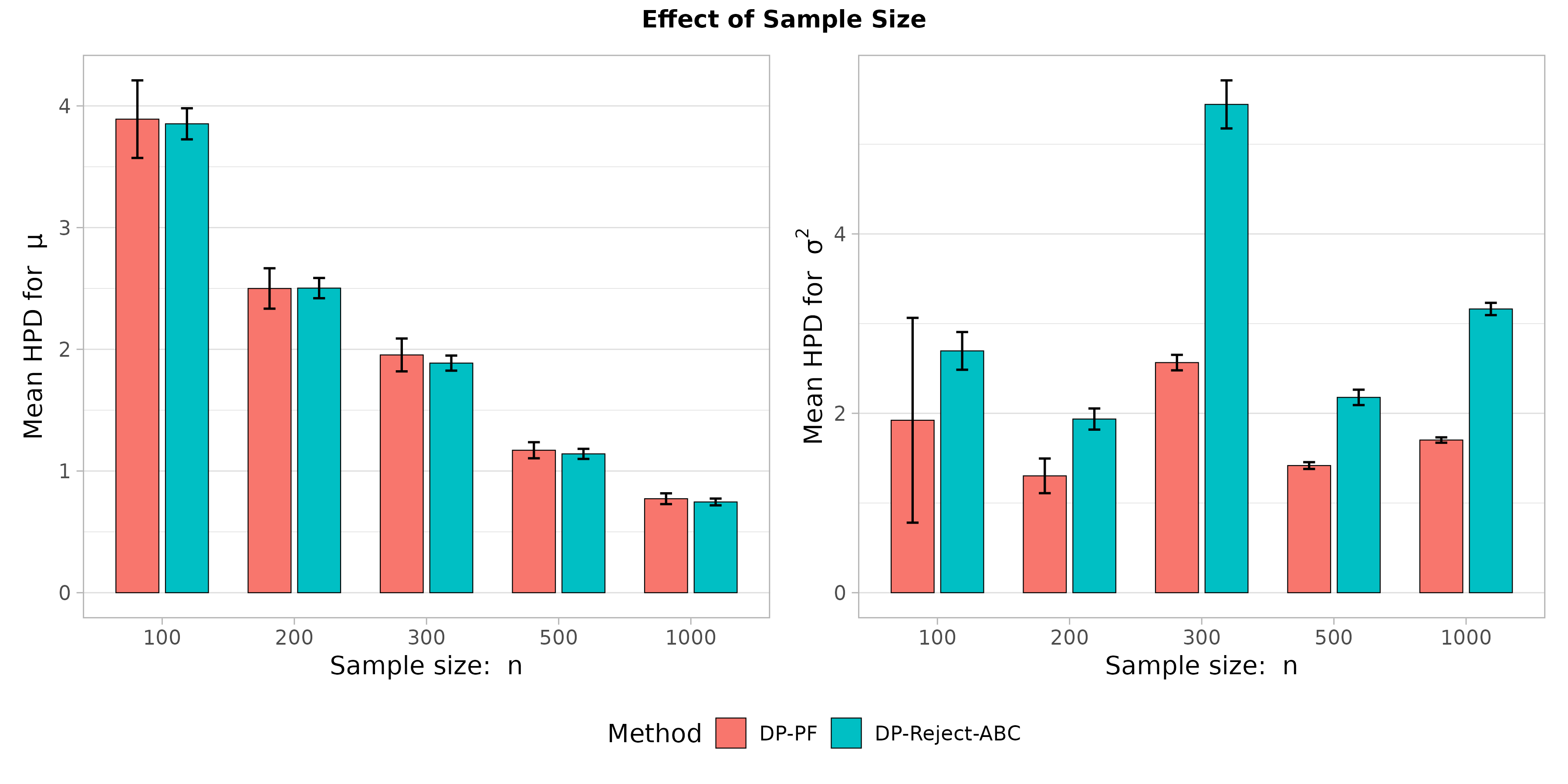}
    \caption{Comparison of the averaged lengths of the $90\%$ HPDs from DP-PF with the MCSEs of the posterior distributions of $\mu$ and $\sigma^2$ given $\sdp$, under different values of sample sizes.}
    \label{fig: HPD_sample_size}
\end{figure}

{Figure~\ref{fig: HPD_sample_size} shows that the lengths of the $90\%$ HPDs for $\mu$ obtained by DP-PF and DP-Reject-ABC are nearly indistinguishable relative to their MCSEs. In contrast, DP-PF appears to produce slightly shorter HPDs for $\sigma^2$ than DP-Reject-ABC. Aside from the case $n=300$, where both methods exhibit an unexpected increase in HPD length, DP-PF appears to systematically underestimate the HPD length for $\sigma^2$. We expect this discrepancy to diminish with a larger number of particles, such as $N=2000$ or higher.}

\subsection{Linear Regression: on \texorpdfstring{$\epsilon_t$}{epsilon	extunderscore t} Schedule}

{ To demonstrate the performance under different $\epsilon_t$ schedules, we implemented an exponential schedule in addition to the linear one. Under this schedule, the privacy budget increases rapidly during the early iterations and gradually approaches the total budget $\epsilon$. Specifically,
\begin{align*}
\epsilon_t = \epsilon \left(1 - 0.9 e^{-0.3(t-1)}\right), \qquad t = 1,\ldots,T-1,    
\end{align*}
and $\epsilon_T = \epsilon$.}

\begin{table}[t]
\centering
\resizebox{.95\textwidth}{!}{
{\color{red!90!black}
\resizebox{0.95\textwidth}{!}{%
\begin{tabular}{cc ccc ccc}
\toprule
&
&
\multicolumn{3}{c}{DP-PF (linear $\epsilon_t$ schedule)} &
\multicolumn{3}{c}{DP-PF (exponential $\epsilon_t$ schedule)} \\
\cmidrule(lr){3-5}
\cmidrule(lr){6-8}
$\epsilon$ & Sample Size
& $\ex(\beta_1 \mid \sdp)$ & $\var(\beta_1 \mid \sdp)$ & Sec./ESS$^\dagger$
& $\ex(\beta_1 \mid \sdp)$ & $\var(\beta_1 \mid \sdp)$ & Sec./ESS$^\dagger$ \\
\midrule
\multirow{2}{*}{0.5}
& 300 & 1.973 (0.0128)    & 0.523 (0.0133) & 0.293 & 1.970 (0.0136)    & 0.554 (0.0167) & 0.344 \\
& 500 & 1.772 (0.0079)    & 0.295 (0.0065) & 0.638 & 1.778 (0.0084)    & 0.293 (0.0066) & 1.010 \\
\midrule
\multirow{2}{*}{1.0}
& 300 & 1.983 (0.0085)    & 0.314 (0.0062) & 0.250 & 1.987 (0.0094)    & 0.325 (0.0064) & 0.349 \\
& 500 & 1.889 (0.0070)    & 0.178 (0.0029) & 1.696 & 1.888 (0.0071)    & 0.182 (0.0056) & 3.089 \\
\midrule
\multirow{2}{*}{2.0}
& 300 & 2.048 (0.0076)    & 0.187 (0.0051) & 0.634 & 2.052 (0.0079)    & 0.185 (0.0073) & 1.078 \\
& 500 & 1.980 (0.0061)    & 0.095 (0.0038) & 7.481 & 1.980 (0.0064)    & 0.099 (0.0027) & 14.021 \\
\bottomrule
\end{tabular}%
}
}}
\caption{Comparison of DP-PF with linear and exponential $\epsilon_t$ schedule based on {\color{red!90!black}the averaged posterior means and variances of $\beta_1$ over 1000 simulations} for linear regression model under non-conjugate prior, and the seconds spent per ESS {\color{red!90!black} with the values in the parentheses representing MCSEs for the posterior mean and variance estimates.} ($^\dagger$Parallelized over particles)}
\label{table:nonconj_exp}
\end{table}

{From Table~\ref{table:nonconj_exp}, we observe that the posterior estimates and their MCSEs are nearly identical under the two schedules. Although the linear schedule is somewhat more computationally efficient in terms of seconds per ESS, the exponential schedule achieves comparable overall performance.}

\subsection{Linear Regression: Conjugate prior}

The data model $f$ is the same as the non-conjugate prior but with conjugate prior assumptions $\beta \mid \tau \sim N_{2}(m, \tau^{-1} V^{-1}), \tau \sim \text{Gamma}\left( \frac{a}{2}, \frac{b}{2} \right), \mu \sim N(\theta, \Sigma),$ and $\phi \sim \chi^2 (d)$. The privatized statistics $\sdp$ are the same as in the non-conjugate prior setting.
 
\textbf{Simulation setup:} with bounds $L=-5, U=5$, true values of the parameters are set as $\beta^* = (0, 2)^\top$, $\tau^* = 1$, $\mu^* = 1$, and $\phi^* = 1$. $\epsilon\in\{0.5,1,2\}$ and $n\in\{300,500\}$. The hyper-parameters are set as $m = (0,0)^\top$, $V = I_2$, $a=2$, $b=2$, $\theta = (0)^\top$, $\Sigma = 1$, and $d=2$.

\begin{table}[!htbp]
\centering
\resizebox{0.95\textwidth}{!}{%
\begin{tabular}{cc cc cc cc}
\toprule
&
&
\multicolumn{2}{c}{DP-PF} &
\multicolumn{2}{c}{DP-DA-MCMC} &
\multicolumn{2}{c}{DP-Reject-ABC} \\
\cmidrule(lr){3-4}
\cmidrule(lr){5-6}
\cmidrule(lr){7-8}
$\epsilon$ & Sample Size
& $\ex(\beta_1 \mid \sdp)$ & Sec./ESS
& $\ex(\beta_1 \mid \sdp)$ & Sec./ESS
& $\ex(\beta_1 \mid \sdp)$ & Sec./ESS \\
\midrule

\multirow{2}{*}{0.50}
& 300 & 1.906 (0.0133) & 0.296 & 1.679 (0.0223) & 24.456 & 1.842 (0.0030) & 0.150 \\
& 500 & 1.726 (0.0087) & 0.868 & 1.618 (0.0109) & 45.503 & 1.648 (0.0024) & 4.959 \\

\midrule

\multirow{2}{*}{1}
& 300 & 1.900 (0.0092) & 0.268 & 1.751 (0.0126) & 17.024 & 1.895 (0.0020) & 1.326 \\
& 500 & 1.845 (0.0083) & 2.796 & 1.762 (0.0071) & 36.384 & 1.791 (0.0015) & 64.206 \\

\midrule

\multirow{2}{*}{2}
& 300 & 1.951 (0.0066) & 0.591 & 1.896 (0.0095) & 12.984 & 1.957 (0.0013) & 17.053 \\
& 500 & 1.918 (0.0070) & 11.657 & 1.876 (0.0050) & 28.439 & $-$ & $-$ \\

\bottomrule
\end{tabular}%
}
\caption{{\color{red!90!black}Comparison of DP-PF, DP-DA-MCMC and DP-Reject-ABC based on the averaged posterior mean of the regression coefficient $\beta_1$ and seconds per effective sample size (Sec./ESS) over 1000 simulations for linear regression under conjugate priors, with the values in parentheses representing the MCSEs for the posterior mean estimates. ($^\dagger$ Parallelized over particles)}}
\label{table:type2_conjugate}
\end{table}

\begin{table}[t]
\centering
\resizebox{0.65\textwidth}{!}{%
\begin{tabular}{cc ccc}
\toprule
&
&
DP-PF & DP-DA-MCMC & DP-Reject-ABC \\
\cmidrule(lr){3-3}
\cmidrule(lr){4-4}
\cmidrule(lr){5-5}
$\epsilon$ & Sample Size
& 90\% HPD & 90\% HPD & 90\% HPD \\
\midrule
\multirow{2}{*}{0.5}
& 300 & 1.897 (0.0099) & 2.825 (0.0386) & 2.664 (0.0092) \\
& 500 & 1.507 (0.0076) & 1.927 (0.0243) & 2.023 (0.0068) \\
\midrule
\multirow{2}{*}{1.0}
& 300 & 1.635 (0.0073) & 2.178 (0.0243) & 1.905 (0.0060) \\
& 500 & 1.192 (0.0054) & 1.422 (0.0139) & 1.401 (0.0044) \\
\midrule
\multirow{2}{*}{2.0}
& 300 & 1.275 (0.0058) & 1.672 (0.0173) & 1.349 (0.0040) \\
& 500 & 0.844 (0.0036) & 1.038 (0.0092) & $-$ \\
\bottomrule
\end{tabular}%
}
\caption{Comparison of DP-PF, DP-DA-MCMC, and DP-Reject-ABC based on the averaged 90\% HPD widths of $\beta_1$ over 1000 simulations for linear regression under conjugate priors, with values in parentheses representing MCSEs.}
\label{table:type2_conjugate_hpd}
\end{table}

Table~\ref{table:type2_conjugate} along with Table~\ref{table:type2_nonconjugate} show that DP-PF consistently delivers strong accuracy across different sample sizes, privacy levels, and prior settings. Although DP-PF does not gain as much computational speed from conjugate priors as DP-DA-MCMC, it offers a key advantage that its implementation remains consistent and straightforward regardless of the prior.  In contrast, while DP-DA-MCMC benefits from a simplified structure when conjugate priors are available, it becomes notably more complex to implement and slower to run in non-conjugate settings. This highlights a key strength of DP-PF that its simplicity and flexibility make it easier to apply across a wide range of models.

\section{Details of Real Data Analysis} \label{appendix: real data analysis}

For implementing Algorithm~\ref{alg: DP-PF} with K-norm mechanisms, we need the following proposition.

\begin{proposition}
    If the K-norm mechanism is used as the privacy mechanism in Algorithm~\ref{alg: DP-PF} for $T_z = (\sum z_i, \sum z_i^2)$, then the acceptance probability for the sample $z$ is
    \begin{align*}
        r_t (z) = \exp\left( \frac{-\epsilon_t}{\Delta} \| T_z - Z_{dp} \|_K \right)
    \end{align*}
    at iteration $t$.
\end{proposition}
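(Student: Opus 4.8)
The plan is to substitute the K-norm mechanism's density directly into the acceptance-probability formula $r_t(x) = m_{\epsilon_t}(\sdp \mid x)/\sup_{x} m_{\epsilon_t}(\sdp \mid x)$ used on line 7 of Algorithm~\ref{alg: DP-PF}, and to show that the normalizing constant and the denominator's supremum cancel, leaving a single exponential factor. The whole argument is essentially a bookkeeping exercise once the mechanism density is written explicitly.

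First I would record the density of the K-norm mechanism for the query $T_z = (\sum z_i, \sum z_i^2)$. The mechanism releases $Z_{dp} = T_z + N$, where the additive noise $N$ has density proportional to $\exp\!\left(-\frac{\epsilon_t}{\Delta}\|n\|_K\right)$; hence, writing $C = \int \exp\!\left(-\frac{\epsilon_t}{\Delta}\|n\|_K\right)\,dn$ for the normalizing constant, the mechanism density is
\begin{align*}
    m_{\epsilon_t}(Z_{dp}\mid z) = \frac{1}{C}\exp\!\left(-\frac{\epsilon_t}{\Delta}\|Z_{dp}-T_z\|_K\right).
\end{align*}
The key structural observation is that $C$ depends only on $\epsilon_t$, $\Delta$, and the convex body $K$, and not on the data $z$, so it factors out of the ratio defining $r_t$.

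Next I would evaluate the supremum in the denominator. Because $C$ is constant in $z$, maximizing $m_{\epsilon_t}(Z_{dp}\mid z)$ over $z$ is equivalent to minimizing $\|Z_{dp}-T_z\|_K$; since a (symmetric) gauge is nonnegative and vanishes exactly at the origin, the exponential attains its maximal value $1$ precisely when $T_z = Z_{dp}$, so $\sup_{z} m_{\epsilon_t}(Z_{dp}\mid z) = 1/C$, the modal value of the noise. Dividing, the $1/C$ cancels and I obtain
\begin{align*}
    r_t(z) = \frac{\frac{1}{C}\exp\!\left(-\frac{\epsilon_t}{\Delta}\|Z_{dp}-T_z\|_K\right)}{1/C} = \exp\!\left(-\frac{\epsilon_t}{\Delta}\|T_z - Z_{dp}\|_K\right),
\end{align*}
where the last equality uses the symmetry $\|-v\|_K = \|v\|_K$ of the K-norm, matching the claimed formula.

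The main obstacle, though minor, is justifying that the supremum of the density equals the modal value $1/C$ rather than some smaller number: this requires that the query $T_z$ can attain (or approach) the observed value $Z_{dp}$, i.e.\ that the supremum in line 7, taken over reachable values of $T_z$, has $Z_{dp}$ in its closure. I would dispatch this by noting that the K-norm noise is unimodal with mode at the origin, so the attained or approached maximum of $m_{\epsilon_t}(Z_{dp}\mid \cdot)$ is exactly the noise density at $0$, namely $1/C$; this is the same normalization used implicitly in Assumption~A.4 that $m_{\epsilon_t}(\sdp\mid\cdot)$ has a maximum. With that identification in place the cancellation is immediate and the stated acceptance probability follows.
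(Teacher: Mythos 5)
Your proposal is correct and follows essentially the same route as the paper's proof: write the K-norm density with its data-independent normalizing constant, identify the supremum over $z$ with the noise density at the origin, and cancel. Your extra remark about whether the supremum is attained (i.e., whether $T_z$ can reach or approach $Z_{dp}$) is a point the paper glosses over, but it does not change the argument.
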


\begin{proof}
The K-Norm mechanism \citep{hardt2010geometry} adds a random variable $V$ to the summary statistics with density 
$$
f_v(v) = \frac{\exp(\frac{-\epsilon}{\Delta} \|v\|_K)}{\Gamma(d+1) \lambda(\frac{\Delta}{\epsilon} K)},
$$
which satisfies $\epsilon$-DP, where $\Gamma(\cdot)$ is the Gamma function and $\lambda(\cdot)$ is the Lebesgue volume function. Given fixed dimensions $d$ and $K$, since both of the functions remain constant, the acceptance rate follows directly from calculating
    \begin{align*}
        r_t (z)
        = \frac{m_{\epsilon_t} (Z_{dp} | z)}{\sup_z m_{\epsilon_t} (Z_{dp}  | z)} 
        = \exp\left( \frac{-\epsilon_t}{\Delta} \| T_z - Z_{dp} \|_K - \frac{-\epsilon_t}{\Delta} \| 0 \|_K \right) 
        = \exp\left( \frac{-\epsilon_t}{\Delta} \| T_z - Z_{dp} \|_K\right) .
    \end{align*}
\end{proof}

In general, sampling from a K-norm mechanism can be challenging as it requires a precise characterization of the sensitivity space and relies on rejection sampling to {give} an exact uniform sampling from the sensitivity space \citep{awan2021structure}. For the $\ell_\infty$-norm used in our data analysis, Algorithm~\ref{alg: sampling from ell_infty} provides a simple way to sample from random vectors whose density is proportional to their $\ell_\infty$-norm.

\begin{algorithm}[H]
\caption{Sampling $\propto \exp(-c\| V \|_\infty)$: \citet{steinke2016between}} \label{alg: sampling from ell_infty}
\begin{algorithmic}[1]
\Require $c$ and dimension $d$
    \State Set $U_j$ for $j=1,\ldots,d$
    \State Draw $r$ from $\Gamma(\alpha = d+1, \beta = c)$
    \State Set $V = r \cdot (U_1,\ldots,U_d)^\top$
\Ensure $V$
\end{algorithmic}
\end{algorithm}

\begin{algorithm}[H]
\caption{$\ell_\infty$ Objective Perturbation: \citet{awan2021structure}} \label{alg: ell_infty objective perturbation}
\begin{algorithmic}[1]
\Require $X\in\mathcal{X}^n,\varepsilon>0$, a convex set $\Theta\subset\mathbb{R}^d$, a convex function $r:\Theta\to\mathbb{R}$, a convex loss $\hat{\mathcal{L}}(\theta;X)=\frac1n\sum_{i=1}^n\ell(\theta;x_i)$ defined on $\Theta$ such that $\nabla^2\ell(\theta;x)$ is continuous in $\theta$ and $x,\Delta>0$ such that $\sup_{x,x^\prime\in\mathcal{X}}\sup_{\theta\in\Theta}\|\nabla\ell(\theta;x) -\nabla\ell(\theta;x^{\prime})\|_\infty\leq\Delta,\lambda>0$ is an upper bound on the eigenvalues of $\nabla^2\ell(\theta;x)$ for all $\theta\in\Theta$ and $x\in\mathcal{X}$, and $q\in (0,1).$ 
    \State Set $\gamma = \frac{\lambda}{\exp(\varepsilon(1-q))-1}$
    \State sample $V$ from density $\propto \exp(\frac{-\varepsilon q}{\Delta} \|V\|_\infty)$ from Algorithm~\ref{alg: sampling from ell_infty}
    \State Compute $\theta_{dp} = \arg\min_{\theta} \frac{1}{n} \sum_{i=1}^n l(\theta, x_i) + \frac{1}{n} r(\theta) + {\frac{\gamma}{2n} \theta^\top \theta} + {\frac{1}{n} V^\top \theta}$.
\Ensure $\theta_{dp}$
\end{algorithmic}
\end{algorithm}

To implement Algorithm~\ref{alg: DP-PF} with the objective perturbation mechanism described in Algorithm~\ref{alg: ell_infty objective perturbation} under a logistic model, we need the following proposition.

\begin{proposition}
    Under a logistic model, if the objective perturbation is used as the privacy mechanism in Algorithm~\ref{alg: ell_infty objective perturbation} for the regression coefficient $\theta = (\beta_0,\beta_1,\ldots,\beta_{d-1})$, then the acceptance probability for the dataset $D=(X,Y)$ is
    \begin{align*}
        r_t(D) = \frac{ \exp{(\frac{-\epsilon_t}{\Delta_K} \| V(\theta_{dp};D) \|_K}) }{(\gamma_t + \lambda_1)(\gamma_t + \lambda_2)\cdots(\gamma_t + \lambda_d)}  \gamma_t^d
    \end{align*}
    at iteration $t$, where $\gamma_t = \frac{d/4}{\exp(\epsilon_t (q_t-1))}$ and $\lambda_1 \geq \lambda_2 \geq \cdots \geq \lambda_d$ are the eigenvalues of $ \sum_{i=1}^n \left( \frac{\exp{(\theta_{dp}^\top X_i)}}{(1 + \exp{(\theta_{dp}^\top X_i))^2}}  \right) X_i X_i^{\top}$.
\end{proposition}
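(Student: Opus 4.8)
The plan is to compute the acceptance probability directly from its definition, $r_t(D) = m_{\epsilon_t}(\theta_{dp}\mid D)/\sup_D m_{\epsilon_t}(\theta_{dp}\mid D)$, so the core task is to write down the density $m_{\epsilon_t}(\theta_{dp}\mid D)$ of the objective-perturbation output of Algorithm~\ref{alg: ell_infty objective perturbation} and then locate its supremum over candidate databases $D$. The essential difference from the additive K-norm proposition is that $\theta_{dp}$ is \emph{not} a statistic plus noise; it is the minimizer of the perturbed objective and hence a smooth bijective function of the injected noise $V$. I would therefore obtain $m_{\epsilon_t}(\theta_{dp}\mid D)$ by a change-of-variables (Jacobian) argument applied to the noise-to-minimizer map, reusing the K-norm sampling density of $V$ (Algorithm~\ref{alg: sampling from ell_infty}) whose normalizing constants are independent of $D$ and cancel in the ratio.

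First I would invert the map through the first-order stationarity condition. At $\theta_{dp}=\arg\min_\theta F(\theta;V)$ the gradient of the perturbed objective vanishes, which solves for the noise as $V(\theta_{dp};D) = -\sum_{i=1}^n \nabla\ell(\theta_{dp};X_i) - \gamma_t\theta_{dp}$, absorbing the regularizer gradient. Differentiating this identity in $\theta_{dp}$ shows that the Jacobian of the inverse map is the Hessian of the perturbed objective, $\gamma_t I + \sum_{i=1}^n \nabla^2\ell(\theta_{dp};X_i)$, which is positive definite exactly because the choice $\gamma_t>0$ forces strong convexity and thereby guarantees (via the implicit function theorem) that the map is a bijection. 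Specializing to the logistic loss, I would compute $\nabla^2\ell(\theta;X_i)=\sigma(\theta^\top X_i)\bigl(1-\sigma(\theta^\top X_i)\bigr)X_iX_i^\top$ and invoke the identity $\sigma(z)(1-\sigma(z))=\exp(z)/(1+\exp(z))^2$ to match the matrix $\sum_i \tfrac{\exp(\theta_{dp}^\top X_i)}{(1+\exp(\theta_{dp}^\top X_i))^2}X_iX_i^\top$ whose ordered eigenvalues are $\lambda_1\geq\cdots\geq\lambda_d$. The eigenvalues of the full Jacobian are then $\gamma_t+\lambda_i$, so its determinant is $\prod_{i=1}^d(\gamma_t+\lambda_i)$.

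Assembling the factors, $m_{\epsilon_t}(\theta_{dp}\mid D)$ decomposes into the K-norm noise density $\propto\exp\!\bigl(-\tfrac{\epsilon_t}{\Delta_K}\|V(\theta_{dp};D)\|_K\bigr)$ evaluated at the inverted noise, together with the Jacobian determinant built from $\gamma_t+\lambda_i$. Forming $r_t(D)$ then reduces to evaluating $\sup_D m_{\epsilon_t}(\theta_{dp}\mid D)$: the K-norm exponential is maximized, at value $1$, when $V=0$ (exactly as in the K-norm proposition), while the determinant factor reaches its extremal value $\gamma_t^d$ when the sample-Hessian contribution degenerates. Combining these two extremal values gives the normalizer, and the stated expression $r_t(D)=\exp\!\bigl(-\tfrac{\epsilon_t}{\Delta_K}\|V(\theta_{dp};D)\|_K\bigr)\,\gamma_t^d/\prod_{i=1}^d(\gamma_t+\lambda_i)$ follows after substituting $\gamma_t=\tfrac{d/4}{\exp(\epsilon_t(q_t-1))}$, where $d/4$ is the eigenvalue bound supplied by the logistic Hessian.

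I expect the main obstacle to be the careful bookkeeping of the Jacobian together with the supremum. The change-of-variables step must justify that the noise-to-minimizer map is a \emph{global} diffeomorphism rather than merely locally invertible, which rests on the strong convexity injected by $\gamma_t$ and the continuity of $\nabla^2\ell$; and the supremum step must verify that the K-norm factor and the determinant factor are extremized \emph{compatibly} across the space of candidate databases, so that the normalizer genuinely factors into $1\cdot\gamma_t^d$ rather than forcing the two factors to be bounded separately. Once the direction of the Jacobian and the supremizing configuration are pinned down, the remaining steps are routine algebra parallel to the K-norm proof.
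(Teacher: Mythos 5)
Your proposal is correct and follows essentially the same route as the paper's proof: invert the noise-to-minimizer map via the stationarity condition, apply the change-of-variables formula with Jacobian $\gamma_t I + n\nabla^2_\theta \hat{\mathcal{L}}$ whose eigenvalues are $\gamma_t+\lambda_i$, and normalize by taking the K-norm factor to $1$ and the determinant to $\gamma_t^d$. The compatibility issue you flag in the supremum step is real but harmless (a product of separate suprema is still a valid upper bound for rejection sampling), and the paper takes exactly the same step without further comment.
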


\begin{proof}
For iteration $t$, let $\gamma_t = \frac{d/4}{\exp(\epsilon_t (q_t-1)) -1}$. Consider
$$
\theta_{dp} = \arg\min_\theta \Hat{\mathcal{L}}(\theta;D) + \frac{\gamma_t}{2n}\theta^{\top}\theta + \frac{1}{n} V^\top \theta,
$$
where $V$ is drawn from $f(V) \propto \exp(\frac{-\epsilon_t }{\Delta_K} \| V \|_K )$ and $0<q_t<1$. Let $\theta_{dp} = a$, then we have the identity $\nabla_\theta \Hat{\mathcal{L}}(a;X) + \frac{\gamma_t}{n} a + \frac{1}{n} V = 0$. Thus, we know
\begin{align*}
    V(a;D) = - \left( \gamma_t a  + n \nabla_\theta \Hat{\mathcal{L}}(a;D) \right) \\
    \nabla_\theta V(a;D) = - \left( \gamma_t I  + n \nabla_\theta^2 \Hat{\mathcal{L}}(a;D) \right).
\end{align*}

Then, the pdf of $\theta_{dp} = a$ given $X$ can be expressed with change of variable
\begin{align*}
    \frac{f_V(V(a;D))}{|\det \nabla V(a;D)|} 
    &= \frac{f_V \left( -( \gamma_t a  + n \nabla_\theta \Hat{\mathcal{L}}(a;D) )\right)}{\det \left( \gamma_t I  + n \nabla_\theta^2 \Hat{\mathcal{L}}(a;D) \right)} \\
    &= \frac{\frac{1}{\Gamma(d+1) \lambda(\frac{\Delta}{\epsilon_t } K)} \exp{(\frac{-\epsilon_t }{\Delta} \| - \left( \gamma_t a  + \sum_{i=1}^n (\frac{\exp{(a^\top X_i)}}{1+\exp{(a^\top X_i)}} - Y_i) X_i \right) \|_K}) }{(\gamma_t + \lambda_1)(\gamma_t + \lambda_2)\cdots(\gamma_t + \lambda_d)}
\end{align*}
since the Hessian of $\Hat{\mathcal{L}}$ can be expressed as
\begin{align*}
    \nabla_\theta^2 \Hat{\mathcal{L}}(a;D)
    &= \frac{1}{n} \sum_{i=1}^n \left( \frac{\exp{(a^\top X_i)}}{(1 + \exp{(a^\top X_i))^2}}  \right) X_i X_i^{\top},
\end{align*}
where the eigenvalues of $n \nabla_\theta^2 \Hat{\mathcal{L}}(a;D)$ for the synthetic data $D$ are denoted as $\lambda_1 \geq \lambda_2 \geq \cdots \geq \lambda_d$. Thus,
\begin{align*}
    m_{\epsilon_t}  (\theta_{dp} = a \mid D) &= \frac{ C_{\epsilon_t} \exp{(\frac{-\epsilon_t}{\Delta} \| V(a;D) \|_K}) }{(\gamma_t + \lambda_1)(\gamma_t + \lambda_2)\cdots(\gamma_t + \lambda_d)},
\end{align*}
where $C_{\epsilon_t} = \frac{1}{\Gamma(d+1) \lambda(\frac{\Delta}{\epsilon_t} K)}$. Note that $\sup_X m_{\epsilon_t} (\theta_{dp} = a \mid D) = \frac{C_{\epsilon_t}}{\gamma_t^d}$. Therefore, the acceptance rate of our differentially private particle filtering algorithm at iteration $t$ is
$$
r_t (D) 
= \frac{m_{\epsilon_t} (\theta_{dp} = a \mid D)}{\sup_X m_{\epsilon_t} (\theta_{dp} = a \mid D)} 
= \frac{ \exp{(\frac{-\epsilon_t}{\Delta} \| V(a;D) \|_K}) }{(\gamma_t + \lambda_1)(\gamma_t + \lambda_2)\cdots(\gamma_t + \lambda_d)}  \gamma_t^d.
 $$
\end{proof}

\section{Proofs and Technical Details} \label{appendix: proofs}

Since our propagation involves perturbation and rejection, the whole propagation transition has density 
    \begin{align}
        h_t (\theta, x)  = \frac{ \eta_t(\theta) f(x \mid \theta) m_{\epsilon_t}(\sdp \mid x)}{\int \int \eta_t(\theta) f(x \mid \theta) m_{\epsilon_t}(\sdp \mid x) \ dx \ d\theta},
    \end{align}
 with its denominator denoted as $c_{\eta_t} (\sdp)$. We prove that the accepted particles after the whole propagation are drawn from $h_t$ in Proposition~\ref{prop: acceptance distribution}. In other words, the accepted particles are derived from conditioning on the successful events after perturbation. That is,
 \begin{align*}
    h_t(\cdot, x) = \int \pi_{\epsilon_{t-1}} (\tilde{\theta}) K_t(\tilde{\theta}, \cdot \mid A_t) d\tilde{\theta},
 \end{align*}
 where $A_t = \{ (\theta,x) : \pi_0 (\theta) > 0, r_t(x) > U \}$ is the event of success. The importance weight becomes
\begin{align}
    w_t = \frac{\pi_{\epsilon_t}(\theta, x \mid \sdp)}{h_t(\theta, x)},
\end{align}
where 
\begin{align}
    \pi_{\epsilon_t} (\theta, x \mid \sdp) = \frac{\pi_0 (\theta) f(x \mid \theta) m_{\epsilon_t}(\sdp \mid x)}{\int \int \pi_0 (\theta) f(x \mid \theta) m_{\epsilon_t}(\sdp \mid x) \ dx \ d\theta},    
\end{align}
with its denominator denoted as $c_{\epsilon_t}(\sdp)$. Often, both of the normalizing constants $c_{\eta_t} (\sdp)$ and $c_{\epsilon_t}(\sdp)$ are intractable. Nevertheless, this can be addressed by using a consistent estimate of the fraction of the two normalizing {constants} (Lemma~\ref{lemma: convergence of posterior mean}).

\subsection{Properties of Algorithm~\ref{alg: DP-PF} (DP-PF)} \label{appendix: properties of DP-PF}

{In this section, we present two useful properties that are a direct consequence of the way Algorithm~\ref{alg: DP-PF} is constructed.}
    
 \begin{proposition} \label{prop: acceptance distribution}
     The accepted samples $(\theta, x)^{(i,t)}$ in algorithm~\ref{alg: DP-PF} follow distribution $h_t(\theta, x)$. Furthermore, let $Y^{(i,t)}$ be the trial times ($\geq 1$) until the first acceptance. Then, 
     \begin{itemize}
         \item $Y^{(i,t)}$ is independent of the accepted samples $(\theta, x)^{(i,t)}$, and
         \item $Y^{(i,t)}$ follows $Geom \left( p = \frac{c_{\eta_t}(\sdp)}{\sup_{x} m_{\epsilon_t}(\sdp \mid x)} \right)$.
     \end{itemize}
 \end{proposition}
 
\begin{proof}
Define $I_t = 1$, if $(\theta, x) \in A_t$; $I_t = 0$, otherwise. Since 
\begin{align*}
    P(I_t = 1 \mid \Theta_t=\theta, X=x) = 
    \frac{m_{\epsilon_t}(\sdp \mid x)}{\sup_{x} m_{\epsilon_t}(\sdp \mid x)},
\end{align*}
the probability of acceptance is
    \begin{align*}
        P(I_t = 1) &= \int \int P(I_t = 1 \mid \Theta_t=\theta, X=x) \ \eta(\theta) f(x \mid \theta) \ d\theta dx \\
        &= \frac{1}{\sup_{x} m_{\epsilon_t}(\sdp \mid x)} \int \int \eta(\theta) f(x \mid \theta) m_{\epsilon_t}(\sdp \mid x) \ d \theta dx \\
        &= \frac{c_{\eta_t}(\sdp)}{\sup_{x} m_{\epsilon_t}(\sdp \mid x)}.
    \end{align*}
    The distribution of the accepted samples is 
    \begin{align*}
        P(\Theta_t = \theta, X = x \mid I_t = 1) &= \frac{P(I_t = 1 \mid \Theta_t = \theta, X = x) P(\Theta_t = \theta, X = x)}{P(I_t = 1)} \\
        &= \frac{\eta_t(\theta) f(x \mid \theta) m_{\epsilon_t}(\sdp \mid x)}{c_{\eta_t}(\sdp)} \\
        &= h_t(\theta, x).
    \end{align*}
    Thus, the density of the accepted samples is $h_t(\theta, x)$. Now, consider 
    \begin{align*}
        & P(Y = s, \Theta_t = \theta, X = x) \\
        =& \left( 1 - \frac{c_{\eta_t}(\sdp)}{\sup_{x} m_{\epsilon_t}(\sdp \mid x)} \right)^{s-1}  \eta_t(\theta) f(x \mid \theta) \frac{m_{\epsilon_t}(\sdp \mid x)}{\sup_{x} m_{\epsilon_t}(\sdp \mid x)}\\
        =& \left[ \left( 1 - \frac{c_{\eta_t}(\sdp)}{\sup_{x} m_{\epsilon_t}(\sdp \mid x)} \right)^{s-1}  \frac{c_{\eta_t}(\sdp)}{\sup_{x} m_{\epsilon_t}(\sdp \mid x)} \right] \cdot \frac{\eta_t(\theta) f(x \mid \theta) m_{\epsilon_t}(\sdp \mid x)}{c_{\eta_t}(\sdp)}\\
        =& P(G=s) \cdot h_t(\theta, x).
    \end{align*}
    The conclusion follows.
\end{proof}

\evidence*

\begin{proof}
{ For the first result, observe that for any $t \ge 1$,
\begin{align*}
    c_{\epsilon_t}(\sdp)
    =
    \int \kappa_t(\theta,x) \, d\theta dx
    =
    \int \kappa_{t-1}(\theta,x)
    \frac{\kappa_t(\theta,x)}
         {\kappa_{t-1}(\theta,x)}
    \, d\theta dx.
\end{align*}
Dividing both sides by $c_{\epsilon_{t-1}}(\sdp)$ yields
\begin{align*}
    \rho_t \;=\; \frac{c_{\epsilon_t}(\sdp)}{c_{\epsilon_{t-1}}(\sdp)}
    &= \int \frac{\kappa_{t-1}(\theta,x)}{c_{\epsilon_{t-1}}(\sdp)} \frac{\kappa_t(\theta,x)}   {\kappa_{t-1}(\theta,x)} \, d\theta dx \\
    &= \mathbb{E}_{\pi_{\epsilon_{t-1}}(\theta,x \mid \sdp)} \left[ \frac{\kappa_t(\theta,x)}       {\kappa_{t-1}(\theta,x)} \right] \\
    &= \mathbb{E}_{\pi_{\epsilon_{t-1}}(\theta,x \mid \sdp)} \left[ \frac{m_{\epsilon_t}(\sdp \mid x)}{m_{\epsilon_{t-1}}(\sdp \mid x)} \right],
\end{align*}
where the first identity follows from $\kappa_t(\theta,x) = \pi_{\epsilon_{t}}(\theta,x \mid \sdp)c_{\epsilon_{t}}(\sdp)$, and the second identity 
\begin{align*}
    \frac{\kappa_t(\theta,x)}{\kappa_{t-1}(\theta,x)}
    = \frac{\pi_0(\theta) f(x \mid \theta) m_{\epsilon_t}(\sdp \mid x)}{\pi_0(\theta) f(x \mid \theta) m_{\epsilon_{t-1}}(\sdp \mid x) }
    = \frac{m_{\epsilon_t}(\sdp \mid x)}{m_{\epsilon_{t-1}}(\sdp \mid x)}
\end{align*}
by the design of Algorithm~\ref{alg: DP-PF}. The factorization $c_{\epsilon_T}(\sdp) = \prod_{t=1}^T \rho_t$ follows by telescoping.

For the second result, define $\varphi_t(\theta, x) = m_{\epsilon_t}(\sdp \mid x) / m_{\epsilon_{t-1}}(\sdp \mid x)$. By the integrability assumption of the theorem, $\ex_{h_t}|w_t \cdot \varphi_t| < \infty$, so Theorem~\ref{thm: posterior mean convergence} applies with $\varphi = \varphi_t$, giving
\begin{align*}
    \hat{\rho}_t
    \;=\;
    \sum_{i=1}^N \textup{\textbf{w}}^{(i,t)} \, \varphi_t\!\left(\theta^{(i,t)}, x^{(i,t)}\right)
    \;\overset{a.s.}{\longrightarrow}\;
    \ex_{\pi_{\epsilon_{t-1}}(\theta,x \mid \sdp)}[\varphi_t(\theta, x)]
    \;=\; \rho_t,
\end{align*}
as $N \to \infty$. Since $\widehat{c_{\epsilon_T}(\sdp)} = \prod_{t=1}^T \hat{\rho}_t$ is a continuous function of $(\hat{\rho}_1, \ldots, \hat{\rho}_T)$, the continuous mapping theorem gives $\widehat{c_{\epsilon_T}(\sdp)} \overset{a.s.}{\longrightarrow} c_{\epsilon_T}(\sdp)$.}
\end{proof}

\subsection{On the Consistency and CLT of DP-PF Estimator}

In order to carefully define the measurability across iterations for the consistency and CLT proof. We use the following definition:

\begin{definition} [Recursive Measurability \citealp{chopin2004}] \label{def: recursive measurability}
    ${\mathcal{F}_t}^p (d)$ is defined to be the set of measurable function $\varphi: \Theta_t \times \mathcal{X}^n \to \mathbb{R}^d$ such that for some $p$,
    \begin{align*}
        \ex_{h_t} |w_t \cdot \varphi|^p < \infty,
    \end{align*}
    in which the function $\theta_{t-1} \mapsto \ex_{K_t(\theta_{t-1}, \cdot \mid A_t)} [w_t(\cdot) \ \varphi(\cdot)]$ lies in ${\mathcal{F}_{t-1}}^p (1)$. For convenience, if $d=1$, the $(1)$ will be omitted, and if $p=2^\dag$, then there exists some $\zeta >0$, such that $\varphi \in {\mathcal{F}_t}^{2+\zeta} (d)$.
\end{definition}

From section~\ref{sec: particle filters for DP}, we know that after the whole propagation transition (\textbf{Step 2. $\&$ 2.5.}) in each $t$, we are able to sample from the proposal distribution $h_t (\theta, x)$. Thus, we use importance sampling to derive the importance weights $w_t = \frac{\pi_{\epsilon_t} (\theta, x \mid \sdp)}{h_t (\theta, x)}$, such that
    \begin{align} \label{eq: importance sampling dp}
         \ex_t [\varphi] = \ex_{h_t(\theta, x)} [w_t \varphi].
    \end{align}
    
While $w_t$ satisfies \eqref{eq: importance sampling dp}, the normalizing constants of the two distributions are often intractable. Alternatively, we tackle
    \begin{align}
        \Tilde{w}_t = \frac{\pi_0(\theta)}{\eta_t(\theta)},
    \end{align}
such that $\Tilde{w}_t = w_t \gamma_t$ with $\gamma_t \coloneqq \left( \frac{c_{\eta_t}(\sdp)}{c_{\epsilon_t}(\sdp)} \right)^{-1}$ being constant at each iteration.

In exchange for not being able to quantify the normalizing constants, we need the following lemma to set up for consistency.

 \begin{lemma} \label{lemma: convergence of posterior mean}
 Let $\varphi: \Theta_t \times \mathcal{X}^n \to \mathbb{R}^d$, such that $\ex_{h_t} |w_t \cdot \varphi_j| < \infty$ for all $j=1,\ldots,d$,
  \begin{align} \label{eq: convergence of posterior mean}
   \frac{1}{N} \sum_{i=1}^N \Tilde{w}^{(i,t)}  \varphi \left( (\theta, x)^{(i,t)} \right) \overset{a.s.}{\to} \gamma_t \ex_t [\varphi],
  \end{align}
  as $N \to \infty$.  In the case $\varphi \equiv 1$, we obtain $\frac{1}{N} \sum_{i=1}^N \Tilde{w}^{(i,t)} \overset{a.s.}{\to} \gamma_t$.
 \end{lemma}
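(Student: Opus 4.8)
The plan is to argue by induction on the iteration index $t$, following the recursive structure of \citet{chopin2004} but tracking the empirical proposal and the rejection step explicitly. For the base case $t=0$, the particles $\theta^{(i,0)}$ are i.i.d.\ draws from $\pi_0$ with unit weights $\tilde{w}^{(i,0)}=1$ (and $\gamma_0=1$), so the claim reduces to the ordinary strong law of large numbers. For the inductive step, I would condition on the $\sigma$-algebra $\mathcal{G}_{t-1}$ generated by the entire particle system through iteration $t-1$. Given $\mathcal{G}_{t-1}$, the empirical proposal $\eta_t^N(\theta)=\sum_{j=1}^N \mathbf{w}^{(j,t-1)}K_t(\theta^{(j,t-1)},\theta)$ is deterministic, and by the conditional version of Proposition~\ref{prop: the accepted distributions} the accepted particles $(\theta,x)^{(i,t)}$ are conditionally i.i.d.\ with density $h_t^N(\theta,x)=\eta_t^N(\theta)f(x\mid\theta)m_{\epsilon_t}(\sdp\mid x)/c_{\eta_t^N}(\sdp)$.

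The crux of the computation is the conditional mean of a single summand. Writing $\tilde{w}^{(i,t)}=\pi_0(\theta^{(i,t)})/\eta_t^N(\theta^{(i,t)})$, the factor $\eta_t^N$ in the weight cancels exactly against the $\eta_t^N$ appearing in $h_t^N$, so that
\begin{align*}
    \ex\!\left[\tilde{w}^{(i,t)}\varphi\big((\theta,x)^{(i,t)}\big)\,\middle|\,\mathcal{G}_{t-1}\right]
    &=\frac{1}{c_{\eta_t^N}(\sdp)}\int\!\!\int \pi_0(\theta)\varphi(\theta,x)f(x\mid\theta)m_{\epsilon_t}(\sdp\mid x)\,dx\,d\theta \\
    &=\frac{c_{\epsilon_t}(\sdp)}{c_{\eta_t^N}(\sdp)}\,\ex_t[\varphi].
\end{align*}
The only residual dependence on $N$ is through the scalar $c_{\eta_t^N}(\sdp)=\sum_{j=1}^N\mathbf{w}^{(j,t-1)}G(\theta^{(j,t-1)})$, where $G(\tilde\theta)=\int K_t(\tilde\theta,\theta)\big[\int f(x\mid\theta)m_{\epsilon_t}(\sdp\mid x)\,dx\big]\,d\theta$. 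Applying the inductive hypothesis to this pushed-forward test function $G$---which is admissible precisely because Definition~\ref{def: recursive measurability} is designed to keep such kernel-integrated functions in the class $\mathcal{F}_{t-1}$---gives $c_{\eta_t^N}(\sdp)\to c_{\eta_t}(\sdp)$ almost surely, so the conditional mean converges to $(c_{\epsilon_t}/c_{\eta_t})\,\ex_t[\varphi]=\gamma_t\,\ex_t[\varphi]$.

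It then remains to pass from this conditional mean to the almost-sure limit of $S_N:=N^{-1}\sum_{i=1}^N\tilde{w}^{(i,t)}\varphi((\theta,x)^{(i,t)})$. I would split $S_N=\big(S_N-\ex[S_N\mid\mathcal{G}_{t-1}]\big)+\ex[S_N\mid\mathcal{G}_{t-1}]$; the second term was just shown to converge to $\gamma_t\,\ex_t[\varphi]$, so the task is to show the fluctuation term vanishes almost surely. Since the summands are conditionally i.i.d.\ given $\mathcal{G}_{t-1}$, a conditional strong law for triangular arrays applies, controlled by the integrability hypothesis $\ex_{h_t}|w_t\cdot\varphi|<\infty$ through a truncation and Borel--Cantelli argument in the style of \citet{chopin2004}. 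Specializing finally to $\varphi\equiv 1$ yields $N^{-1}\sum_i\tilde{w}^{(i,t)}\to\gamma_t\,\ex_t[1]=\gamma_t$, as claimed.

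I expect the main obstacle to be this last step: because both the conditioning $\sigma$-algebra $\mathcal{G}_{t-1}$ and the weights $\tilde{w}^{(i,t)}$ change with $N$, the summands form a genuine triangular array rather than a single i.i.d.\ sequence, so no off-the-shelf SLLN applies directly. The argument must interleave the conditional fluctuation bound with the inductive convergence of $\eta_t^N$ and $c_{\eta_t^N}$, and must verify at each stage that the kernel-integrated test functions remain admissible; this bookkeeping through the recursive measurability of Definition~\ref{def: recursive measurability} is exactly what allows the induction to close.
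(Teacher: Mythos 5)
Your proposal attacks a harder problem than the paper actually solves, and the extra difficulty lands exactly on the step you yourself flag as the obstacle. The paper's proof works entirely at the population level: by Proposition~\ref{prop: the accepted distributions}, the accepted particles $(\theta,x)^{(i,t)}$ are i.i.d.\ draws from $h_t \propto \eta_t(\theta) f(x\mid\theta) m_{\epsilon_t}(\sdp\mid x)$, where $\eta_t(\theta) = \int \pi_{\epsilon_{t-1}}(\tilde\theta) K_t(\tilde\theta,\theta)\,d\tilde\theta$ is the idealized mixture over the \emph{exact} previous target, so the weight $\tilde w_t = \pi_0/\eta_t$ is a fixed function that does not depend on $N$. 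The summands $Y_i = \tilde w^{(i,t)}\varphi\bigl((\theta,x)^{(i,t)}\bigr)$ are then a single i.i.d.\ sequence with mean $\ex_{h_t}[\tilde w_t\varphi] = \gamma_t\ex_t[\varphi]$ (the same cancellation you perform), and Kolmogorov's strong law applies under exactly the stated first-moment hypothesis; the proof is three lines. Your version keeps the empirical mixture $\eta_t^N$ and conditions on $\mathcal{G}_{t-1}$. Within that framing, your conditional-mean computation is correct, and so is the inductive argument that $c_{\eta_t^N}(\sdp)\to c_{\eta_t}(\sdp)$ almost surely via the test function $G$ (your remark that Definition~\ref{def: recursive measurability} is what keeps $G$ admissible at stage $t-1$ is exactly its purpose).

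The genuine gap is the fluctuation step. Once the proposal is empirical, the summands form a triangular array: the conditioning $\sigma$-field, the weights $\tilde w^{(i,t)} = \pi_0/\eta_t^N$, and the conditional law of each particle all change with $N$. There is no strong law for such arrays under first moments alone: Kolmogorov's SLLN needs a fixed sequence, and rowwise almost-sure convergence for triangular arrays is normally obtained from summable deviation probabilities via Borel--Cantelli, which requires variance or higher-moment control that the hypothesis $\ex_{h_t}|w_t\cdot\varphi|<\infty$ does not supply. The appeal to ``a truncation and Borel--Cantelli argument in the style of \citet{chopin2004}'' cannot close this, because the results there are central limit theorems and convergence-in-probability statements, not strong laws; there is no off-the-shelf conditional SLLN to borrow. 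To make your route rigorous you would need either to strengthen the moment assumption---e.g.\ the $2+\zeta$ conditional moments already assumed for Theorem~\ref{thm: clt}, which through a Rosenthal-type inequality give deviation bounds of order $N^{-1-\zeta/2}$, hence summable---or to weaken the conclusion to convergence in probability. Alternatively, adopt the paper's framing: under Proposition~\ref{prop: the accepted distributions} the particles are unconditionally i.i.d.\ from $h_t$, the triangular-array issue never arises, and the elementary SLLN argument suffices. Your instinct that the empirical proposal creates a real dependence problem is a fair criticism of that idealization, but the lemma as proved in the paper does not attempt to resolve it, and your attempt does not yet resolve it either.
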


\begin{proof}
    The argument can be applied coordinate-wise, so we prove the result for $d=1$. 
    For any $\varphi$ satisfying $\ex_{h_t} |w_t \cdot \varphi| < \infty$,
    $$
    \ex_{h_t(\theta, x)} [\Tilde{w}_t \varphi] =  \gamma_t \ex_t [\varphi].
    $$
   Now, let $Y_i = \Tilde{w}^{(i,t)} \varphi \left( (\theta, x)^{(i,t)} \right)$ for all $i=1,..,N$. Since $Y_1,\ldots,Y_N$ are i.i.d., the conclusion follows by the strong law of large numbers (See Section 2.4 in \citet{durrett2019probability}).
\end{proof}

We use Lemma~\ref{lemma: convergence of posterior mean} to prove the strong consistency of our estimator $\ex_t[\varphi]$.

\strongConsistency*

\begin{proof} 
     The argument can be applied coordinate-wise, so we prove the results for $d=1$. Let $f(x) = \frac{1}{x}$, which is continuous on $x>0$. Since $\frac{1}{N} \sum_{i=1}^N \Tilde{w}^{(i,t)} \overset{a.s.}{\to} \gamma_t$, 
    \begin{align*}
        f \left( \frac{1}{N} \sum_{i=1}^N \Tilde{w}^{(i,t)} \right) = \frac{1}{\frac{1}{N} \sum_{i=1}^N \Tilde{w}^{(i,t)}} \overset{a.s.}{\to} \frac{1}{\gamma_t}.
    \end{align*}
    by the continuous mapping theorem (Theorem 3.2.10 in \citet{durrett2019probability}). With $\eqref{eq: convergence of posterior mean}$, the conclusion follows by the continuous mapping theorem for a bivariate function.
\end{proof}

Another crucial result is the {central} limit theorem on our estimator $\ex_t[\varphi]$. It requires a higher-order integrability assumption.  

\clt*

\begin{proof}
    In addition to recursive measurability, we also assume the unit function $\theta_t \mapsto 1$ belongs to ${\mathcal{F}_T}^{2^\dag}$. This, for example, ensures $\frac{1}{N} \sum_{i=1}^N \Tilde{w}^{(i,t)} \overset{a.s.}{\to} \gamma_t$ and sets the ground for CLT.
    
    Define the following quantities for the asymptotic variances in the CLT. Let $\varphi: \Theta_0 \times \mathcal{X}^n \to \mathbb{R}^d$, and $\widetilde{V}_0(\varphi) = \var_{\pi_0}(\varphi)$, which is a typical asymptotic variance of a CLT where the sample mean converges to its population mean of $\pi_0$. Then, by induction, $\forall \varphi: \Theta_t \times \mathcal{X}^n \to \mathbb{R}^d$, let
    \begin{align}
        \widecheck{V}_t (\varphi) &\coloneqq \widetilde{V}_{t-1} [\ex_{K_t(\cdot, \cdot \mid A_t)}(\varphi)] + \ex_{\pi_{\epsilon_{t-1}}}[\var_{K_t (\cdot, \cdot \mid A_t)}(\varphi)], \quad &t > 0, \ (propagation) \label{eq: variance of reweighting} \\
        V_t (\varphi) &\coloneqq \widecheck{V}_t \left[ w_t \cdot \left( \varphi - \ex_{\pi_{\epsilon_{t}}} (\varphi) \right) \right], &t > 0, \ (reweighting)\\
        \widetilde{V}_t (\varphi) &\coloneqq V_t(\varphi) + U_t (\varphi), &t > 0. \ \ (resampling)
    \end{align}
    As stated by \citet{chopin2004}, the CLT proof of Theorem~\ref{thm: clt} is done by mathematical induction with three lemmas, each of which states a CLT for each step of the particle filtering. Essentially, the lemmas quantify the randomness incurred from doing resampling, propagating, and re-weighting. The inductive hypothesis is as follows, for any given $t>0$, we assume that $\forall \varphi \in {\mathcal{F}_{t-1}}^{2^\dag} (d)$,
    \begin{align} \label{the inductive hypothesis}
        \sqrt{N} \left[ \frac{1}{N} \sum_{j=1}^N \varphi \left( (\tilde{\theta},x)^{(j,t-1)} \right) - \ex_{\pi_{\epsilon_{t-1}}} (\varphi) \right] \overset{d}{\to} N(0, \widetilde{V}_{t-1}(\varphi)).
    \end{align}

\begin{lemma}[propagation: \citealp{chopin2004}] \label{lemma: propagation}
    Let $\psi: \Theta_t \to \mathbb{R}^d$, such that the function $\nu: \theta_{t-1} \mapsto \ex_{K_t(\theta_{t-1},\cdot \mid A_t)} [\psi(\cdot) - \ex_{h_t}(\psi)]$ lies in ${\mathcal{F}_{t-1}}^{2^\dag}$ and there exists $\zeta > 0$ such that $\ex_{h_t} | \psi |^{2+\zeta} < \infty$, then 
    \begin{align*}
        \sqrt{N} \left[ \frac{1}{N} \sum_{j=1}^N \psi( (\theta,x)^{(j,t)}) - \ex_{h_t} (\psi) \right] \overset{d}{\to} N(0, \widecheck{V}_{t-1}(\psi)).
    \end{align*}
    under the inductive hypothesis \eqref{the inductive hypothesis}.
\end{lemma}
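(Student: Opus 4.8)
The plan is to follow the decomposition strategy of \citet{chopin2004}, modified so that the forward transition is the acceptance-conditioned kernel $K_t(\tilde\theta, \cdot \mid A_t)$ rather than an ordinary Markov kernel. The central device is to split the fluctuation of the propagated particle average into a \emph{between-particle} component, controlled by the inductive hypothesis at iteration $t-1$, and a \emph{within-kernel} component, controlled by a conditional (triangular-array) central limit theorem.

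First I would introduce the conditional mean function $\nu(\tilde\theta) \coloneqq \ex_{K_t(\tilde\theta, \cdot \mid A_t)}[\psi]$ and record the identity $\ex_{h_t}(\psi) = \ex_{\pi_{\epsilon_{t-1}}}(\nu)$, which follows at once from the mixture representation $h_t(\cdot) = \int \pi_{\epsilon_{t-1}}(\tilde\theta)\, K_t(\tilde\theta, \cdot \mid A_t)\, d\tilde\theta$ established in Proposition~\ref{prop: the accepted distributions}. Writing $\tilde\theta^{(j)}$ for the resampled pre-propagation particles (which satisfy the inductive hypothesis \eqref{the inductive hypothesis}) and $(\theta,x)^{(j,t)}$ for their propagated, accepted images, I would decompose
\begin{align*}
    \sqrt{N}\left[\frac{1}{N}\sum_{j=1}^N \psi\big((\theta,x)^{(j,t)}\big) - \ex_{h_t}(\psi)\right]
    &= \frac{1}{\sqrt N}\sum_{j=1}^N \Big[\psi\big((\theta,x)^{(j,t)}\big) - \nu(\tilde\theta^{(j)})\Big] \\
    &\quad + \frac{1}{\sqrt N}\sum_{j=1}^N \Big[\nu(\tilde\theta^{(j)}) - \ex_{\pi_{\epsilon_{t-1}}}(\nu)\Big]
    \;=\; A_N + B_N.
\end{align*}

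For $B_N$, the hypothesis that $\nu \in {\mathcal{F}_{t-1}}^{2^\dag}$ lets me apply the inductive hypothesis \eqref{the inductive hypothesis} to the test function $\nu$, yielding $B_N \overset{d}{\to} N\big(0, \widetilde V_{t-1}(\nu)\big)$, the first summand of the claimed variance. For $A_N$, I would condition on the $\sigma$-field $\mathcal{G}_N$ generated by the pre-propagation particles: given $\mathcal{G}_N$, the summands are independent and centered, with conditional variances $\var_{K_t(\tilde\theta^{(j)}, \cdot \mid A_t)}(\psi)$ whose empirical average converges almost surely to $\ex_{\pi_{\epsilon_{t-1}}}[\var_{K_t(\cdot, \cdot \mid A_t)}(\psi)]$ by the consistency already established (Lemma~\ref{lemma: convergence of posterior mean} and Theorem~\ref{thm: posterior mean convergence}). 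The moment bound $\ex_{h_t}|\psi|^{2+\zeta} < \infty$ then supplies the conditional Lyapunov condition, so a conditional CLT gives $A_N \overset{d}{\to} N\big(0, \ex_{\pi_{\epsilon_{t-1}}}[\var_{K_t(\cdot, \cdot \mid A_t)}(\psi)]\big)$. Since $A_N$ is conditionally centered given $\mathcal{G}_N$ while $B_N$ is $\mathcal{G}_N$-measurable, a characteristic-function argument establishes asymptotic independence, so $A_N + B_N$ is asymptotically normal with variance equal to the sum of the two pieces, which is precisely the propagation variance $\widecheck V_{t-1}(\psi)$ of the form \eqref{eq: variance of reweighting}.

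The step I expect to be the main obstacle is the rigorous control of the within-kernel term $A_N$ under the acceptance-conditioned kernel. Because the event $A_t$ depends both on the simulated data $x \sim f(\cdot \mid \theta)$ and on the privacy mechanism through the acceptance ratio $r_t$, I must verify that the conditional variances $\var_{K_t(\cdot, \cdot \mid A_t)}(\psi)$ are finite and that the triangular-array Lyapunov condition holds uniformly despite the randomness of the resampled ancestors; this is exactly where the $2+\zeta$ integrability of $\psi$ under $h_t$ and the recursive-measurability membership $\nu \in {\mathcal{F}_{t-1}}^{2^\dag}$ do the essential work. Assembling the two marginal limits into a single joint statement, rather than merely adding variances heuristically, also demands a careful conditioning argument, and is the point most easily glossed over.
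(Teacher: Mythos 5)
Your proof is correct and takes essentially the same route as the paper, which simply invokes \citet{chopin2004} for this lemma and notes that the only modification is replacing the ordinary propagation kernel by the acceptance-conditioned kernel $K_t(\cdot,\cdot \mid A_t)$; your $A_N + B_N$ decomposition --- the inductive hypothesis applied to $\nu \in {\mathcal{F}_{t-1}}^{2^\dag}$ for the between-particle term and a conditional Lyapunov CLT (using $\ex_{h_t}|\psi|^{2+\zeta}<\infty$) for the conditionally centered within-kernel term, combined via asymptotic independence --- is exactly the cited argument. The limiting variance you obtain, $\widetilde{V}_{t-1}\left[\ex_{K_t(\cdot,\cdot\mid A_t)}(\psi)\right]+\ex_{\pi_{\epsilon_{t-1}}}\left[\var_{K_t(\cdot,\cdot\mid A_t)}(\psi)\right]$, matches the recursive definition in \eqref{eq: variance of reweighting} (the subscript $t-1$ on $\widecheck{V}$ in the lemma statement appears to be a typo for $t$).
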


Lemma~\ref{lemma: propagation} quantifies the MCSE from propagation. The asymptotic variance accounts for randomness from the previous target posterior $\pi_{\epsilon_{t-1}}$ and the propagation process. To adapt the proof to our DP-PF algorithm, we redefine the function $\nu$ slightly differently from its original proof, where $\nu$ represents a $(2+\delta)$-th central moment of $\psi$ with respect to its perturbation kernel. In our DP setting, after perturbation, particles are selected based on the success event $A_t = \{ \pi_0 (\cdot) > 0, r_t(x) > U \}$. Consequently, the propagation distribution becomes $K_t(\theta_{t-1},\cdot \mid A_t)$, incorporating both the kernel perturbation and particle selection to target the posteriors.

\begin{lemma}[reweighting: \citealp{chopin2004}] \label{lemma: reweighting}
    Let $\varphi \in {\mathcal{F}_{t}}^{2^\dag} (d)$ and the function $\theta_t \mapsto 1$ lies in ${\mathcal{F}_{t}}^{2^\dag}$, then
    \begin{align*}
        \sqrt{N} \left( \hat{\ex}_t[\varphi] - \ex_t[\varphi] \right) \overset{d}{\to} N(0, V_t(\varphi)),
    \end{align*}
    under the inductive hypothesis \eqref{the inductive hypothesis}.
\end{lemma}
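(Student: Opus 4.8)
The plan is to recognize Lemma~\ref{lemma: reweighting} as the self-normalized importance sampling step that converts the propagation CLT (Lemma~\ref{lemma: propagation}) into a central limit theorem for the normalized estimator $\hat{\ex}_t[\varphi]$, the main tool being a ratio (Slutsky) argument. Since the normalized weights $\mathbf{w}^{(i,t)}$ are invariant to the proportionality constant $\gamma_t$ (Remark~\ref{remark: importance weight up to a constant}), I would work directly with the ideal weight $w_t = \pi_{\epsilon_t}(\theta,x\mid\sdp)/h_t(\theta,x)$, for which the importance sampling identity \eqref{eq: importance sampling dp} gives $\ex_{h_t}[w_t\varphi] = \ex_t[\varphi]$ and, taking $\varphi\equiv 1$, $\ex_{h_t}[w_t]=1$.

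First I would write the estimator in self-normalized, centered form:
\begin{align*}
    \sqrt{N}\left(\hat{\ex}_t[\varphi] - \ex_t[\varphi]\right)
    = \frac{\sqrt{N}\,\frac{1}{N}\sum_{i=1}^N w^{(i,t)}\left[\varphi\left((\theta,x)^{(i,t)}\right) - \ex_t[\varphi]\right]}{\frac{1}{N}\sum_{i=1}^N w^{(i,t)}}.
\end{align*}
The numerator is $\sqrt{N}$ times the sample average of $\psi \coloneqq w_t\left(\varphi - \ex_t[\varphi]\right)$ evaluated at the propagated particles, and by the two identities above $\ex_{h_t}[\psi] = \ex_t[\varphi] - \ex_t[\varphi]\cdot 1 = 0$, so $\psi$ is centered under the proposal $h_t$.

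Next I would apply Lemma~\ref{lemma: propagation} to $\psi$, which requires verifying its hypotheses: that $\ex_{h_t}|\psi|^{2+\zeta} < \infty$ for some $\zeta > 0$, and that the map $\nu:\theta_{t-1}\mapsto\ex_{K_t(\theta_{t-1},\cdot\mid A_t)}[\psi - \ex_{h_t}(\psi)]$ lies in $\mathcal{F}_{t-1}^{2^\dag}$. The moment bound follows from the assumption $\ex_{h_t}|w_t\cdot\varphi|^{2+\zeta}<\infty$ together with the unit function lying in $\mathcal{F}_t^{2^\dag}$ (so that $\ex_{h_t}|w_t|^{2+\zeta}<\infty$), by Minkowski's inequality since $\ex_t[\varphi]$ is a finite constant; the measurability of $\nu$ is exactly the recursive condition encoded in $\varphi\in\mathcal{F}_t^{2^\dag}(d)$ via Definition~\ref{def: recursive measurability}. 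Under the inductive hypothesis~\eqref{the inductive hypothesis}, Lemma~\ref{lemma: propagation} then yields
\begin{align*}
    \sqrt{N}\,\frac{1}{N}\sum_{i=1}^N\psi\left((\theta,x)^{(i,t)}\right) \overset{d}{\to} N\left(0,\widecheck{V}_t(\psi)\right) = N\left(0, V_t(\varphi)\right),
\end{align*}
since $V_t(\varphi)=\widecheck{V}_t[w_t(\varphi-\ex_{\pi_{\epsilon_t}}(\varphi))]$ by \eqref{eq: variance of reweighting}. For the denominator, taking $\varphi\equiv 1$ in Lemma~\ref{lemma: convergence of posterior mean} gives $\frac{1}{N}\sum_{i=1}^N w^{(i,t)}\overset{a.s.}{\to}\ex_{h_t}[w_t]=1$, and Slutsky's theorem combines the two factors to produce the claimed $N(0,V_t(\varphi))$ limit; in the vector case I would reduce to this scalar statement by the Cram\'er--Wold device.

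The step I expect to be the main obstacle is confirming that the centered, weight-multiplied function $\psi$ genuinely remains in the recursive class $\mathcal{F}_t^{2^\dag}$ so that Lemma~\ref{lemma: propagation} applies inside the filter recursion: $\psi$ must inherit both the $(2+\zeta)$-integrability and the one-step measurability condition on $\nu$, and one must check that conditioning on the acceptance event $A_t$ (which replaces the plain kernel $K_t$ by $K_t(\cdot,\cdot\mid A_t)$ in the DP setting) does not break these properties. This is precisely where our adaptation of \citet{chopin2004} departs from the standard SMC argument, and it is handled by the bookkeeping in Definition~\ref{def: recursive measurability}. The remaining ratio manipulation and the cancellation of the intractable constant $\gamma_t$ are routine.
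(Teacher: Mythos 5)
Your proposal is correct and follows exactly the intended argument: the paper defers this lemma to \citet{chopin2004}, whose proof is precisely the ratio decomposition you give---apply the propagation CLT to the centered function $\psi = w_t(\varphi - \ex_t[\varphi])$ (which is mean-zero under $h_t$ since $\ex_{h_t}[w_t\varphi]=\ex_t[\varphi]$ and $\ex_{h_t}[w_t]=1$), use the assumption that the unit function lies in ${\mathcal{F}_t}^{2^\dag}$ to control the denominator, and conclude by Slutsky and Cram\'er--Wold. Your identification of the membership of $\psi$ in the recursive class (with the acceptance-conditioned kernel $K_t(\cdot,\cdot\mid A_t)$) as the only point requiring care in the DP adaptation matches the paper's own remarks.
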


Lemma~\ref{lemma: reweighting} quantifies the MCSE from reweighting. From \eqref{eq: variance of reweighting}, we can see that the asymptotic variance of reweighting is a rescale from that of propagation. At $t=T$, Algorithm~\ref{alg: DP-PF} stops at this step.

\begin{lemma}[resampling: \citealp{chopin2004}] \label{lemma: resampling}
    Let $\varphi \in {\mathcal{F}_{t}}^{2^\dag} (d)$ and the function $\theta_t \mapsto 1$ lies in ${\mathcal{F}_{t}}^{2^\dag}$, then
    \begin{align*}
        \sqrt{N} \left[ \frac{1}{N} \sum_{j=1}^N \varphi \left( (\tilde{\theta},x)^{(j,t)} \right) - \ex_{\pi_{\epsilon_{t}}} (\varphi) \right] \overset{d}{\to} N(0, \widetilde{V}_{t}(\varphi)),
    \end{align*}
    under the inductive hypothesis \eqref{the inductive hypothesis}. 
\end{lemma}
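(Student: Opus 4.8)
The plan is to follow the structure of Chopin's resampling lemma, decomposing the fluctuation of the resampled empirical mean into a ``resampling noise'' term and the reweighting fluctuation already controlled by Lemma~\ref{lemma: reweighting}. Writing $\hat{\ex}_t[\varphi] = \sum_{i=1}^N \mathbf{w}^{(i,t)}\varphi((\theta,x)^{(i,t)})$ for the weighted estimate at the end of iteration $t$, I would first split
\begin{align*}
\frac{1}{N}\sum_{j=1}^N \varphi\left((\tilde{\theta},x)^{(j,t)}\right) - \ex_{\pi_{\epsilon_t}}(\varphi) = \underbrace{\left[\frac{1}{N}\sum_{j=1}^N \varphi\left((\tilde{\theta},x)^{(j,t)}\right) - \hat{\ex}_t[\varphi]\right]}_{(\mathrm{I})} + \underbrace{\left[\hat{\ex}_t[\varphi] - \ex_{\pi_{\epsilon_t}}(\varphi)\right]}_{(\mathrm{II})}.
\end{align*}
By Lemma~\ref{lemma: reweighting}, $\sqrt{N}\,(\mathrm{II}) \overset{d}{\to} N(0,V_t(\varphi))$, so it remains to handle the resampling noise $(\mathrm{I})$ and then to combine the two pieces.

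Next I would condition on the $\sigma$-field $\mathcal{G}_t$ generated by the weighted particle system $\{((\theta,x)^{(i,t)},\mathbf{w}^{(i,t)})\}_{i=1}^N$ at the end of iteration $t$. Under multinomial resampling, conditionally on $\mathcal{G}_t$ the resampled values $\varphi((\tilde{\theta},x)^{(j,t)})$ are i.i.d.\ draws from the weighted empirical distribution, with conditional mean $\hat{\ex}_t[\varphi]$ and conditional variance
\begin{align*}
\widehat{\sigma}_t^2 \coloneqq \sum_{i=1}^N \mathbf{w}^{(i,t)}\left[\varphi\left((\theta,x)^{(i,t)}\right) - \hat{\ex}_t[\varphi]\right]^2.
\end{align*}
Applying the strong consistency of Theorem~\ref{thm: posterior mean convergence} to $\varphi$ and to $\varphi^2$ gives $\widehat{\sigma}_t^2 \overset{a.s.}{\to} \var_{\pi_{\epsilon_t}}(\varphi)$. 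A conditional (triangular-array) central limit theorem then yields $\sqrt{N}\,(\mathrm{I}) \overset{d}{\to} N(0,\var_{\pi_{\epsilon_t}}(\varphi))$; the Lyapunov condition for the multinomial draws is precisely where the $2+\zeta$ integrability of $\varphi \in \mathcal{F}_t^{2^\dag}$ enters, since it guarantees that the empirical $(2+\zeta)$-th absolute moment stays bounded and hence that the conditional Lyapunov ratio vanishes almost surely.

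Finally I would combine $(\mathrm{I})$ and $(\mathrm{II})$. The crucial point is that, conditionally on $\mathcal{G}_t$, the resampling noise $(\mathrm{I})$ is independent of the $\mathcal{G}_t$-measurable term $(\mathrm{II})$, and its conditional limit law $N(0,\var_{\pi_{\epsilon_t}}(\varphi))$ does not depend on $\mathcal{G}_t$ in the limit. Working with characteristic functions---evaluating $\ex[e^{\mathbf{i} s \sqrt{N}\,(\mathrm{I})}\mid \mathcal{G}_t]$ and using dominated convergence together with Lemma~\ref{lemma: reweighting}---shows that $\sqrt{N}\,(\mathrm{I})$ and $\sqrt{N}\,(\mathrm{II})$ are asymptotically independent Gaussians, so their sum is Gaussian with variance $V_t(\varphi) + \var_{\pi_{\epsilon_t}}(\varphi)$, which we identify as $\widetilde{V}_t(\varphi)$ by taking $U_t(\varphi) = \var_{\pi_{\epsilon_t}}(\varphi)$. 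I expect the main obstacle to be making this last step fully rigorous: the conditional CLT for $(\mathrm{I})$ must be upgraded to \emph{joint} convergence with $(\mathrm{II})$, which requires verifying that the conditional Lyapunov condition holds almost surely (rather than merely in probability) and that the characteristic-function argument passes cleanly to the unconditional limit.
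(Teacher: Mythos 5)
Your argument is correct and is essentially the proof the paper relies on: the paper establishes Lemma~\ref{lemma: resampling} only by deferring to \citet{chopin2004}, whose argument is exactly your decomposition into the $\mathcal{G}_t$-measurable reweighting fluctuation (handled by Lemma~\ref{lemma: reweighting}) plus conditionally i.i.d.\ multinomial resampling noise, combined through the conditional characteristic-function device to get asymptotic independence and additivity of the variances. Your identification $U_t(\varphi)=\var_{\pi_{\epsilon_t}}(\varphi)$ also matches the paper's own description of $\widetilde{V}_t$ as the reweighting variance plus the variance of $\varphi$ under $\pi_{\epsilon_t}$.
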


Lemma~\ref{lemma: resampling} quantifies the MCSE from multinomial resampling. The asymptotic variance is the sum of \eqref{eq: variance of reweighting} and the variance of $\varphi$ with respect to the posterior $\pi_{\epsilon_t}$. This also indicates that, at $t=T$, the algorithm should stop at the reweighting step to avoid additional randomness from resampling. Nevertheless, we need Lemma~\ref{lemma: reweighting} to complete the mathematical induction for $t$ from the inductive hypothesis at $t-1$.
\end{proof}

\essEstimate*

\begin{proof} 
{By the moment assumptions, we have the following two weak convergence results:} $\frac{1}{N} \sum_{i=1}^N w_i \overset{p}{\to} \ex_h (w(X)) = 1$ and $\frac{1}{N} \sum_{i=1}^N w_i^2 \overset{p}{\to} \ex_h (w^2(X)) = 1 + \var_h(w(X))$. Then, by the continuous mapping theorem,
\begin{align*}
    \widehat{ESS_N}/N = \frac{\left( \frac{1}{N} \sum_{i=1}^N w_i \right)^2}{\frac{1}{N}\sum_{i=1}^N w_i^2} \overset{p}{\to} \frac{1}{1+\var_h(w(X))}.
\end{align*}
\end{proof}

The ESS estimate is set up for the proof of Theorem~\ref{thm: CI}.

\CI*

\begin{proof} 
    Let $\sigma_t^2$ be the variance of $\varphi$ from the i.i.d. samples \((\theta, x)^{(i,t)}\) of the intermediate private posterior \(\pi_{\epsilon_t}\) and hence $\var_{\pi_{\epsilon_t}} \left[ \frac{1}{N} \sum_{i=1}^N \varphi(\theta, x)^{(i,t)} \right] = \frac{\sigma_t^2}{N}$. Then,
\begin{align*}
    &\sqrt{N}\begin{pmatrix}
    \begin{bmatrix} 
    \frac{1}{N} \sum_{i=1}^N \Tilde{w}^{(i,t)}  \varphi \left( (\theta, x)^{(i,t)} \right) \\ 
    \frac{1}{N} \sum_{i=1}^N \Tilde{w}^{(i,t)}   
    \end{bmatrix} 
    -
    \begin{bmatrix} 
    \gamma_t \ex_t [\varphi] \\ 
    \gamma_t 
    \end{bmatrix}
    \end{pmatrix} \\
    &\xrightarrow{d} \mathcal{N} \left( 0_{2}, \Sigma_t = 
    \begin{pmatrix} 
        \var_{h_t} (\Tilde{w}_t \varphi) & \cov_{h_t} (\Tilde{w}_t \varphi, \Tilde{w}_t) \\ 
        \cov_{h_t} (\Tilde{w}_t \varphi, \Tilde{w}_t) & \var_{h_t} (\Tilde{w}_t) 
    \end{pmatrix} 
    \right).
\end{align*}
By first-order Delta method for $f_1(x,y) = \frac{x}{y}$ with gradient $\nabla f_1(x,y) = (\frac{1}{y}, \frac{-x}{y^2})$, we have
\begin{align*}
    \sqrt{N} \left( \hat{\ex}_t [\varphi] - \ex_t [\varphi] \right) 
    \xrightarrow{d} \mathcal{N} \left( 0, \left( \frac{1}{\gamma_t}, \frac{-1}{\gamma_t} \ex_t [\varphi] \right) \Sigma_t \left( \frac{1}{\gamma_t}, \frac{-1}{\gamma_t} \ex_t [\varphi] \right)^{\top} \right).
\end{align*}
and
\begin{align}
    &V_t = \left( \frac{1}{\gamma_t}, \frac{-1}{\gamma_t} \ex_t [\varphi] \right) \Sigma_t \left( \frac{1}{\gamma_t}, \frac{-1}{\gamma_t} \ex_t [\varphi] \right)^{\top} \nonumber \\
    =& \var_{h_t} (w_t \varphi) - 2 \ex_t [\varphi] \cov_{h_t} (w_t \varphi, w_t) + \left( \ex_t [\varphi] \right)^2 \var_{h_t} (w_t) \nonumber \\
    =&\left[ \ex_{h_t} \left( w_t^2 \varphi^2 \right) - \left( \ex_{h_t} (w_t \varphi) \right)^2 \right]
    - 2 \ex_t [\varphi] \left[ \ex_{h_t} \left( w_t^2 \varphi \right) - \ex_{h_t} \left( w_t \varphi \right) \ex_{h_t} \left( w_t \right) \right] + \left( \ex_t [\varphi] \right)^2 \var_{h_t} (w_t) \nonumber \\
    =&\left[ \ex_t \left(w_t \varphi^2 \right) - \left( \ex_t [\varphi] \right)^2 \right] 
    - 2 \ex_t [\varphi] \left[ \cov_t (w_t, \varphi) + \ex_t (w_t) \ex_t [\varphi] - \ex_t [\varphi] \right] + \left( \ex_t [\varphi] \right)^2 \var_{h_t} (w_t) \label{eq: w_t = pi_t / h_t} \\
    =&\left[ \{\ex_t (w_t) \left(\ex_t [\varphi]\right)^2 + 2 \cov_t(w_t, \varphi) \ex_t [\varphi] + \var_t (\varphi) \ex_t (w_t) + r \} - \left( \ex_t [\varphi] \right)^2 \right] \nonumber \\
    & - 2 \ex_t [\varphi] \left[ \cov_t (w_t, \varphi) + \ex_t (w_t) \ex_t [\varphi] - \ex_t [\varphi] \right] + \left( \ex_t [\varphi] \right)^2 \var_{h_t} (w_t) \label{eq: second-order Taylor's approximation} \\
    =& \left( \ex_t [\varphi] \right)^2 \left[ 1 + \var_{h_t} (w_t) -\ex_t [\varphi] \right] + \ex_t (w_t) \var_t (\varphi) + r \nonumber \\
    =& \left( 1 + \var_{h_t} (w_t) \right) \var_t (\varphi) + r \label{eq: N/ESS} \\
    =& \frac{N}{ESS_N} \sigma_t^2 + r \nonumber,
\end{align}
where $\ex_t(\cdot)$ and $\cov_t(\cdot, \cdot)$ are integral with respect to $\pi_{\epsilon_t} (\theta, x\mid \sdp)$.

For \eqref{eq: w_t = pi_t / h_t}, we apply the fact $w_t = \frac{\pi_{\epsilon_t}}{h_t}$. For \eqref{eq: second-order Taylor's approximation}, we consider the second-order Taylor's approximation for $f_2(x,y) = x y^2$ with Hessian $H_{f_2}(x,y) = \begin{pmatrix} 0 \ 2y \\ 2y \ 2x \end{pmatrix}$ such that 
\begin{align*}
    \ex (f_2 (T)) = \ex (f_2 (\ex T)) + \frac{1}{2} \text{tr}(\cov(T) H_{f_2}(\ex T)) + r,
\end{align*}
where $r = \ex_t \left[ \left( w_t - \ex_t (w_t) \right) \left( \varphi - \ex_t [\varphi] \right)^2 \right]$ is the remainder for the non-zero third order term. For \eqref{eq: N/ESS}, we note that fact that $\ex_t (w_t^2) = 1 + \var_{h_t} (w_t)$.

Since $V_t = \frac{\sigma_t^2}{ESS_N/N} + r$ and $r = \ex_t \left[ \left( w_t - \ex_t (w_t) \right) \left( \varphi - \ex_t [\varphi] \right)^2 \right]$, a consistent estimate $\hat{V}_t$ is
\begin{align*} 
    N \frac{\sum_{i=1}^N \textbf{w}^{(i,t)} \left[ \varphi \left( (\theta, x)^{(i,t)} \right) - \hat{\ex}_t [\varphi] \right]^2}{\widehat{ESS_N}} + \sum_{i=1}^N \textbf{w}^{(i,t)} \left[\textbf{w}^{(i,t)} -\frac{1}{N} \right] \left[ \varphi \left( (\theta, x)^{(i,t)} \right) - \hat{\ex}_t [\varphi] \right]^2.
\end{align*}

By Slutsky's theorem, we obtain the following CLT:
\begin{align}
    \sqrt{N}\left(\frac{\hat{\ex}_t [\varphi] - \ex_t [\varphi]}{\sqrt{\hat{V}_t}} \right) \xrightarrow{d} \mathcal{N} \left( 0, 1 \right),
\end{align}
from which we can build a asymptotic $(1-\alpha)$ confidence interval for $\ex_t[\varphi]$ as
    \begin{align*}
        \hat{\ex}_t[\varphi] \pm z_{1-\frac{\alpha}{2}}\sqrt{\frac{\hat{V}_t}{N}},
    \end{align*}
    where $z_{1-\frac{\alpha}{2}}$ is the $\left(1-\frac{\alpha}{2}\right)$ quantile of a standard normal distribution.
\end{proof}

\bibliography{references} 

\end{document}